\newcommand{\overbar}[1]{\mkern 2.5mu\overline{\mkern-2.5mu#1\mkern-2.5mu}\mkern 2.5mu}
\newtheorem{theorem}{Theorem}
\newtheorem{definition}{Definition}
\newtheorem{lemma}{Lemma}
\newtheorem{example}{Example}
\newtheorem{assumption}{Assumption}
\newtheorem{corollary}{Corollary}
\begin{document}

\title{Consistent and Asymptotically Efficient Localization from Range-Difference Measurements}
\author{Guangyang Zeng, Biqiang Mu, Ling Shi, Jiming Chen, and Junfeng Wu
\thanks{G. Zeng is with the School of Data Science, Chinese University of Hong Kong, Shenzhen, Shenzhen, P. R. China, and also with the Department of Automation, University of Science and Technology of China, Hefei, P. R. China.
		{\tt\small zengguangyang@cuhk.edu.cn}.}
	\thanks{B. Mu is with Key Laboratory of Systems and Control, Institute of Systems Science, Academy of Mathematics and Systems Science, Chinese Academy of Sciences, Beijing 100190, China.
		{\tt\small bqmu@amss.ac.cn}.}
	\thanks{L. Shi is with the Department of Electronic
and Computer Engineering, Hong Kong University of Science and
Technology, Hong Kong.
		{\tt\small eesling@ust.hk}.}
\thanks{J. Chen is with the College of Control Science and Engineering and the State Key Laboratory of Industrial Control Technology, Zhejiang University, Hangzhou 310027, P. R. China.
		{\tt\small cjm@zju.edu.cn}.}
  \thanks{J. Wu is with the School of Data Science, Chinese University of Hong Kong, Shenzhen, Shenzhen, P. R. China.
		{\tt\small junfengwu@cuhk.edu.cn}.}
}

\maketitle

\begin{abstract} 
	We consider signal source localization from range-difference measurements. First, we give some readily-checked conditions on measurement noises and sensor deployment to guarantee the asymptotic identifiability of the model and show the consistency and asymptotic normality of the maximum likelihood (ML) estimator.
	Then, we devise an estimator that owns the same asymptotic property as the ML one. Specifically, we prove that the negative log-likelihood function converges to a function, which has a unique minimum and positive definite Hessian at the true source's position. Hence, it is promising to execute local iterations, e.g., the Gauss-Newton (GN) algorithm, following a consistent estimate. The main issue involved is obtaining a preliminary consistent estimate. To this aim, we construct a linear least-squares problem via algebraic operation and constraint relaxation and obtain a closed-form solution. We then focus on deriving and eliminating the bias of the linear least-squares estimator, which yields an asymptotically unbiased (thus consistent) estimate. Noting that the bias is a function of the noise variance, we further devise a consistent noise variance estimator that involves $3$-order polynomial rooting. Based on the preliminary consistent location estimate, a one-step GN iteration suffices to achieve the same asymptotic property as the ML estimator. Simulation results demonstrate the superiority of our proposed algorithm in the large sample case.
\end{abstract}


%
\IEEEpeerreviewmaketitle

\section{Introduction} \label{section_introduction}
Signal source localization refers to calculating a source's spatial coordinates with respect to a specific coordinate system by using some sensors' measurements. It serves as a fundamental technology in extensive location-aware applications, ranging from navigation systems~\cite{wang2017network}, battlefield monitoring~\cite{meng2020multiple} to social networks~\cite{yang2020lbsn2vec++} and ads recommendation~\cite{huang2018location}. Range difference, usually calculated from the time difference of arrival (TDOA), is a widely used measurement for source localization, which can achieve high localization accuracy~\cite{bishop2010optimality,sun2018solution,zeng2022localizability,xiong2021cooperative,shen2010fundamental,han2015performance}. 
It does not require synchronization between the source and sensors and thus can be used in asynchronized or non-cooperative scenarios~\cite{zekavat2011handbook}.

TDOA is tightly related to the direction of arrival (DOA), and TDOA localization (considered in this paper) should be distinguished from DOA estimation. DOA estimation refers to inferring the direction of a source, which is a widely studied topic. It generally assumes a far-field model where the source is far from the sensor array and the incident waves are parallel to each other~\cite{zheng2023coarray}. Then, DOA can be inferred from phase difference (i.e., TDOA) measurements. For example, Zhou \emph{et al.}~\cite{zhou2023structured} proposed a novel sparse array DOA estimation algorithm via structured correlation reconstruction, which can guarantee general applicability and a more flexible constraint on the array configuration. Zheng \emph{et al.}~\cite{zheng2023coarray2} proposed a coarray tensor DOA estimation algorithm for multi-dimensional structured sparse arrays and investigated an optimal coarray tensor structure for source identifiability enhancement. However, when TDOA measurements are utilized in source localization where both the direction and distance of a source need to be estimated, a near-field model is required. The location of the source is obtained by finding the intersection of several hyperbolas or hyperboloids defined by TDOA measurements. If the source is far from the sensor array, the estimation of the distance is not reliable~\cite{sun2018solution}. 

Maximum likelihood (ML) and least squares (LS) are the most common criteria for problem formulation in parameter inference. When measurement noises are i.i.d. Gaussian random variables, the LS criterion is equivalent to the ML one. Due to the non-linear property of range-difference measurements, the resulting ML and LS problems are non-convex, whose global minimizer is difficult to obtain. When utilizing iterative local search methods, an appropriate initial value is needed, otherwise, it may converge to local minima~\cite{torrieri1984statistical,mensing2006positioning}. Most works transformed the original ML and LS problems into some solvable problems via various methods, e.g., linear approximation~\cite{shi2020acoustic,chang2018surveillance}, semidefinite programming (SDP)~\cite{lui2008semidefinite,xu2010reduced,zou2018semidefinite,wang2019convex,wang2011importance}, and spherical model~\cite{beck2008exact,zeng2022localizability,amiri2018efficient,qu2017iterative}. It is noteworthy that since the transformed problems are generally not equivalent to the original, although they can be optimally solved, their global minimizer does not necessarily coincide with the original one. To the best of our knowledge, there is no algorithm that can theoretically guarantee obtaining the global minimizer of the ML and LS problems for TDOA localization.  

The ML estimator is optimal in the statistical sense that under some regularity conditions, it is consistent and asymptotically normal\footnote{Unless otherwise specified, ``asymptotically'' and ``asymptotic'' mean that the number of measurements goes to infinity.}. We note that testing these general regularity conditions in range-difference-based localization settings is nontrivial. In this paper, we provide some readily-checked conditions to guarantee the consistency and asymptotic normality of the ML estimator. 
Moreover, we claim that although the ML problem cannot be directly solved, we can devise an estimator that owns the same asymptotic property as the ML one. 
{We prove that the negative log-likelihood function uniformly converges to a function with the true source's location being its unique minimizer. In addition, we show that the function is convex in a neighborhood (called the attraction region) around the minimizer. Therefore, before using local iterative methods to obtain a precise solution, the key is to obtain a consistent estimate that will fall into the attraction region as measurements increase. }
To this aim, we construct a linear least-squares estimator and analyze its asymptotic bias for bias elimination. 
There are some works that derive biases and give bias-reduced solutions. Ho~\cite{ho2012bias} proposed two methods to reduce the bias of the well-known algebraic explicit solution known as CFS~\cite{chan1994simple}. Wang et al.~\cite{wang2015bias} derived the bias for a non-linear weighted least-squares (WLS) estimator using the Karush–Kuhn–Tucker (KKT) optimality conditions and first-order Taylor-series expansion. Chen and Ho~\cite{chen2013achieving} analyzed the asymptotic bias for the squared range-difference formulation where the ``asymptotic'' here means that the noise intensity is sufficiently small. Zhang et al.~\cite{zhang2021efficient} approximated the bias from the ML estimation and developed a bias-reduced iterative constrained WLS algorithm. 
We remark that the existing bias-reduced methods mostly analyzed the bias at most up to the second-order statistics of the measurement noises under the assumption that the noise intensity is small. As a result, the bias-reduced solutions are usually inaccurate in the large noise region---they may even be worse than the estimates before bias reduction. 

In this paper, we propose an exact asymptotic bias elimination method for the large sample case with no restriction on noise intensity. Specifically, we square the original measurement model and formulate an ordinary LS problem via constraint relaxation. However, since the transformed noise term does not have zero mean, and the regressor and regressand are correlated, the closed-form solution is biased. Then, with the prior knowledge of noise variance or a consistent estimate of it, we make the modified noise term have zero mean and eliminate the asymptotic correlation between the regressor and regressand. In virtue of the above bias elimination techniques, the resulting solution is asymptotically unbiased and thus is consistent. Further, taking the preliminary consistent estimate as the initial value, Gauss-Newton (GN) iterations are used to search for the ML solution. 
The resulting estimator is asymptotically efficient, i.e., its mean square error (MSE) asymptotically reaches the theoretical lower bound --- Cramer-Rao lower bound (CRLB). This appealing property makes it achieve highly accurate estimation when the measurement number is large. Note that some of the existing
localization systems have a high speed of measurements, e.g., the sampling rate of practical Ultra-wideband (UWB) systems can reach 2.3 kHz~\cite{grossiwindhager2019snaploc}. Therefore, the proposed asymptotically efficient estimators can play a valuable role in these systems, especially when the source is static, and a large sample of measurements can be utilized.

We note that the asymptotically efficient localization based on range measurements has been investigated in~\cite{zeng2022globally}. The gap between range-difference measurement and range measurement is nontrivial. The range model is a single norm function, while the range-difference model has the form of the subtraction of two norm functions. As a result, model transformation in TDOA localization leads to the correlation between the regressor and regressand, which makes bias elimination more complicated. While no such problem arises in range-based localization. 
We also note that Wang \emph{et al.}~\cite{wang2017solving} investigated the phase retrieval problem where an unknown vector is recovered from a system of quadratic equations. The authors proposed a novel algorithm, termed truncated amplitude flow, that adapts the amplitude-based empirical loss function and proceeds in two stages. It was proved that as the number of quadratic equations increases, the estimate converges to the true vector (up to a sign) with high probability, and the computational complexity grows linearly. This is similar to the consistent property of our algorithm. The main difference between the work and ours lies in the objective function. The objective in~\cite{wang2017solving} contains a modulus of the inner product of a random vector and the unknown vector, while that in our problem includes the subtraction of two norm functions. This ultimately leads to the fact that the sign of the unknown vector cannot be recovered in the phase retrieval problem while being identifiable in the TDOA localization problem. In addition, it also gives rise to nontrivial differences between these two problems in terms of algorithm development and analysis. 

In summary, the main contributions of this paper are listed as follows:

\begin{enumerate}
	\item  [$(i).$] We give some conditions on measurement noises and sensor deployment to ensure the asymptotic identifiability of the model and prove the consistency and asymptotic normality of the ML estimator. These conditions are specifications of regularity conditions in the context of TDOA-based localization, which can be readily checked. 
	\item  [$(ii).$] We associate noise variance estimation with a maximum-eigenvalue-related equation rooting problem and prove that the smallest root is a consistent estimate of the measurement noise variance. Moreover, the problem is converted into a 3-order polynomial rooting problem which can be efficiently solved. 
	\item  [$(iii).$] We propose a closed-form consistent localization method via precise (in the asymptotic sense) bias elimination, where a noise variance estimate is utilized. Based on the preliminary consistent solution, a one-step GN iteration is sufficient to achieve the same asymptotic property as the ML estimator.  
\end{enumerate}

{\bf Notations:} We use bold lowercase letters to denote vectors, e.g., ${\bf x}$, $\bf y$, $\bf z$, and bold uppercase letters for matrices, e.g., ${\bf X}$, $\bf Y$, $\bf Z$. For a vector ${\bf x}$, $[{\bf x}]_i$ presents its $i$-th element, and $\|{\bf x}\|$ denotes its $2$-norm. For a matrix ${\bf X}$, $[{\bf X}]_{ij}$ presents its element that locates at the $i$-th row and $j$-th column. The identity matrix of size $n$ is represented as ${\bf I}_n$. The all-zeros matrix of size $n \times m$ is denoted as ${\bf 0}_{n \times m}$. For two vectors ${\bf x},{\bf y} \in \mathbb R^n$, ${\bf x} \preceq {\bf y}$ denotes the pointwise inequality. Let $p=(p_i)_{i\in\mathbb N}$ and $q=(q_i)_{i\in\mathbb N}$ be two sequences of real numbers. When ${i\in\mathbb N}$ is clear from the context, we will omit the subscript and write $(p_i)$ as a shorthand of  $(p_i)_{i\in\mathbb N}$. If $t^{-1} \sum_{i=1}^{t} p_i q_i$ converges to a real number its limit $\left\langle p,q \right\rangle_t $ will be called the tail product of $p$ and $q$.
We call $\|p\|_t=\sqrt{\langle p,p
	\rangle_t}$,
if it exists, the tail norm of $p$. For a sequence $p=(p_i)$ and a scalar $c$, $p-c$ produces a sequence, of which the $i$-th element is $p_i-c$.
For a cumulative distribution function $F_{\mu}$, $\mu$ is the measure induced from $F_{\mu}$, and $\mathbb E_{\mu}[\cdot]$ denotes taking the expectation with respect to $\mu$. The notation $X_m=O_p(c_m)$ means that the sequence of $X_m/c_m$ is stochastically bounded, and $X_m=o_p(c_m)$ means that the sequence of $X_m/c_m$ converges to zeros in probability. 

\section{Consistency and Asymptotic Normality of the ML Estimator} \label{ML_estimator}
\subsection{Problem formulation}
Fig.~\ref{fig:figure_TDoA} shows the illustration of source localization using a sensor array, where ${\bf x}^o \in \mathbb R^n$ denotes the coordinates of the source and ${\bf a}_i \in \mathbb R^n, i=0,1,\ldots,m$ are the coordinates of the sensors. In particular, ${\bf a}_0$ is the reference sensor's coordinates, and without loss of generality, we set it to $0$. The range-difference measurement between sensor $i$ and the reference one has the following equation:
\begin{equation}\label{range_difference_measurements}
	d_i=\|{\bf a}_i-{\bf x}^o\|-\|{\bf x}^o\|+r_i, \;\; i=1,2,\ldots,m,
\end{equation}
where $r_i$ represents the measurement noise. For the measurement noises, we make the following assumption.
\begin{assumption} \label{assumption_1}
	The measurement noises $r_i,i=1,\ldots,m$ are $i.i.d.$ Gaussian noises with zero mean and finite variance $\sigma^2$.
\end{assumption}
In wireless ranging techniques including UWB and LoRa, the measurement variance largely depends on the bandwidth of the channel and can be viewed as a constant (irrespective of the distance) when full bandwidth is used~\cite{jiang2023efficient,bellusci2008model}. Therefore, Assumption~\ref{assumption_1} which assumes that each sensor has the same measurement variance is realistic. 

\begin{figure} [htbp]
	\centering
	\includegraphics[width=0.3\textwidth]{./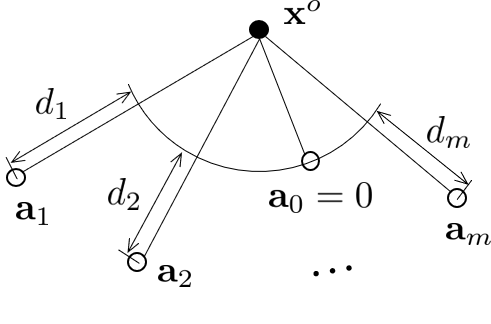}
	\caption{Illustration of range-difference measurements in the $2$D case. The solid dot ``$\bullet$'' represents the radiating source and the hollow dots ``$\circ$'' represent the sensors.}
	\label{fig:figure_TDoA}
\end{figure}

Based on~\eqref{range_difference_measurements}, the ML problem is given as follows:
\begin{equation}\label{ML_problem}
	\hbox{(\textbf{ML}):}~~~~~\mathop{\rm minimize}_{{\bf x} \in\mathbb R^n} \frac{1}{m}\sum_{i=1}^{m} (d_i+\|{\bf x}\|-\|{\bf a}_i-{\bf x}\|)^2.
\end{equation}
Denote an optimal solution to problem~\eqref{ML_problem} as an ML estimate. We know that ML estimators are consistent and asymptotically normal under some regularization conditions~\cite{casella2021statistical}. However, it is worth noting that the regularization conditions are highly abstract and not easy to check in range-difference-based localization settings. In the rest of this section, we will present some readily-checked conditions associated with the source's coordinates and sensor deployment to guarantee the consistency and asymptotic normality of ML estimates. 

\subsection{Asymptotic properties of the ML estimate} 
Assumption~\ref{assumption_1} is about the measurement noises. In the following, we give assumptions on the source's coordinates and sensor deployment. 
\begin{assumption} \label{compact_and_bounded}
	The source's coordinates ${\bf x}^o$ belongs to a compact set $\mathcal X$, the coordinates of sensors ${\bf a}_i,i=0,\ldots,m$ belong to a bounded set $\mathcal A$, and for each $i$, ${\bf a}_i \neq {\bf x}^o$.
\end{assumption}

\begin{definition}
	The sample distribution function $F_m$ of a sequence $({\bf z}_1,{\bf z}_2,\ldots)$ in $\mathbb R^n$ is defined as $F_m({\bf z})=\#/m$ where $\#$ is the number of vectors in the subsequence $({\bf z}_1,\ldots,{\bf z}_m)$ that satisfy ${\bf z}_i \preceq {\bf z}$.
\end{definition}
\begin{assumption} \label{convergence_of_sample_distribution}
	The sample distribution function $F_m$ of the sequence $({\bf a}_1,{\bf a}_2,\ldots)$ converges to a distribution function $F_{\mu}$, i.e., $\lim\limits_{m \rightarrow \infty} F_m({\bf a})=F_{\mu}({\bf a})$, for all ${\bf a} \in \mathbb R^n$. 
\end{assumption}

We denote the probability measure generated by $F_{\mu}$ as $\mu$. In what follows, we give two examples of sensor deployments that satisfy Assumption~\ref{convergence_of_sample_distribution}. 
\begin{example} \label{example_random_vector}
	When ${\bf a}_i, i=1,\ldots,m$ are independent realizations of some random vectors with identical distribution function $F_{\mu}$, we have $\lim\limits_{m \rightarrow \infty} F_m({\bf a})=F_{\mu}({\bf a})$ for all ${\bf a} \in \mathbb R^n$. 
\end{example}

\begin{example} \label{example_fix_sensors}
	Suppose the number of sensors $M$ (excluding the reference one) is fixed, and each sensor makes $T$ i.i.d. measurements. In this manner, a total of $MT$ TDOA measurements can be used. This setting is realistic when the object is static or the sampling of the TDOA measurements is sufficiently fast compared to the object's motion. In this setup, as $T$ goes to infinity, $F_m$ converges to $F_{\mu}$, where $\mu({\bf a}_i)=1/M$ for each $i$.
\end{example}

For the asymptotic distribution $F_{\mu}$ to which the sample distribution of sensors' coordinates converge, we make the following assumption. 

\begin{assumption} \label{deployment_of_sensor2}
	There does not exist any subset $\mathcal S \subset \mathbb R^n$ listed below such that $\mu(\mathcal S)=1$:
	\begin{enumerate}
		\item  [$(i).$] A hyperbola (resp. hyperboloid), which passes through ${\bf a}_0$ and has a focus being ${\bf x}^o$, for $n=2$ (resp. $n=3$).
		\item  [$(ii).$] A line (resp. plane) for $n=2$ (resp. $n=3$).
	\end{enumerate}
\end{assumption}
Sometimes Assumption~\ref{deployment_of_sensor2} is hard to check since the true source's position ${\bf x}^o$ is unknown. In what follows, we give a stricter assumption that implies Assumption~\ref{deployment_of_sensor2}, which requires the notions of conic sections and quadric surfaces.

\begin{assumption} \label{deployment_of_sensor}
	There does not exist a conic section (resp. quadric surface) $\mathcal S$ for $n=2$ (resp. $n=3$) such that $\mu(\mathcal S)=1$.
\end{assumption}
A conic section is a curve obtained as the intersection of the surface of a cone with a plane, and any point $\bf a$ on it satisfies a quadratic equation in the form
\begin{equation*}
	c_1[{\bf a}]_1^2+c_2[{\bf a}]_1[{\bf a}]_2+c_3[{\bf a}]_2^2+c_4[{\bf a}]_1+c_5[{\bf a}]_2+c_6=0,
\end{equation*}
for some real coefficients $c_1,\ldots, c_6$.
A quadric surface is a generalization of the conic section and any point $\bf a$ on it satisfies a quadratic equation in the form
\begin{equation*}
	c_1[{\bf a}]_1^2  +c_2[{\bf a}]_2^2+c_3[{\bf a}]_3^2+c_4[{\bf a}]_1[{\bf a}]_2+c_5[{\bf a}]_1[{\bf a}]_3  +c_6[{\bf a}]_2[{\bf a}]_3+c_7[{\bf a}]_1+c_8[{\bf a}]_2+c_9[{\bf a}]_3+c_{10}=0,
\end{equation*}
for some real coefficients $c_1,\ldots,c_{10}$.
According to~\cite{coxeter1967geometry},
five points in general position uniquely determine a conic section, and nine points in general position uniquely determine a quadric surface. Therefore, Assumption~\ref{deployment_of_sensor} implies that the measure $\mu$ cannot concentrate on five points (resp. nine points) for $n=2$ (resp. $n=3$). To test whether a set of $k$ sensors, indexed as ${\bf a}_1,\ldots, {\bf a}_k$, form a conic section, we can test the rank of the matrix 
$[{\bf v}_1,\ldots,{\bf v}_k]^\top$, where  ${\bf v}_i^\top=\left[[{\bf a}_i]_1^2,[{\bf a}_i]_1[{\bf a}_i]_2,[{\bf a}_i]_2^2,[{\bf a}_i]_1,[{\bf a}_i]_2,1 \right]$. If the rank is six, then the sensors do not locate on a conic section. 
Note that Assumption~\ref{deployment_of_sensor} is an asymptotic condition. To verify it for $n=2$, we can test whether the rank of the matrix $\mathbb E_{\mu} [{\bf v}({\bf a}) {\bf v}({\bf a})^\top]$ is six, where ${\bf v}({\bf a})^\top=\left[[{\bf a}]_1^2,[{\bf a}]_1[{\bf a}]_2,[{\bf a}]_2^2,[{\bf a}]_1,[{\bf a}]_2,1 \right]$ and $\mathbb E_{\mu}$ is taken over $\bf a$ with respect to $\mu$. If the rank is six, then Assumption~\ref{deployment_of_sensor} holds. The verification for $n=3$ is similar. 
Since a hyperbola (resp. hyperboloid) and a line (resp. plane) are both (degenerate) conic sections (resp. quadric surfaces), Assumption~\ref{deployment_of_sensor} is a sufficient condition for Assumption~\ref{deployment_of_sensor2}. Hence, we can verify Assumption~\ref{deployment_of_sensor}, instead of directly checking Assumption~\ref{deployment_of_sensor2}. How to verify all of the proposed assumptions will be illustrated in the simulation part.


Let $f_i({\bf x}):=\|{\bf a}_i-{\bf x}\|-\|{\bf x}\|$ and $f({\bf x}):=\left(f_i({\bf x}) \right) $. Assumption~\ref{convergence_of_sample_distribution} along with Assumption~\ref{compact_and_bounded} ensure that the tail norm $\left\| f({\bf x})-f({\bf x}^o) \right\|_t$ exists for all ${\bf x} \in \mathbb R^{n}$. This is based on the Helly-Bray theorem~\cite{billingsley2013convergence} which presents the convergence of the sample mean of any bounded, continuous, and real-valued function. We remark that the tail products and tail norms involved in the rest of this paper all exist given Assumptions~\ref{convergence_of_sample_distribution} and~\ref{compact_and_bounded}. The following lemma is on the asymptotic identifiability of model~\eqref{range_difference_measurements}. 
\begin{lemma} \label{Lemma_asymptotically_localizable}
	Given Assumptions~\ref{assumption_1}-\ref{convergence_of_sample_distribution},~\ref{deployment_of_sensor2}(i), the source is asymptotically uniquely localizable. Or equivalently, $\left\| f({\bf x})-f({\bf x}^o) \right\|^2_t$ has a unique minimum at ${\bf x}={\bf x}^o$. 
\end{lemma}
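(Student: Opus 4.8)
The plan is to use the equivalent formulation supplied in the statement and reduce the claim to a geometric non-degeneracy property of the limiting measure $\mu$. Since $\left\|f(\mathbf{x})-f(\mathbf{x}^o)\right\|_t^2$ is a tail norm it is nonnegative, and it plainly vanishes at $\mathbf{x}=\mathbf{x}^o$; hence $\mathbf{x}^o$ is already a global minimizer and the whole content of the lemma is \emph{uniqueness}. First I would rewrite the tail norm as an expectation. Writing $f(\mathbf{x},\mathbf{a})=\|\mathbf{a}-\mathbf{x}\|-\|\mathbf{x}\|$, the summand $(f_i(\mathbf{x})-f_i(\mathbf{x}^o))^2$ equals $h_{\mathbf{x}}(\mathbf{a}_i)$ with $h_{\mathbf{x}}(\mathbf{a}):=(f(\mathbf{x},\mathbf{a})-f(\mathbf{x}^o,\mathbf{a}))^2$, which is continuous in $\mathbf{a}$ and bounded on the bounded set $\mathcal{A}$ of Assumption~\ref{compact_and_bounded}. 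The Helly--Bray theorem together with the convergence $F_m\to F_\mu$ of Assumption~\ref{convergence_of_sample_distribution} then yields $\left\|f(\mathbf{x})-f(\mathbf{x}^o)\right\|_t^2=\mathbb{E}_\mu[h_{\mathbf{x}}(\mathbf{a})]$.

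Because $h_{\mathbf{x}}\ge 0$, the expectation $\mathbb{E}_\mu[h_{\mathbf{x}}(\mathbf{a})]$ is zero if and only if $h_{\mathbf{x}}$ vanishes $\mu$-almost everywhere, i.e. $\mu(\mathcal{Z}_{\mathbf{x}})=1$, where, using $\mathbf{a}_0=\mathbf{0}$ so that $\|\mathbf{x}\|=\|\mathbf{a}_0-\mathbf{x}\|$,
\begin{equation*}
\mathcal{Z}_{\mathbf{x}}:=\left\{\mathbf{a}\in\mathbb{R}^n:\ \|\mathbf{a}-\mathbf{x}\|-\|\mathbf{a}-\mathbf{x}^o\|=\|\mathbf{x}\|-\|\mathbf{x}^o\|\right\}.
\end{equation*}
Thus the proof reduces to showing that, for every $\mathbf{x}\neq\mathbf{x}^o$, one has $\mu(\mathcal{Z}_{\mathbf{x}})<1$; this is exactly the step where Assumption~\ref{deployment_of_sensor2}(i) enters.

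The decisive observation is the geometric identity of $\mathcal{Z}_{\mathbf{x}}$. Setting $c:=\|\mathbf{x}\|-\|\mathbf{x}^o\|$, the locus $\{\mathbf{a}:\ \|\mathbf{a}-\mathbf{x}\|-\|\mathbf{a}-\mathbf{x}^o\|=c\}$ is one branch of a hyperbola (resp.\ hyperboloid for $n=3$) with foci $\mathbf{x}$ and $\mathbf{x}^o$ whenever $0<|c|<\|\mathbf{x}-\mathbf{x}^o\|$, and the reverse triangle inequality guarantees $|c|\le\|\mathbf{x}-\mathbf{x}^o\|$, so the set is never empty. A direct substitution shows $\mathbf{a}_0=\mathbf{0}\in\mathcal{Z}_{\mathbf{x}}$, and $\mathbf{x}^o$ is one of its two foci; hence $\mathcal{Z}_{\mathbf{x}}$ is contained in a hyperbola passing through $\mathbf{a}_0$ with focus $\mathbf{x}^o$, so $\mu(\mathcal{Z}_{\mathbf{x}})=1$ would force that full hyperbola to have measure one. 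This is precisely the family excluded in Assumption~\ref{deployment_of_sensor2}(i), and I would invoke it to conclude $\mu(\mathcal{Z}_{\mathbf{x}})<1$, whence $\mathbb{E}_\mu[h_{\mathbf{x}}(\mathbf{a})]>0$ and the minimizer is unique.

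The main obstacle is the careful treatment of the degenerate configurations of $\mathcal{Z}_{\mathbf{x}}$, where the clean ``hyperbola with focus $\mathbf{x}^o$'' picture must be checked to persist. When $c=0$ (i.e. $\|\mathbf{x}\|=\|\mathbf{x}^o\|$) the locus collapses to the perpendicular bisector of $\mathbf{x}$ and $\mathbf{x}^o$, a line (resp.\ plane); since $c=0$ forces this bisector to pass through $\mathbf{a}_0=\mathbf{0}$, it is the degenerate member of the same focal family and remains covered by Assumption~\ref{deployment_of_sensor2}(i). When $|c|=\|\mathbf{x}-\mathbf{x}^o\|$ the branch degenerates to a ray, again contained in such a degenerate hyperbola. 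I expect the bulk of the write-up to consist in confirming that each of these boundary cases yields a set of $\mu$-measure strictly less than one, so that the single invocation of Assumption~\ref{deployment_of_sensor2}(i) suffices uniformly over all $\mathbf{x}\neq\mathbf{x}^o$.
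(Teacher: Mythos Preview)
Your proposal is correct and follows essentially the same route as the paper: rewrite the tail norm as $\mathbb{E}_\mu[(\|\mathbf{a}-\mathbf{x}\|-\|\mathbf{x}\|-\|\mathbf{a}-\mathbf{x}^o\|+\|\mathbf{x}^o\|)^2]$, observe that its vanishing at some $\mathbf{x}\neq\mathbf{x}^o$ forces $\mu$ to concentrate on the hyperbola (hyperboloid) $\{\mathbf{a}:\|\mathbf{a}-\mathbf{x}\|-\|\mathbf{a}-\mathbf{x}^o\|=\|\mathbf{x}\|-\|\mathbf{x}^o\|\}$ with foci $\mathbf{x},\mathbf{x}^o$, and contradict Assumption~\ref{deployment_of_sensor2}(i). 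Your treatment is in fact more careful than the paper's, since you explicitly verify that this locus passes through $\mathbf{a}_0=\mathbf{0}$ (as the assumption requires) and discuss the degenerate cases $c=0$ and $|c|=\|\mathbf{x}-\mathbf{x}^o\|$, which the paper's proof glosses over.
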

\begin{proof}
	By definition, we have 
	\begin{equation*}
\left\| f({\bf x})-f({\bf x}^o) \right\|^2_t=\mathbb E_{\mu}\left[\left(\|{\bf a}-{\bf x}\|-\|{\bf x}\|-\|{\bf a}-{\bf x}^o\|+\|{\bf x}^o\| \right) ^2 \right], 
\end{equation*}	
	where $\mathbb E_{\mu}$ is taken over $\bf a$ with respect to $\mu$, and $\left\| f({\bf x}^o)-f({\bf x}^o) \right\|^2_t=0$. For any $\bf x$, define $\mathcal A_{\bf x}=\{{\bf a} \in \mathcal A \mid \|{\bf a}-{\bf x}\|-\|{\bf a}-{\bf x}^o\|=\|{\bf x}\|-\|{\bf x}^o\| \}$. Suppose there is an ${\bf x}' \neq {\bf x}^o$ such that $\mathbb E_{\mu}\left[\left(\|{\bf a}-{\bf x}'\|-\|{\bf x}'\|-\|{\bf a}-{\bf x}^o\|+\|{\bf x}^o\| \right) ^2 \right]=0$. Then, $\mu(\mathcal A_{{\bf x}'})=1$. Note that $\mathcal A_{{\bf x}'}$ is the hyperbola (hyperboloid) with ${\bf x}^o$ and ${\bf x}'$ being its foci. This contradicts Assumption~\ref{deployment_of_sensor2}$(i)$. Hence, $\left\| f({\bf x})-f({\bf x}^o) \right\|^2_t$ has a unique minimum at ${\bf x}={\bf x}^o$.
\end{proof}

In Theorem~\ref{convergence_of_lik} of Section~\ref{Two_step_estimator}, we show that the objective function in the ML problem~\eqref{ML_problem} converges to $\left\| f({\bf x})-f({\bf x}^o) \right\|^2_t+\sigma^2$. Since the ML estimate minimizes the objective function, Lemma~\ref{Lemma_asymptotically_localizable} guarantees that the ML estimate is consistent, i.e., as the number of measurements increases, it converges to the true value ${\bf x}^o$. Further, let $f_j'({\bf x})=( f_{ji}'({\bf x})) $ be a sequence with respect to $i$, where 
\begin{equation*}
	f_{ji}'({\bf x})=\frac{\partial f_i({\bf x})}{\partial [{\bf x}]_j},
\end{equation*}
and ${\bf M}({\bf x}) \in \mathbb R^{n \times n}$ be a matrix of which $[{\bf M}({\bf x})]_{jk}=\left\langle f_j'({\bf x}),f_k'({\bf x})\right\rangle_t $. Then the following lemma holds. 
\begin{lemma} \label{nonsingular_fisher}
	Given Assumptions~\ref{assumption_1}-\ref{convergence_of_sample_distribution},~\ref{deployment_of_sensor2}(ii), the matrix ${\bf M}({\bf x}^o)$ is non-singular. 
\end{lemma}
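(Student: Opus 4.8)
The plan is to argue by contradiction, exploiting that $\mathbf{M}(\mathbf{x}^o)$ is a Gram-type matrix. Writing out the tail products, $\mathbf{M}(\mathbf{x}^o)=\mathbb{E}_{\mu}\!\left[\nabla f(\mathbf{a})\,\nabla f(\mathbf{a})^\top\right]$, where $\nabla f(\mathbf{a})$ is the gradient of $f_i$ with respect to $\mathbf{x}$, evaluated at $\mathbf{x}=\mathbf{x}^o$ with the sensor position $\mathbf{a}_i=\mathbf{a}$; its existence is guaranteed by the Helly--Bray theorem since this gradient is bounded and continuous on the support of $\mu$ (recall $\mathbf{a}\neq\mathbf{x}^o$). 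Being an expectation of rank-one positive semidefinite matrices, $\mathbf{M}(\mathbf{x}^o)$ is positive semidefinite, so it is singular if and only if there is a nonzero $\mathbf{c}\in\mathbb{R}^n$ with $\mathbf{c}^\top\mathbf{M}(\mathbf{x}^o)\mathbf{c}=\mathbb{E}_{\mu}\big[(\mathbf{c}^\top\nabla f(\mathbf{a}))^2\big]=0$, i.e. with $\mathbf{c}^\top\nabla f(\mathbf{a})=0$ for $\mu$-almost every $\mathbf{a}$. I would assume such a $\mathbf{c}$ exists and show that $\mu$ is then forced to concentrate on a set forbidden by the deployment assumption.

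Next I would make the locus explicit. A direct computation gives
\begin{equation*}
\nabla f(\mathbf{a})=\frac{\mathbf{x}^o-\mathbf{a}}{\|\mathbf{x}^o-\mathbf{a}\|}-\frac{\mathbf{x}^o}{\|\mathbf{x}^o\|},
\end{equation*}
so the null condition $\mathbf{c}^\top\nabla f(\mathbf{a})=0$ reads
\begin{equation*}
\frac{\mathbf{c}^\top(\mathbf{x}^o-\mathbf{a})}{\|\mathbf{x}^o-\mathbf{a}\|}=\beta,\qquad \beta:=\frac{\mathbf{c}^\top\mathbf{x}^o}{\|\mathbf{x}^o\|}.
\end{equation*}
Since $\beta$ is a constant, this says the direction from $\mathbf{a}$ to $\mathbf{x}^o$ makes a fixed angle with $\mathbf{c}$, so the admissible $\mathbf{a}$ trace out a cone with apex $\mathbf{x}^o$; note that the reference sensor $\mathbf{a}_0=\mathbf{0}$ satisfies the equation automatically, so this cone passes through the origin.

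To relate the locus to the forbidden sets, I would clear the denominator by squaring: every $\mathbf{a}$ in the bad set satisfies
\begin{equation*}
\big(\mathbf{c}^\top(\mathbf{x}^o-\mathbf{a})\big)^2=\beta^2\,\|\mathbf{x}^o-\mathbf{a}\|^2,
\end{equation*}
a single quadratic equation in $\mathbf{a}$ whose purely quadratic part is $\mathbf{a}^\top(\mathbf{c}\mathbf{c}^\top-\beta^2\mathbf{I}_n)\mathbf{a}$. This is a (possibly degenerate) conic section for $n=2$ or quadric surface for $n=3$, and because squaring only enlarges the solution set, $\mu$ of this quadric is also one. When $\beta=0$ the quadratic collapses to the hyperplane $\mathbf{c}^\top\mathbf{a}=\mathbf{c}^\top\mathbf{x}^o$, a line ($n=2$) or plane ($n=3$), contradicting Assumption~\ref{deployment_of_sensor2}(ii); the degenerate magnitude cases $|\beta|=\|\mathbf{c}\|$ (a single ray, hence contained in a line/plane) and $|\beta|>\|\mathbf{c}\|$ (empty except the excluded apex) are subsumed by the same reasoning. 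In the remaining case $0<|\beta|<\|\mathbf{c}\|$ the quadratic form is indefinite and full rank, so the locus is a genuine conic/quadric through $\mathbf{x}^o$, and the contradiction is with the non-concentration on a conic section/quadric surface (Assumption~\ref{deployment_of_sensor}, which the excerpt notes implies Assumption~\ref{deployment_of_sensor2}). In every case we reach a contradiction, so no null direction $\mathbf{c}$ exists and $\mathbf{M}(\mathbf{x}^o)$ is nonsingular.

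I expect the main obstacle to be precisely this last bookkeeping: identifying, for every nonzero $\mathbf{c}$, the type of the quadratic locus and confirming that it is one of the excluded sets. The squaring step is convenient but must be handled carefully, since it adjoins the spurious branch where $\mathbf{c}^\top(\mathbf{x}^o-\mathbf{a})$ carries the wrong sign; fortunately the inclusion goes in the harmless direction (the bad set sits inside the quadric), so the measure-one conclusion transfers. The subtle point worth flagging is that for $\beta\neq0$ the locus is a proper cone (two coplanar rays in $n=2$, a circular cone in $n=3$) rather than a single line or plane, so the clean contradiction there invokes the conic/quadric form of the deployment assumption; the line/plane statement alone settles only the degenerate case $\beta=0$.
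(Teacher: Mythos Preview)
Your approach is essentially the paper's: both express $\mathbf{M}(\mathbf{x}^o)=\mathbb{E}_{\mu}\bigl[\nabla f(\mathbf{a})\,\nabla f(\mathbf{a})^{\top}\bigr]$, suppose a null direction $\mathbf{c}$ exists, and study the zero set $\{\mathbf{a}:\mathbf{c}^{\top}\nabla f(\mathbf{a})=0\}$ to reach a contradiction with the deployment hypothesis. Where you diverge is in the geometric identification of that set. The paper asserts in one line that it is ``a subset of a line'' (resp.\ plane), so that Assumption~\ref{deployment_of_sensor2}(ii) alone finishes the proof. Your analysis is finer: the locus is a half-cone with apex $\mathbf{x}^o$, which collapses to a hyperplane only when $\beta=0$ and to a single ray when $|\beta|=\|\mathbf{c}\|$; for generic $0<|\beta|<\|\mathbf{c}\|$ it consists of two non-collinear rays ($n=2$) or a proper circular half-cone ($n=3$), neither of which is contained in any single line or plane. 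Your instinct to fall back on the stronger Assumption~\ref{deployment_of_sensor} in that case is correct, and in fact exposes a gap in the paper's own argument---its claim that the set always lies on a line does not hold for general $\mathbf{c}$. In short, you follow the same strategy but with more care about exactly which deployment assumption is doing the work; the price is that your proof, as written, establishes the lemma under Assumption~\ref{deployment_of_sensor} rather than the nominally weaker Assumption~\ref{deployment_of_sensor2}(ii) stated in the lemma.
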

\begin{proof}
	Here, we give the proof of the 2D case. The argument of the 3D case is similar and will be omitted. For any ${\bf x} \neq 0$, define $\mathcal A^{\perp}_{\bf x}=\{ {\bf a} \in \mathcal A \mid {\bf x}^\top \left(\frac{{\bf x}^o-{\bf a}}{\|{\bf x}^o-{\bf a}\|}-\frac{{\bf x}^o}{\|{\bf x}^o\|}\right) = 0 \}$, and $\overbar {{\mathcal A}^{\perp}_{\bf x}}=\mathcal A \setminus \mathcal A^{\perp}_{\bf x}$, where $\setminus$ denotes set difference. Note that ${\bf x}^\top \frac{{\bf x}^o-{\bf a}}{\|{\bf x}^o-{\bf a}\|}={\bf x}^\top \frac{{\bf x}^o}{\|{\bf x}^o\|}$ implies the projection of $\bf x$ onto the vector ${\bf x}^o-{\bf a}$ is a constant. Therefore, the set $\mathcal A^{\perp}_{\bf x}$ is a subset of a line. In virtue of Assumption~\ref{deployment_of_sensor2}$(ii)$, we have that for any ${\bf x} \neq 0$, $\mu\left(\overbar {{\mathcal A}^{\perp}_{\bf x}} \right) >0$. Then we can decompose ${\bf M}({\bf x}^o)$ as
\begin{align*}
{\bf M}({\bf x}^o) = &\lim\limits_{m \rightarrow \infty} \frac{1}{m} \sum_{i=1}^{m} \nabla f_i({\bf x}^o) \nabla f_i({\bf x}^o)^\top \\
=&\mathbb E_{\mu}\left[\left(\frac{{\bf x}^o-{\bf a}}{\|{\bf x}^o-{\bf a}\|}-\frac{{\bf x}^o}{\|{\bf x}^o\|}\right) \left(\frac{{\bf x}^o-{\bf a}}{\|{\bf x}^o-{\bf a}\|}-\frac{{\bf x}^o}{\|{\bf x}^o\|}\right)^\top \right] \\
= &\underbrace{\int_{\mathcal A^{\perp}_{\bf x}} \left(\frac{{\bf x}^o-{\bf a}}{\|{\bf x}^o-{\bf a}\|}-\frac{{\bf x}^o}{\|{\bf x}^o\|}\right) \left(\frac{{\bf x}^o-{\bf a}}{\|{\bf x}^o-{\bf a}\|}-\frac{{\bf x}^o}{\|{\bf x}^o\|}\right)^\top d \mu({\bf a})}_{:= {\bf M}^{\perp}_{\bf x}({\bf x}^o)} \\
+ & \underbrace{\int_{\overbar {{\mathcal A}^{\perp}_{\bf x}}} \left(\frac{{\bf x}^o-{\bf a}}{\|{\bf x}^o-{\bf a}\|}-\frac{{\bf x}^o}{\|{\bf x}^o\|}\right) \left(\frac{{\bf x}^o-{\bf a}}{\|{\bf x}^o-{\bf a}\|}-\frac{{\bf x}^o}{\|{\bf x}^o\|}\right)^\top d \mu({\bf a})}_{:=\overbar {{{\bf M}}^{\perp}_{\bf x}}({\bf x}^o)}
\end{align*}	
	Since $\mu\left(\overbar {{\mathcal A}^{\perp}_{\bf x}} \right) >0$, we have $\overbar {{{\bf M}}^{\perp}_{\bf x}}({\bf x}^o) \neq 0$. Therefore,
	\begin{align*}
		{\bf x}^\top {\bf M}({\bf x}^o) {\bf x} &= {\bf x}^\top \left({\bf M}^{\perp}_{\bf x}({\bf x}^o)+ \overbar {{{\bf M}}^{\perp}_{\bf x}}({\bf x}^o)\right)  {\bf x} \\
		& = {\bf x}^\top \overbar {{{\bf M}}^{\perp}_{\bf x}}({\bf x}^o) {\bf x} \\
		& >0,
	\end{align*}
	which implies ${\bf M}({\bf x}^o)$ is positive definite and completes the proof.
\end{proof}

Now, we are on the point to depict the asymptotic property of the ML estimator that optimally solves~\eqref{ML_problem}. Denote the ML estimate as $\hat {\bf x}^{\rm ML}_{m}$. Under Assumptions~\ref{assumption_1}-\ref{convergence_of_sample_distribution},~\ref{deployment_of_sensor2}$(i\text{-}ii)$, the ML estimate $\hat {\bf x}^{\rm ML}_{m}$ enjoys the following consistency and asymptotic normality. The proof is straightforward by checking conditions in~\cite[Theorem 3]{jennrich1969asymptotic}.

\begin{theorem}[Consistency and asymptotic normality] \label{theorem_ML}
	Under Assumptions~\ref{assumption_1}-\ref{convergence_of_sample_distribution},~\ref{deployment_of_sensor2}(i-ii), we have $\hat {\bf x}^{\rm ML}_{m} \rightarrow {\bf x}^o$ with probability one as $m \rightarrow \infty$. Moreover,
	\begin{equation}\label{asymptotic_normality_LS}
		\sqrt{m}(\hat {\bf x}^{\rm ML}_{m}-{\bf x}^o) \rightarrow \mathcal N(0,\sigma^2 {\bf M}^{-1}({\bf x}^o))~~\text{as}~m \rightarrow \infty.
	\end{equation}
\end{theorem}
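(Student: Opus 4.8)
The plan is to recognize problem~\eqref{ML_problem} as an ordinary nonlinear least-squares problem and to invoke the classical consistency-and-asymptotic-normality theorem of Jennrich~\cite{jennrich1969asymptotic}, verifying its hypotheses one by one. Writing $\theta = {\bf x}$ and taking the regression functions to be $f_i({\bf x}) = \|{\bf a}_i - {\bf x}\| - \|{\bf x}\|$, equation~\eqref{range_difference_measurements} reads $d_i = f_i({\bf x}^o) + r_i$, so the objective in~\eqref{ML_problem} is exactly $\frac{1}{m}\sum_{i=1}^m (d_i - f_i({\bf x}))^2$, the averaged squared residual of a nonlinear regression whose errors are i.i.d.\ zero-mean finite-variance Gaussian (Assumption~\ref{assumption_1}). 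Thus $\hat{\bf x}^{\rm ML}_m$ is the nonlinear least-squares estimator over the compact set $\mathcal X$ (Assumption~\ref{compact_and_bounded}), and the whole tail-product/tail-norm apparatus introduced in the Notations is precisely the language in which Jennrich's conditions are phrased.

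First I would verify the hypotheses for strong consistency. Decomposing the criterion as $\frac{1}{m}\sum_i r_i^2 + \frac{2}{m}\sum_i r_i(f_i({\bf x}^o)-f_i({\bf x})) + \frac{1}{m}\sum_i(f_i({\bf x})-f_i({\bf x}^o))^2$, the first term tends to $\sigma^2$ and the cross term to $0$ almost surely by the strong law, while the third term converges to $\|f({\bf x})-f({\bf x}^o)\|_t^2$; this is the content of Theorem~\ref{convergence_of_lik}. Existence and continuity in ${\bf x}$ of these tail limits are guaranteed by Assumptions~\ref{compact_and_bounded}--\ref{convergence_of_sample_distribution} via Helly--Bray, and uniform convergence over the compact $\mathcal X$ follows because each $f_i$ is bounded and equicontinuous on $\mathcal X$ (the sensors lie in a bounded set and ${\bf x}^o \neq {\bf a}_i$ for every $i$, including ${\bf a}_0 = 0$, so no norm argument degenerates). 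The identifiability hypothesis---that the limiting criterion has ${\bf x}^o$ as its unique minimizer---is exactly Lemma~\ref{Lemma_asymptotically_localizable}. Together these give $\hat{\bf x}^{\rm ML}_m \to {\bf x}^o$ with probability one.

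Next I would check the additional hypotheses for asymptotic normality. Since ${\bf x}^o \neq 0$ and ${\bf x}^o \neq {\bf a}_i$, each $f_i$ is (real-analytic, hence) twice continuously differentiable in a neighborhood of ${\bf x}^o$, with $\nabla f_i({\bf x}^o) = \frac{{\bf x}^o - {\bf a}_i}{\|{\bf x}^o - {\bf a}_i\|} - \frac{{\bf x}^o}{\|{\bf x}^o\|}$, and the relevant tail products of the first and second derivatives exist uniformly near ${\bf x}^o$ by the same Helly--Bray argument. The sole remaining condition is nonsingularity of the limiting Gram matrix ${\bf M}({\bf x}^o) = \lim_m \frac{1}{m}\sum_i \nabla f_i({\bf x}^o)\nabla f_i({\bf x}^o)^\top$, which is precisely Lemma~\ref{nonsingular_fisher}. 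Expanding the normal equations about ${\bf x}^o$ gives $\sqrt{m}(\hat{\bf x}^{\rm ML}_m - {\bf x}^o) \approx \big(\frac{1}{m}\sum_i \nabla f_i \nabla f_i^\top\big)^{-1}\frac{1}{\sqrt m}\sum_i r_i \nabla f_i$, whose first factor tends to ${\bf M}^{-1}({\bf x}^o)$ and whose second factor is asymptotically $\mathcal N(0,\sigma^2 {\bf M}({\bf x}^o))$ by the central limit theorem; the sandwich product then has covariance $\sigma^2 {\bf M}^{-1}({\bf x}^o)$, which is~\eqref{asymptotic_normality_LS}.

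I expect the main obstacle to be not the algebra but carefully matching the problem to Jennrich's precise technical conditions---in particular, establishing uniform (rather than merely pointwise) convergence and domination of the tail products over a neighborhood of ${\bf x}^o$, and confirming that the differentiability of the norm-difference functions $f_i$ genuinely holds at ${\bf x}^o$, which hinges on ${\bf x}^o$ avoiding both the origin and every sensor as guaranteed by Assumption~\ref{compact_and_bounded}. Once the dictionary between our notation and Jennrich's is fixed, consistency and normality reduce to Lemmas~\ref{Lemma_asymptotically_localizable} and~\ref{nonsingular_fisher} together with the standing assumptions, so no genuinely new estimates should be required.
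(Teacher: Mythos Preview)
Your proposal is correct and follows exactly the approach taken in the paper, which simply states that the result is obtained by checking the conditions of \cite[Theorem~3]{jennrich1969asymptotic}. Your write-up in fact supplies more detail than the paper does, explicitly identifying Lemma~\ref{Lemma_asymptotically_localizable} as the identifiability condition and Lemma~\ref{nonsingular_fisher} as the nonsingularity condition in Jennrich's framework, and correctly noting that ${\bf x}^o \neq {\bf a}_0 = 0$ (from Assumption~\ref{compact_and_bounded}) is what guarantees differentiability of $f_i$ at ${\bf x}^o$.
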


The matrix ${\bf M}({\bf x}^o)$ is tightly related to the Fisher information matrix $\bf F$ of model~\eqref{range_difference_measurements}. 
To derive the Fisher information matrix, recall that $d_i=\|{\bf a}_i-{\bf x}^o\|-\|{\bf x}^o\|+r_i$. Let ${\bf d}=[d_1,\ldots,d_m]^\top$. Since $r_i \sim \mathcal N(0,\sigma^2)$ are i.i.d., we obtain the log-likelihood function as follows
\begin{align*}
	\ell ({\bf d};{\bf x}^o) & = \log \left(\prod_{i=1}^{m} \frac{1}{\sqrt{2 \pi}\sigma}e^{-\frac{\left(d_i-\|{\bf a}_i-{\bf x}^o\|+\|{\bf x}^o\| \right)^2}{2\sigma^2}}\right) \\
	& =m \log \frac{1}{\sqrt{2 \pi}\sigma}-\sum\limits_{i=1}^{m} \frac{\left(d_i-\|{\bf a}_i-{\bf x}^o\|+\|{\bf x}^o\| \right)^2}{2\sigma^2},
\end{align*}
which gives
\begin{equation*}
	\frac{\partial \ell ({\bf d};{\bf x}^o)}{\partial {\bf x}^o}=-\frac{1}{\sigma^2}\sum\limits_{i=1}^{m} r_i\left(\frac{{{\bf x}^o}}{\|{\bf x}^o\|}-\frac{{\bf x}^o-{\bf a}_i}{\|{\bf x}^o-{\bf a}_i\|}\right).
\end{equation*}
Then we obtain the Fisher information matrix
\begin{equation*}
	\begin{split}
		{\bf F}= & \mathbb E\left[\frac{\partial \ell ({\bf d};{\bf x}^o)}{\partial {\bf x}^o} \left( \frac{\partial \ell ({\bf d};{\bf x}^o)}{\partial {\bf x}^o}\right) ^\top \right] \\
		= & \frac{m}{\sigma^2} \frac{{\bf x}^o {{\bf x}^o}^\top}{\|{\bf x}^o\|^2}+\frac{1}{\sigma^2} \sum_{i=1}^{m} \frac{({\bf x}^o-{\bf a}_i)({\bf x}^o-{\bf a}_i)^\top}{\|{\bf x}^o-{\bf a}_i\|^2} \\
		& -\frac{1}{\sigma^2} \sum_{i=1}^{m} \frac{{\bf x}^o({\bf x}^o-{\bf a}_i)^\top+({\bf x}^o-{\bf a}_i){{\bf x}^o}^\top}{\|{\bf x}^o\|\|{\bf x}^o-{\bf a}_i\|}.
	\end{split}
\end{equation*}
The CRLB is given by the trace of the inverse of the Fisher information matrix,  i.e., ${\rm CRLB}={\rm tr}({\bf F}^{-1})$. From~\eqref{asymptotic_normality_LS} we have $\hat {\bf x}^{\rm ML}_{m}$ converges to  the true value ${\bf x}^o$ with the asymptotic covariance of $\frac{\sigma^2}{m} {\bf M}^{-1}({\bf x}^o)$. Further combining the definition of ${\bf M}({\bf x})$, it holds that $\lim\limits_{m \rightarrow \infty} m {\bf F}^{-1}=\sigma^2{\bf M}^{-1}({\bf x}^o)$, which implies that the ML estimator $\hat {\bf x}^{\rm ML}_{m}$ is asymptotically efficient.

Till now, we have given some assumptions on the sensors' coordinates and the measurement noises under which the ML estimate is consistent and asymptotically normal. However, due to the norm functions in the objective, the ML problem~\eqref{ML_problem} is non-convex and hard to solve. In the following section, we will introduce a two-step estimation scheme that can achieve the same asymptotic property as the ML estimate. 

\section{A Two-Step Estimator} \label{Two_step_estimator}
In this section, we introduce a two-step estimation scheme, which can realize the same asymptotic property that the ML estimate possesses, i.e., it achieves the CRLB asymptotically. 
Before that, we show the convergence of the objective function (denoted as $P_m({\bf x})$) of the ML problem~\eqref{ML_problem}.
\begin{theorem} \label{convergence_of_lik}
	Given Assumptions~\ref{assumption_1}-\ref{convergence_of_sample_distribution},~\ref{deployment_of_sensor2}(i-ii), the negative log-likelihood function $P_m({\bf x})$ converges uniformly to $P({\bf x}):=\|f({\bf x}^o)-f({\bf x})\|_t^2 + \sigma^2$ on $\mathcal X$. In addition, $\nabla^2 P({\bf x}^o)=2{\bf M}({\bf x}^o)$.
\end{theorem}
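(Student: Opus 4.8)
The plan is to substitute the measurement model into $P_m$, expand the square, and analyze the three resulting sample averages separately on the compact set $\mathcal X$; the Hessian identity then follows by differentiating $P$ and noticing that the awkward second-order term drops out at $\mathbf x^o$. First I would use $d_i=\|\mathbf a_i-\mathbf x^o\|-\|\mathbf x^o\|+r_i=f_i(\mathbf x^o)+r_i$ to rewrite the summand of~\eqref{ML_problem} as $d_i+\|\mathbf x\|-\|\mathbf a_i-\mathbf x\|=f_i(\mathbf x^o)-f_i(\mathbf x)+r_i$. Writing $h(\mathbf x;\mathbf a):=\|\mathbf a-\mathbf x^o\|-\|\mathbf x^o\|-\|\mathbf a-\mathbf x\|+\|\mathbf x\|$, so that $f_i(\mathbf x^o)-f_i(\mathbf x)=h(\mathbf x;\mathbf a_i)$ and $\|f(\mathbf x^o)-f(\mathbf x)\|_t^2=\mathbb E_\mu[h(\mathbf x;\mathbf a)^2]$, this yields
\begin{equation*}
P_m(\mathbf x)=\frac1m\sum_{i=1}^m h(\mathbf x;\mathbf a_i)^2+\frac2m\sum_{i=1}^m h(\mathbf x;\mathbf a_i)\,r_i+\frac1m\sum_{i=1}^m r_i^2 .
\end{equation*}

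The third average is free of $\mathbf x$ and converges to $\sigma^2$ almost surely by the strong law of large numbers (Assumption~\ref{assumption_1}), hence trivially uniformly. For the first average, Assumption~\ref{convergence_of_sample_distribution} together with the Helly--Bray theorem gives, for each fixed $\mathbf x$, the pointwise limit $\frac1m\sum_i h(\mathbf x;\mathbf a_i)^2\to\mathbb E_\mu[h(\mathbf x;\mathbf a)^2]=\|f(\mathbf x^o)-f(\mathbf x)\|_t^2$, since $\mathbf a\mapsto h(\mathbf x;\mathbf a)^2$ is bounded and continuous on the bounded set $\mathcal A$.

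To promote the pointwise limits to uniform ones and to dispatch the cross term, I would exploit equicontinuity in $\mathbf x$. Each $f_i$ is $2$-Lipschitz, as its gradient $\frac{\mathbf x-\mathbf a_i}{\|\mathbf x-\mathbf a_i\|}-\frac{\mathbf x}{\|\mathbf x\|}$ is a difference of unit vectors, so $\mathbf x\mapsto h(\mathbf x;\mathbf a_i)$ is $2$-Lipschitz uniformly in $i$. Consequently $\mathbf x\mapsto\frac1m\sum_i h(\mathbf x;\mathbf a_i)^2$ is equi-Lipschitz on $\mathcal X$ (with constant controlled by the uniform bound on $|h|$), and $\mathbf x\mapsto\frac2m\sum_i h(\mathbf x;\mathbf a_i)r_i$ is Lipschitz with constant at most $\frac4m\sum_i|r_i|\to 4\,\mathbb E|r_1|$ almost surely, hence almost surely bounded uniformly in $m$. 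An equi-Lipschitz family on a compact set that converges pointwise on a countable dense subset converges uniformly (cover $\mathcal X$ by finitely many small balls and combine the limits at the centres with the Lipschitz bound). The cross term converges to zero pointwise by Kolmogorov's strong law of large numbers, since the $h(\mathbf x;\mathbf a_i)$ are uniformly bounded and the $r_i$ are independent and zero-mean, so it vanishes uniformly, while the first average converges uniformly to $\|f(\mathbf x^o)-f(\mathbf x)\|_t^2$. Adding the three limits gives $P_m\to P$ uniformly on $\mathcal X$. The randomness of the cross term, which blocks a direct appeal to the deterministic uniform law of large numbers (of Jennrich type~\cite{jennrich1969asymptotic}) used for the first average, is the chief technical obstacle.

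Finally, $P(\mathbf x)=\mathbb E_\mu[h(\mathbf x;\mathbf a)^2]+\sigma^2$. Because $\mathbf x^o\neq0$ (Assumption~\ref{compact_and_bounded} with $\mathbf a_0=0$) and $\mathbf a=\mathbf x^o$ only on a $\mu$-null set, $h$ is twice continuously differentiable in $\mathbf x$ with bounded derivatives on a neighbourhood of $\mathbf x^o$, so I may differentiate under the expectation to obtain
\begin{equation*}
\nabla^2 P(\mathbf x)=2\,\mathbb E_\mu\!\bigl[\nabla_{\mathbf x}h(\mathbf x;\mathbf a)\,\nabla_{\mathbf x}h(\mathbf x;\mathbf a)^\top+h(\mathbf x;\mathbf a)\,\nabla_{\mathbf x}^2 h(\mathbf x;\mathbf a)\bigr].
\end{equation*}
Evaluating at $\mathbf x=\mathbf x^o$, the factor $h(\mathbf x^o;\mathbf a)=0$ annihilates the second-order term, and since $\nabla_{\mathbf x}h(\mathbf x^o;\mathbf a)=-\bigl(\frac{\mathbf x^o-\mathbf a}{\|\mathbf x^o-\mathbf a\|}-\frac{\mathbf x^o}{\|\mathbf x^o\|}\bigr)$ we get
\begin{equation*}
\nabla^2 P(\mathbf x^o)=2\,\mathbb E_\mu\!\left[\left(\tfrac{\mathbf x^o-\mathbf a}{\|\mathbf x^o-\mathbf a\|}-\tfrac{\mathbf x^o}{\|\mathbf x^o\|}\right)\!\left(\tfrac{\mathbf x^o-\mathbf a}{\|\mathbf x^o-\mathbf a\|}-\tfrac{\mathbf x^o}{\|\mathbf x^o\|}\right)^{\!\top}\right]=2\,\mathbf M(\mathbf x^o),
\end{equation*}
by the representation of $\mathbf M(\mathbf x^o)$ established in the proof of Lemma~\ref{nonsingular_fisher}.
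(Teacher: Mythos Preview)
Your decomposition of $P_m$ into the three sample averages is exactly the one the paper uses, and your identification of the limits is correct. Where you and the paper diverge is in the justification of uniformity and in the Hessian computation.

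For uniformity, the paper simply asserts that the convergence is uniform on $\mathcal X$ and appeals to Lemma~\ref{property_of_bounded_variance} (an $O_p(1/\sqrt m)$ bound for sums of bounded-variance independent terms) to kill the cross term pointwise. You instead supply an explicit equi-Lipschitz argument: you bound the Lipschitz constants of $\mathbf x\mapsto\frac1m\sum h(\mathbf x;\mathbf a_i)^2$ and of the cross term uniformly in $m$ (the latter via $\frac4m\sum|r_i|\to 4\mathbb E|r_1|$), then upgrade pointwise convergence on a countable dense set to uniform convergence on the compact $\mathcal X$. This is more detailed and, in particular, handles the measurability issue for the random cross term (intersecting null sets over countably many $\mathbf x$) that the paper glosses over.

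For the Hessian, the paper works indirectly: it computes $\nabla^2 P_m(\mathbf x^o)=\frac2m\sum\bigl[\nabla f_i(\mathbf x^o)\nabla f_i(\mathbf x^o)^\top-\nabla^2 f_i(\mathbf x^o)\,r_i\bigr]$, shows this converges to $2\mathbf M(\mathbf x^o)$, and then argues that uniform convergence of $P_m$ and of $\nabla^2 P_m$ forces $\nabla^2 P(\mathbf x^o)=\lim_m\nabla^2 P_m(\mathbf x^o)$. You instead differentiate the limit $P(\mathbf x)=\mathbb E_\mu[h(\mathbf x;\mathbf a)^2]+\sigma^2$ directly under the integral and observe that $h(\mathbf x^o;\mathbf a)\equiv 0$ annihilates the $h\,\nabla^2 h$ contribution. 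Your route is shorter and avoids the extra step of proving that the Hessians $\nabla^2 P_m$ themselves converge uniformly; the paper's route has the side benefit of exhibiting the finite-$m$ Hessian explicitly, which is used informally elsewhere to motivate the attraction-region picture. Both arguments implicitly require $\mu(\{\mathbf x^o\})=0$ (so that $\nabla^2_{\mathbf x}h$ is $\mu$-a.e.\ finite near $\mathbf x^o$), a point neither you nor the paper makes fully explicit but which is consistent with Assumption~\ref{compact_and_bounded} and the examples given.
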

\begin{proof}
	The proof is based on the following lemma. 
	\begin{lemma}[ {\cite[Lemma 4]{zeng2022globally}}] \label{property_of_bounded_variance}
		Let $\{X_k\}$ be a sequence of independent random variables with $\mathbb E[X_k]=0$ and $\mathbb E\left[{X_k}^2 \right]  \leq \varphi <\infty$ for all $k$. Then, there holds $\sum_{k=1}^{m}X_k/\sqrt{m}=O_p(1)$.
	\end{lemma}
	Note that $d_i=\|{\bf a}_i-{\bf x}^o\|-\|{\bf x}^o\|+r_i$. Let $r=(r_i)$. We have
	\begin{align*}
		P_m({\bf x}) & = \frac{1}{m}\sum_{i=1}^{m} (d_i+\|{\bf x}\|-\|{\bf a}_i-{\bf x}\|)^2 \\
		& = \frac{1}{m}\sum_{i=1}^{m} (f_i({\bf x}^o)-f_i({\bf x})+r_i)^2 \\
		& \rightarrow \|f({\bf x}^o)-f({\bf x})\|_t^2 + \left\langle f({\bf x}^o)-f({\bf x}),r \right\rangle_t + \|r\|_t^2 \\
		& = P({\bf x}),
	\end{align*}
	uniformly for ${\bf x} \in \mathcal X$, where $\left\langle f({\bf x}^o)-f({\bf x}),r \right\rangle_t=0$ according to Lemma~\ref{property_of_bounded_variance}. Since $\|f({\bf x}^o)-f({\bf x})\|_t^2$ has a unique minimum at ${\bf x}={\bf x}^o$ (Lemma~\ref{Lemma_asymptotically_localizable}), so does $P({\bf x})$. 
	
	Next, we show the positive definiteness of $\nabla^2 P({\bf x}^o)$. Note that $\nabla P_m(\hat {\bf x}^{\rm ML}_m)=0$. The Taylor expansion of $P_m({\bf x})$ in a small neighborhood of $\hat {\bf x}^{\rm ML}_m$ is
	\begin{equation*}
		P_m({\bf x})=P_m(\hat {\bf x}^{\rm ML}_m)+\frac{1}{2} \|{\bf x}-\hat {\bf x}^{\rm ML}_m\|_{\nabla^2 P_m(\hat {\bf x}^{\rm ML}_m)}^2  + o(\|{\bf x}-\hat {\bf x}^{\rm ML}_m\|^2),
	\end{equation*}
	where $\nabla^2 P_m({\bf x})=\frac{1}{m} \sum_{i=1}^{m} \nabla f_i({\bf x})\nabla f_i({\bf x})^\top-\nabla^2 f_i({\bf x})(d_i-f_i({\bf x}))$. 
	Note that
	\begin{align*}
		\nabla^2 P_m({\bf x}^o) &= \frac{2}{m} \sum_{i=1}^{m} \nabla f_i({\bf x}^o)\nabla f_i({\bf x}^o)^\top-\nabla^2 f_i({\bf x}^o)r_i \\
		& \rightarrow 2 {\bf M}({\bf x}^o). 
	\end{align*}
	Since $P_m({\bf x})$ converges uniformly to $P({\bf x})$ on $\mathcal X$, and $\nabla^2 P_m({\bf x})$ convergences uniformly on $\mathcal X$ (by combining Assumptions~\ref{compact_and_bounded},\ref{convergence_of_sample_distribution}, and the Helly-Bray theorem), we have $\nabla^2 P_m({\bf x}^o) \rightarrow \nabla^2 P({\bf x}^o)$, which implies $\nabla^2 P({\bf x}^o)=2 {\bf M}({\bf x}^o)$. From Lemma~\ref{nonsingular_fisher} we know that ${\bf M}({\bf x}^o)$ is positive definite. Hence, $\nabla^2 P({\bf x}^o)$ is positive definite. 
\end{proof}
An example of $P_m({\bf x})$ is depicted in Fig.~\ref{contour}, from which we can see the convergence of $P_m({\bf x})$. From Theorem~\ref{convergence_of_lik} we know that although $P({\bf x})$ is non-convex, it can be approximated by a convex function in a neighborhood around the global minimum ${\bf x}^o$. When using local iterative methods, e.g., GN iterations, if the initial value falls within this neighboring attraction region, the global minimum can be found. Then the question is how to obtain a desirable initial value. A solution is to devise a consistent estimator, ensuring that with the increase of measurement number, the estimate can converge to the true value to which the ML solution also converges. Formally, the two-step estimation scheme is as follows~\cite{mu2017globally}:

\begin{figure*}[!t]
	\centering
	\begin{subfigure}[b]{.44\textwidth}
		\centering
		\includegraphics[width=\textwidth]{./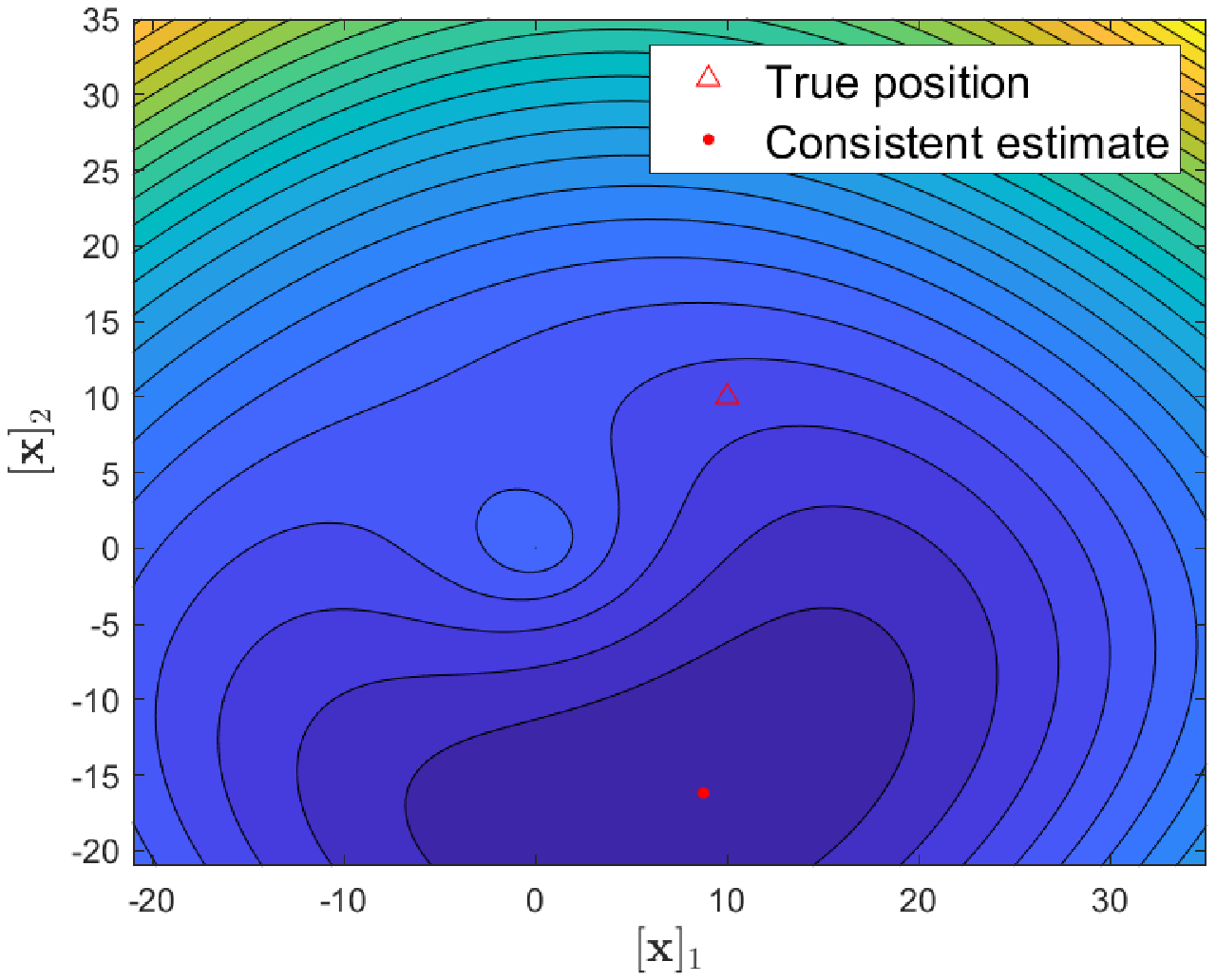}
		\caption{$m=4$}
		\label{t=1}
	\end{subfigure}
	\begin{subfigure}[b]{.44\textwidth}
		\centering
		\includegraphics[width=\textwidth]{./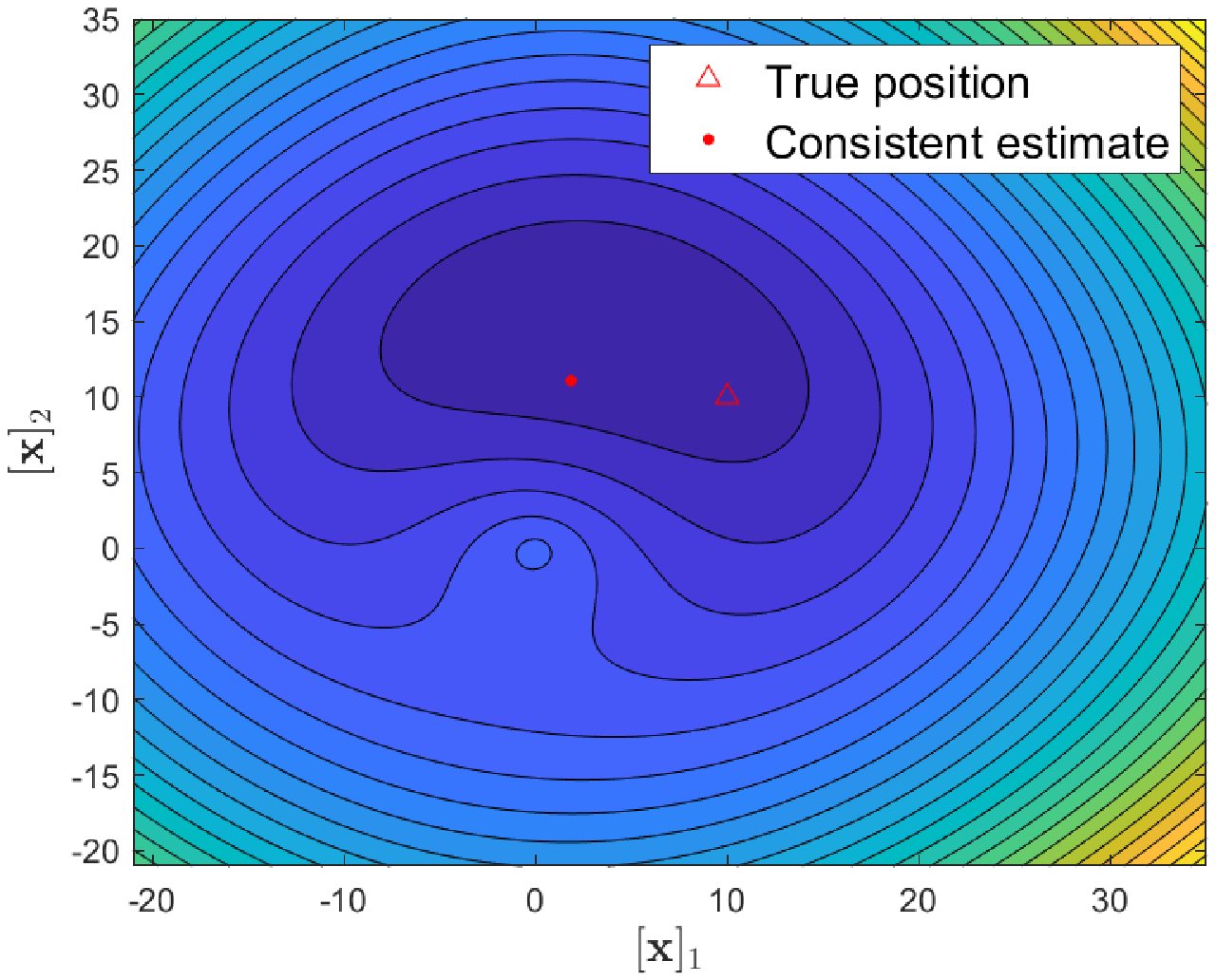}
		\caption{$m=40$}
		\label{t=10}
	\end{subfigure}
	\begin{subfigure}[b]{.44\textwidth}
		\centering
		\includegraphics[width=\textwidth]{./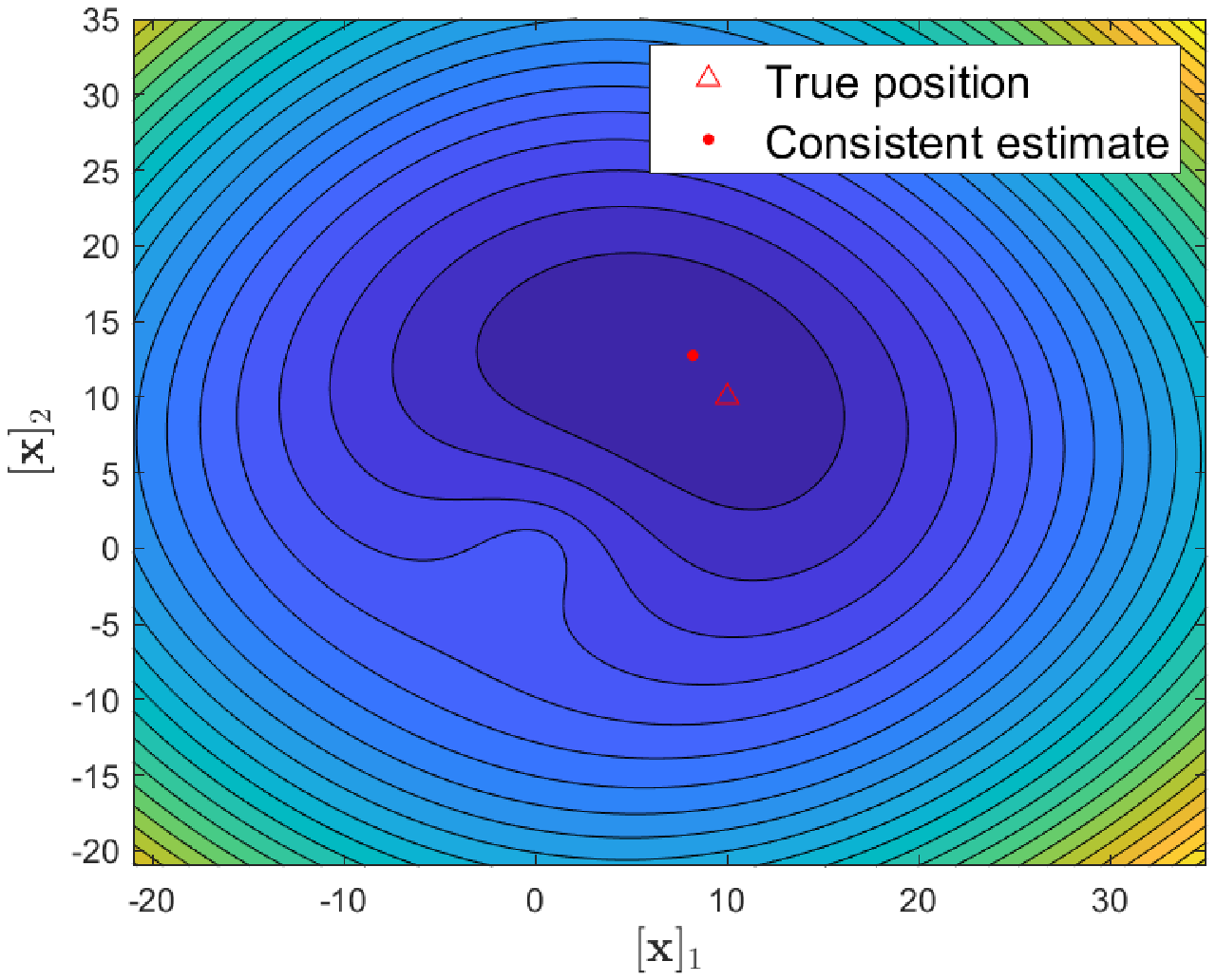}
		\caption{$m=400$}
		\label{t=100}
	\end{subfigure}
	\begin{subfigure}[b]{.44\textwidth}
		\centering
		\includegraphics[width=\textwidth]{./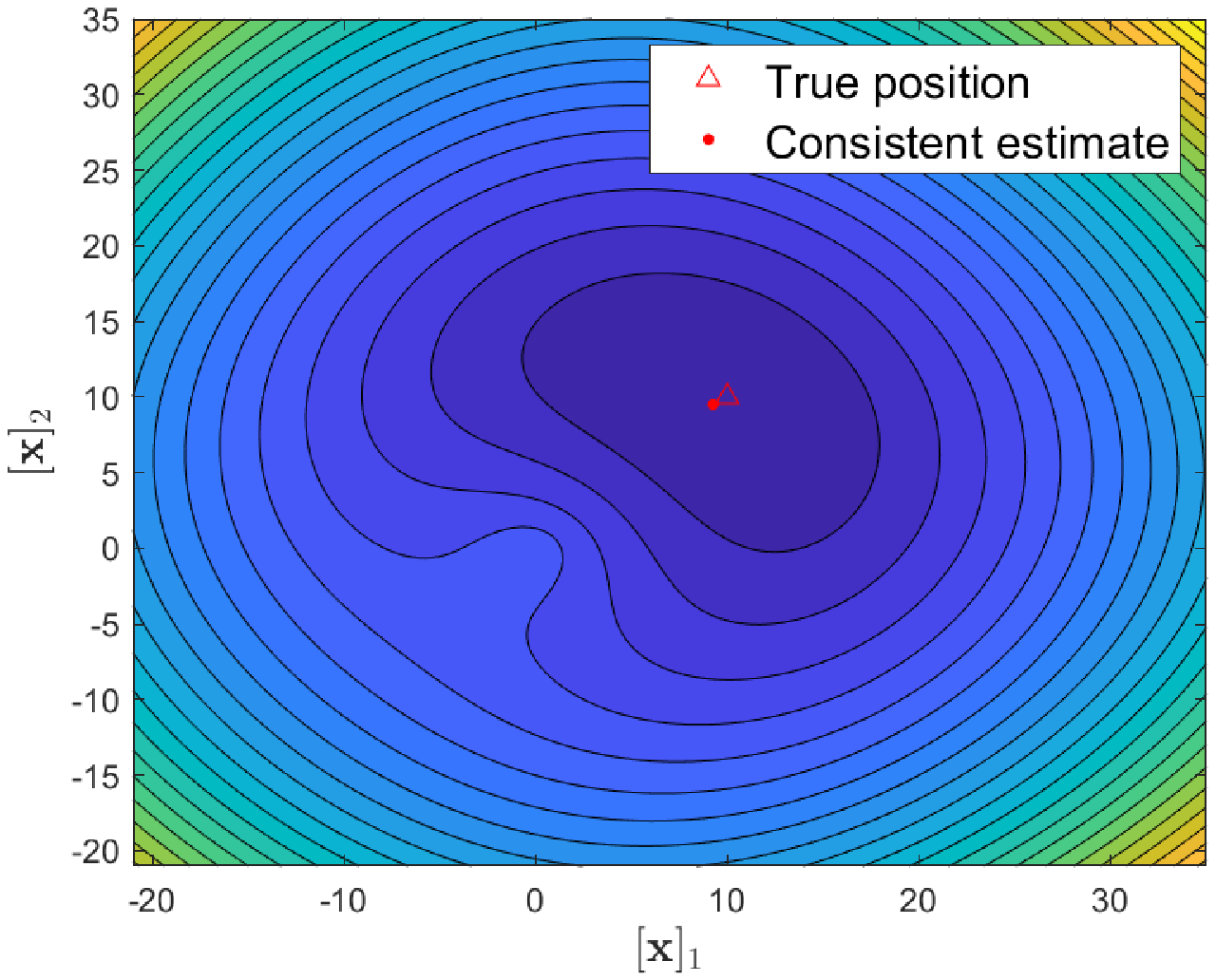}
		\caption{$m=4000$}
		\label{t=1000}
	\end{subfigure}
	\caption{Contours of the objective function~\eqref{ML_problem} and consistent estimates in the 2D case.}
	\label{contour}
\end{figure*}

\noindent {\bf Step 1.} Determine a globally consistent estimate for the source coordinates.

\noindent {\bf Step 2.} Use this preliminary estimate as an initial value for some algorithms that determine the ML estimator.

In Step 2, the GN algorithm is often used to improve the accuracy of the consistent estimate obtained in Step 1. 
The GN iterations associated with the ML problem~\eqref{ML_problem} have the following iterative form:
\begin{equation}\label{GN_iteration}
	\hat {\bf x}_m(k+1)=\hat {\bf x}_m(k)+\left( {\bf J}^\top (k) {\bf J}(k)\right)^{-1}  {\bf J}^\top (k)({\bf d}- {\bf f}_k),
\end{equation} 
where
\begin{align*}
	{\bf f}_k&=\left[ f_1(\hat {\bf x}_m(k)),\ldots,f_m(\hat {\bf x}_m(k))\right] ^\top, \\
	{\bf J}(k)&=\left[\nabla f_1(\hat {\bf x}_m(k)),\ldots,\nabla f_m(\hat {\bf x}_m(k)) \right]^\top.
\end{align*}

The two-step scheme described above has the attractive property that only a one-step GN iteration is sufficient to theoretically guarantee that the resulting two-step estimate has the same asymptotic property as the ML solution~\cite{lehmann2006theory}. The conclusion is summarized in the following result. 
\begin{lemma}[ {\cite[Theorem 4.3]{lehmann2006theory}}] \label{theorem_two_step}
	Suppose that $\hat {\bf x}_{m}$ is a $\sqrt{m}$-consistent estimate of ${\bf x}^o$, i.e., $\hat {\bf x}_{m}-{\bf x}^o=O_p(1/\sqrt{m})$. Denote the one-step GN iteration of $\hat {\bf x}_{m}$ by $\hat {\bf x}^{\rm GN}_m$. Then, under Assumptions~\ref{assumption_1}-\ref{convergence_of_sample_distribution},~\ref{deployment_of_sensor2}(i-ii), we have 
	\begin{equation*}
		\hat {\bf x}^{\rm GN}_m-\hat {\bf x}^{\rm ML}_{m}=o_p(1/\sqrt{m}).
	\end{equation*}
	That is $\hat {\bf x}^{\rm GN}_m$ has the same asymptotic property that $\hat {\bf x}^{\rm ML}_{m}$ possesses. 
\end{lemma}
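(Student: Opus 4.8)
The plan is to treat the one-step Gauss--Newton update as an approximate Newton step on $P_m$ and to show that, starting from a $\sqrt m$-consistent point, a single step already lands within $o_p(1/\sqrt m)$ of the stationary point $\hat{\bf x}^{\rm ML}_m$. First I would rewrite the GN iteration~\eqref{GN_iteration} in gradient form. Writing ${\bf M}_m({\bf x}):=\frac1m\sum_{i=1}^m\nabla f_i({\bf x})\nabla f_i({\bf x})^\top=\frac1m{\bf J}^\top{\bf J}$ and recalling $\nabla P_m({\bf x})=-\frac2m{\bf J}^\top({\bf d}-{\bf f})$, the single GN step from $\hat{\bf x}_m$ becomes
\begin{equation*}
\hat{\bf x}^{\rm GN}_m=\hat{\bf x}_m-\tfrac12{\bf M}_m^{-1}(\hat{\bf x}_m)\,\nabla P_m(\hat{\bf x}_m).
\end{equation*}
This exposes the GN step as a Newton step in which the Hessian is replaced by its Gauss--Newton surrogate $2{\bf M}_m$.

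Next I would eliminate the gradient using the optimality of the ML estimate, $\nabla P_m(\hat{\bf x}^{\rm ML}_m)=0$. Applying the integral form of the mean-value theorem to $\nabla P_m$ along the segment joining $\hat{\bf x}^{\rm ML}_m$ and $\hat{\bf x}_m$ gives $\nabla P_m(\hat{\bf x}_m)=\overbar{\bf H}_m\,(\hat{\bf x}_m-\hat{\bf x}^{\rm ML}_m)$ with $\overbar{\bf H}_m:=\int_0^1\nabla^2 P_m(\hat{\bf x}^{\rm ML}_m+s(\hat{\bf x}_m-\hat{\bf x}^{\rm ML}_m))\,ds$. Substituting and subtracting $\hat{\bf x}^{\rm ML}_m$ yields the factorization
\begin{equation*}
\hat{\bf x}^{\rm GN}_m-\hat{\bf x}^{\rm ML}_m=\Big({\bf I}_n-\tfrac12{\bf M}_m^{-1}(\hat{\bf x}_m)\,\overbar{\bf H}_m\Big)\,(\hat{\bf x}_m-\hat{\bf x}^{\rm ML}_m).
\end{equation*}
Since $\hat{\bf x}_m$ and $\hat{\bf x}^{\rm ML}_m$ are both $\sqrt m$-consistent for ${\bf x}^o$ (the latter by Theorem~\ref{theorem_ML}), their difference is $O_p(1/\sqrt m)$; hence it suffices to prove the bracketed matrix is $o_p(1)$.

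The crux is that the Gauss--Newton surrogate and the genuine Hessian coincide in the limit at ${\bf x}^o$. Both $\hat{\bf x}_m$ and the entire integration segment collapse to ${\bf x}^o$ in probability, so by the uniform convergence of ${\bf M}_m$ and of $\nabla^2 P_m$ on the compact set $\mathcal X$ (established in the proof of Theorem~\ref{convergence_of_lik} via the Helly--Bray theorem), I obtain ${\bf M}_m(\hat{\bf x}_m)\to{\bf M}({\bf x}^o)$ and $\overbar{\bf H}_m\to\nabla^2 P({\bf x}^o)$ in probability. The key fact, precisely the content of Theorem~\ref{convergence_of_lik}, is that the residual contribution $\frac1m\sum_i\nabla^2 f_i({\bf x}^o)r_i$ vanishes (zero mean, bounded variance), so $\nabla^2 P({\bf x}^o)=2{\bf M}({\bf x}^o)$. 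Consequently $\tfrac12{\bf M}_m^{-1}(\hat{\bf x}_m)\overbar{\bf H}_m\to\tfrac12{\bf M}^{-1}({\bf x}^o)\cdot 2{\bf M}({\bf x}^o)={\bf I}_n$, where invertibility is guaranteed by the non-singularity of ${\bf M}({\bf x}^o)$ from Lemma~\ref{nonsingular_fisher}. Thus the bracket is $o_p(1)$, and multiplying an $o_p(1)$ matrix by an $O_p(1/\sqrt m)$ vector gives $\hat{\bf x}^{\rm GN}_m-\hat{\bf x}^{\rm ML}_m=o_p(1/\sqrt m)$, as claimed.

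I expect the main obstacle to be the bookkeeping that makes ``the surrogate equals the true Hessian in the limit'' rigorous: one must verify that the discarded second-order residual term converges to zero uniformly near ${\bf x}^o$, and that the averaged Hessian $\overbar{\bf H}_m$ (rather than merely $\nabla^2 P_m$ evaluated at a fixed point) inherits the same limit. Both follow from the uniform convergence on $\mathcal X$ combined with consistency of the endpoints, but care is needed because $\hat{\bf x}^{\rm ML}_m$ is itself random, so the convergence of $\overbar{\bf H}_m$ must be argued in probability along the shrinking random segment.
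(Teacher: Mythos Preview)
The paper does not supply its own proof of this lemma; it is stated as a direct citation of \cite[Theorem~4.3]{lehmann2006theory} and is used as a black box. Your argument is correct and is essentially the standard proof of that textbook result specialized to the Gauss--Newton setting: rewrite the GN step as an approximate Newton step on $P_m$, use first-order Taylor expansion of $\nabla P_m$ about the stationary point $\hat{\bf x}^{\rm ML}_m$, and show that the contraction factor ${\bf I}_n-\tfrac12{\bf M}_m^{-1}(\hat{\bf x}_m)\overbar{\bf H}_m$ is $o_p(1)$ because both ${\bf M}_m(\hat{\bf x}_m)$ and $\tfrac12\overbar{\bf H}_m$ converge to ${\bf M}({\bf x}^o)$. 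The only paper-specific ingredients you need are exactly the ones you invoke---Theorem~\ref{theorem_ML} for $\sqrt m$-consistency of $\hat{\bf x}^{\rm ML}_m$, Theorem~\ref{convergence_of_lik} for $\nabla^2 P({\bf x}^o)=2{\bf M}({\bf x}^o)$ and the uniform convergence of $\nabla^2 P_m$, and Lemma~\ref{nonsingular_fisher} for invertibility---so your write-up effectively fills in the proof the paper outsources.
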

The same asymptotic property means that $\hat {\bf x}^{\rm GN}_m$ is also consistent and asymptotically efficient. From Lemma~\ref{theorem_two_step} we know that the key in the two-step estimation scheme is to obtain a $\sqrt{m}$-consistent estimate in the first step. In the subsequent section, we will propose such an estimator. 

\section{Bias Elimination and Closed-Form Consistent Localization Method}
In this section, we will devise a $\sqrt{m}$-consistent estimator. First, we relax the original non-convex problem into a linear least-squares problem. We then analyze and derive the bias of the resulting least-squares solution by virtue of its closed-form expression. By eliminating the bias, we obtain an asymptotically unbiased (thus consistent) estimate. 

Specifically, by moving $\|{\bf x}^o\|$ in~\eqref{range_difference_measurements} to the left side and then squaring both sides, we obtain the modified model:
\begin{equation}\label{squared_range_difference}
	d_i^2-\|{\bf a}_i\|^2=-2d_i\|{\bf x}^o\|-2{\bf a}_i^\top {\bf x}^o+e_i, 
\end{equation}
where $e_i=2\|{\bf a}_i-{\bf x}^o\|r_i+r_i^2$. Since all items related to $r_i$ are contained in $e_i$, $e_i$ characterizes the noise in~\eqref{squared_range_difference}. For model~\eqref{squared_range_difference}, on the one hand, the mean of the noise term $e_i$ is $\sigma^2$, not $0$; on the other hand, the regressor term $[-2d_i,-2{\bf a}_i^\top]$ is correlated with the noise term $e_i$. As a result, the least-squares solution based on~\eqref{squared_range_difference} is biased and thus not consistent. With the prior knowledge of the measurement noises' variance $\sigma^2$, we can subtract it from both sides of~\eqref{squared_range_difference}, yielding
\begin{equation}\label{subtrct_variance}
	d_i^2-\|{\bf a}_i\|^2-\sigma^2=-2d_i\|{\bf x}^o\|-2{\bf a}_i^\top {\bf x}^o+\epsilon_i, 
\end{equation}
where $\epsilon_i=e_i-\sigma^2$ has zero mean. Let ${\bf y}^o=[{{\bf x}^o}^\top, \|{\bf x}^o\|]^\top$. By stacking~\eqref{subtrct_variance} for $m$ sensors, we obtain the following matrix form:
\begin{equation}\label{subtrct_variance_matrix_form}
	{\bf b}={\bf A}{\bf y}^o+{\bm \epsilon},
\end{equation}
where 
\begin{equation*}
	{\bf A}=\begin{bmatrix}
		-2{\bf a}_1^\top & -2d_1  \\
		\vdots      & \vdots\\
		-2{\bf a}_m^\top & -2d_m
	\end{bmatrix},
	{\bf b}=\begin{bmatrix}
		d_1^2-\|{\bf a}_1\|^2-\sigma^2   \\
		\vdots      \\
		d_m^2-\|{\bf a}_m\|^2-\sigma^2
	\end{bmatrix},
	{\bm \epsilon}=\begin{bmatrix}
		\epsilon_1 \\
		\vdots \\
		\epsilon_m
	\end{bmatrix}.
\end{equation*}
Since the sensors are not collinear (coplanar) and $d_i$'s contain random noises, the matrix ${\bf A}^\top {\bf A}$ is almost surely invertible. Then the closed-form solution can be obtained
\begin{equation} \label{biased_closed_form}
	\hat {\bf y}_m^{\rm B}=( {\bf A}^\top {\bf A} )^{-1} {\bf A}^\top {\bf b}.  
\end{equation}
Since the regressor $\bf A$ which contains $d_i$ is correlated with the noise term $\bm \epsilon$, $\hat {\bf y}_m^{\rm B}$ is biased. In what follows, we will eliminate the bias of $\hat {\bf y}_m^{\rm B}$ and obtain a consistent estimate. Before that, we rephrase~\eqref{subtrct_variance} as follows:
\begin{equation}\label{subtrct_variance_ideal}
	d_i^2-\|{\bf a}_i\|^2-\sigma^2=-2d_i^o\|{\bf x}^o\|-2{\bf a}_i^\top {\bf x}^o+\eta_i, 
\end{equation}
where $\eta_i=\epsilon_i-2\|{\bf x}^o\|r_i$ and $d_i^o=\|{\bf a}_i-{\bf x}^o\|-\|{\bf x}^o\|$. 
The matrix form of~\eqref{subtrct_variance_ideal} is 
\begin{equation} \label{ideal_matrix_form}
	{\bf b}= {\bf A}^o {\bf y}^o + {\bm \eta},
\end{equation}
where 
\begin{equation*}
	{\bf A}^o=\begin{bmatrix}
		-2{\bf a}_1^\top & -2d_1^o  \\
		\vdots      & \vdots\\
		-2{\bf a}_m^\top & -2d_m^o
	\end{bmatrix},
	{\bm \eta}=\begin{bmatrix}
		\eta_1 \\
		\vdots \\
		\eta_m
	\end{bmatrix}.
\end{equation*}
\begin{lemma} \label{ideal_invertible}
	Given Assumption~\ref{deployment_of_sensor}, the matrix ${\bf A}^{o\top} {\bf A}^o$ is invertible.
\end{lemma}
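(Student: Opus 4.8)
The plan is to reduce the invertibility of $\mathbf{A}^{o\top}\mathbf{A}^o$ to the positive definiteness of a limiting matrix, mirroring the argument used for Lemma~\ref{nonsingular_fisher}. Writing $\mathbf{w}(\mathbf{a}) := [\mathbf{a}^\top,\, \|\mathbf{a}-\mathbf{x}^o\|-\|\mathbf{x}^o\|]^\top \in \mathbb{R}^{n+1}$, each row of $\mathbf{A}^o$ is $-2\mathbf{w}(\mathbf{a}_i)^\top$, so $\frac{1}{m}\mathbf{A}^{o\top}\mathbf{A}^o = \frac{4}{m}\sum_{i=1}^m \mathbf{w}(\mathbf{a}_i)\mathbf{w}(\mathbf{a}_i)^\top$. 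Since $\mathbf{w}$ is bounded and continuous on the bounded set $\mathcal{A}$, the Helly--Bray theorem (as invoked throughout the paper under Assumptions~\ref{compact_and_bounded} and~\ref{convergence_of_sample_distribution}) gives $\frac{1}{m}\mathbf{A}^{o\top}\mathbf{A}^o \to \mathbf{Q} := 4\,\mathbb{E}_\mu[\mathbf{w}(\mathbf{a})\mathbf{w}(\mathbf{a})^\top]$. If I can show $\mathbf{Q}$ is positive definite, then by continuity of eigenvalues $\frac{1}{m}\mathbf{A}^{o\top}\mathbf{A}^o$ is positive definite for all sufficiently large $m$, whence $\mathbf{A}^{o\top}\mathbf{A}^o$ is invertible.

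The positive definiteness of $\mathbf{Q}$ I would prove by contradiction. For any $\mathbf{c} = [\mathbf{c}_1^\top, c_2]^\top$, one has $\mathbf{c}^\top \mathbf{Q}\mathbf{c} = 4\,\mathbb{E}_\mu[h(\mathbf{a})^2]$, where $h(\mathbf{a}) := \mathbf{a}^\top\mathbf{c}_1 + c_2(\|\mathbf{a}-\mathbf{x}^o\|-\|\mathbf{x}^o\|)$. Suppose $\mathbf{Q}$ is singular; then there is a nonzero $\mathbf{c}$ with $\mathbb{E}_\mu[h(\mathbf{a})^2]=0$, i.e. $h(\mathbf{a})=0$ for $\mu$-almost every $\mathbf{a}$, so $\mu(\mathcal{S})=1$ with $\mathcal{S} := \{\mathbf{a}: h(\mathbf{a})=0\}$. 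The goal is then to exhibit a conic section (quadric surface) containing $\mathcal{S}$, which would contradict Assumption~\ref{deployment_of_sensor}.

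I would split on $c_2$. If $c_2=0$, then $\mathbf{c}_1\neq 0$ and $\mathcal{S}=\{\mathbf{a}:\mathbf{a}^\top\mathbf{c}_1=0\}$ is a hyperplane through the origin --- a line for $n=2$ (a plane for $n=3$) --- which is a degenerate conic section (quadric surface), so $\mu(\mathcal{S})=1$ already contradicts Assumption~\ref{deployment_of_sensor}. If $c_2\neq 0$, I would isolate the norm, rewriting $h=0$ as $c_2\|\mathbf{a}-\mathbf{x}^o\| = c_2\|\mathbf{x}^o\|-\mathbf{a}^\top\mathbf{c}_1 =: \ell(\mathbf{a})$, and square both sides to obtain $\mathcal{S} \subseteq \{\mathbf{a}: p(\mathbf{a})=0\}$ with $p(\mathbf{a}) := c_2^2\|\mathbf{a}-\mathbf{x}^o\|^2 - \ell(\mathbf{a})^2$ a quadratic polynomial; the superset $\{p=0\}$ then has $\mu$-measure one.

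The step requiring the most care --- and the only genuine obstacle --- is verifying that squaring does not collapse $p$ to the zero polynomial, so that $\{p=0\}$ is a \emph{bona fide} conic section rather than all of $\mathbb{R}^n$. Its degree-two part is $\mathbf{a}^\top(c_2^2\mathbf{I}_n - \mathbf{c}_1\mathbf{c}_1^\top)\mathbf{a}$, and the symmetric matrix $c_2^2\mathbf{I}_n - \mathbf{c}_1\mathbf{c}_1^\top$ carries the eigenvalue $c_2^2\neq 0$ on the nonzero subspace $\mathbf{c}_1^\perp$ (of dimension $n-1\ge 1$ since $n\ge 2$). Hence $p$ has a nonvanishing quadratic part, is a nonzero polynomial, and $\{p=0\}$ is therefore a genuine conic section (quadric surface) of $\mu$-measure one, contradicting Assumption~\ref{deployment_of_sensor}. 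This forces $\mathbf{Q}$ to be positive definite and completes the argument.
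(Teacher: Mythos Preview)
Your argument is correct, and the algebraic heart --- squaring the relation $\mathbf{a}^\top\mathbf{c}_1 + c_2(\|\mathbf{a}-\mathbf{x}^o\|-\|\mathbf{x}^o\|)=0$ to obtain a quadratic equation in $\mathbf{a}$, whose zero set is a conic section/quadric surface forbidden by Assumption~\ref{deployment_of_sensor} --- is exactly what the paper does. The difference is in framing: the paper argues directly at finite $m$, supposing $\mathbf{A}^o\mathbf{y}=0$ for some nonzero $\mathbf{y}$, deducing that \emph{all} of $\mathbf{a}_1,\ldots,\mathbf{a}_m$ satisfy a single quadratic equation, and then invoking Assumption~\ref{deployment_of_sensor}. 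You instead pass to the limit $\mathbf{Q}=4\,\mathbb{E}_\mu[\mathbf{w}(\mathbf{a})\mathbf{w}(\mathbf{a})^\top]$, prove $\mathbf{Q}\succ 0$, and recover invertibility of $\mathbf{A}^{o\top}\mathbf{A}^o$ only for $m$ sufficiently large via eigenvalue continuity. Your route meshes more cleanly with Assumption~\ref{deployment_of_sensor}, which is a statement about $\mu$ rather than the finite sample, and you are more careful than the paper in checking the $c_2=0$ case separately and verifying that the squared polynomial does not vanish identically. The price is that you only get the conclusion for large $m$ --- which is all that is actually needed downstream in Theorem~\ref{consistency_of_ideal} --- whereas the paper's direct argument (read literally) claims it for every $m$.
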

\begin{proof}
	Suppose ${\bf A}^{o \top} {\bf A}^o$ is singular, i.e., ${\bf A}^o$ is not full column rank. There exists a ${\bf y} \neq 0$ such that ${\bf A}^o {\bf y}=0$, i.e., $-2{\bf a}_i^\top [{\bf y}]_{1:n}=2(\|{\bf x}^o-{\bf a}_i\|-\|{\bf x}^o\|)[{\bf y}]_{n+1}$ for all $i \in\{1,\ldots,m\}$. By some algebraic operations, we obtain
	\begin{equation*} 
	({\bf a}_i^\top [{\bf y}]_{1:n} +2 [{\bf y}]_{n+1}\|{\bf x}^o\|)[{\bf y}]_{1:n}^\top {\bf a}_i+[{\bf y}]_{n+1}^2(2{\bf x}^o-{\bf a}_i)^\top {\bf a}_i=0,
	\end{equation*}
	which implies ${\bf a}_i$ locate on a conic section (when $n=2$) or a quadric surface (when $n=3$), which contradicts Assumption~\ref{deployment_of_sensor} and completes the proof. 
\end{proof}
Then another closed-form solution can be obtained
\begin{equation} \label{unbiased_closed_form}
	\hat {\bf y}_m^{\rm UB}=( {\bf A}^{o \top} {\bf A}^o )^{-1} {\bf A}^{o \top} {\bf b}. 
\end{equation}
For the estimate~\eqref{unbiased_closed_form}, we have the following theorem. 
\begin{theorem} \label{consistency_of_ideal}
	The estimate $\hat {\bf y}_m^{\rm UB}$ is $\sqrt{m}$-consistent, i.e., $\hat {\bf y}_m^{\rm UB} - {\bf y}^o=O_p(1/\sqrt{m})$.
\end{theorem}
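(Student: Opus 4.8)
The plan is to start from the closed-form expression~\eqref{unbiased_closed_form} and substitute the noiseless identity~\eqref{ideal_matrix_form}, ${\bf b}={\bf A}^o{\bf y}^o+{\bm\eta}$. Since ${\bf A}^{o\top}{\bf A}^o$ is invertible by Lemma~\ref{ideal_invertible}, this gives the error decomposition
\[
\hat{\bf y}_m^{\rm UB}-{\bf y}^o=({\bf A}^{o\top}{\bf A}^o)^{-1}{\bf A}^{o\top}{\bm\eta}=\left(\tfrac{1}{m}{\bf A}^{o\top}{\bf A}^o\right)^{-1}\left(\tfrac{1}{m}{\bf A}^{o\top}{\bm\eta}\right).
\]
The problem then reduces to controlling the two factors separately: showing the normalized Gram matrix converges to a fixed nonsingular limit, and showing the normalized score $\tfrac{1}{m}{\bf A}^{o\top}{\bm\eta}$ is $O_p(1/\sqrt{m})$.

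For the first factor, the key observation is that ${\bf A}^o$ is free of measurement noise: its rows are $[-2{\bf a}_i^\top,\,-2d_i^o]$ with $d_i^o=\|{\bf a}_i-{\bf x}^o\|-\|{\bf x}^o\|$ a bounded continuous function of ${\bf a}_i$ under Assumption~\ref{compact_and_bounded}. Hence each entry of $\tfrac{1}{m}{\bf A}^{o\top}{\bf A}^o$ is the sample average of a bounded continuous function of ${\bf a}_i$, which by the Helly-Bray theorem and Assumption~\ref{convergence_of_sample_distribution} converges to the corresponding $\mu$-expectation; call the limit $\overbar{\bf M}$. I would then upgrade the finite-sample nonsingularity of Lemma~\ref{ideal_invertible} to the limit by rerunning its argument with $\mu$ in place of the empirical measure: a singular $\overbar{\bf M}$ would force $\mu$ to concentrate on a conic section (resp. quadric surface), contradicting Assumption~\ref{deployment_of_sensor}. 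Thus $\overbar{\bf M}\succ0$ and $\left(\tfrac{1}{m}{\bf A}^{o\top}{\bf A}^o\right)^{-1}\to\overbar{\bf M}^{-1}$, so this factor is $O(1)$.

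For the second factor, I would first unwind the noise. Combining $\epsilon_i=2\|{\bf a}_i-{\bf x}^o\|r_i+r_i^2-\sigma^2$ with $\eta_i=\epsilon_i-2\|{\bf x}^o\|r_i$ from~\eqref{subtrct_variance_ideal} gives the clean form $\eta_i=2d_i^o r_i+(r_i^2-\sigma^2)$. Under Assumption~\ref{assumption_1} the $\eta_i$ are independent, and Gaussianity yields $\mathbb E[\eta_i]=0$ (since $\mathbb E[r_i]=0$, $\mathbb E[r_i^2]=\sigma^2$) and $\mathbb E[\eta_i^2]=4(d_i^o)^2\sigma^2+2\sigma^4$ (the cross term vanishes because $\mathbb E[r_i^3]=0$), which is uniformly bounded since $d_i^o$ is bounded. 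Each coordinate of $\tfrac{1}{m}{\bf A}^{o\top}{\bm\eta}$ then has the form $\tfrac{1}{m}\sum_i c_i\eta_i$ with $c_i$ a bounded deterministic entry of ${\bf A}^o$, so the summands $c_i\eta_i$ are independent, zero-mean, with uniformly bounded variance. Lemma~\ref{property_of_bounded_variance} gives $\sum_i c_i\eta_i/\sqrt{m}=O_p(1)$, i.e. $\tfrac{1}{m}{\bf A}^{o\top}{\bm\eta}=O_p(1/\sqrt{m})$. Multiplying the $O(1)$ inverse by this $O_p(1/\sqrt{m})$ vector yields $\hat{\bf y}_m^{\rm UB}-{\bf y}^o=O_p(1/\sqrt{m})$.

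I expect the main obstacle to be the nonsingularity of the limiting Gram matrix $\overbar{\bf M}$: Lemma~\ref{ideal_invertible} only certifies invertibility for each finite $m$, whereas $\sqrt{m}$-consistency needs the inverse to remain bounded as $m\to\infty$. This requires the asymptotic, measure-$\mu$ version of that argument and the genuinely asymptotic Assumption~\ref{deployment_of_sensor} rather than the finite-sample condition. By comparison, the moment bookkeeping for $\eta_i$ (in particular that the $r_i^3$ term drops out so the variance stays finite) is routine.
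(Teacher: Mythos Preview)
Your proposal is correct and follows essentially the same approach as the paper: substitute ${\bf b}={\bf A}^o{\bf y}^o+{\bm\eta}$ into~\eqref{unbiased_closed_form}, invoke Lemma~\ref{property_of_bounded_variance} on the zero-mean, bounded-variance terms $c_i\eta_i$ to get $\tfrac{1}{m}{\bf A}^{o\top}{\bm\eta}=O_p(1/\sqrt{m})$, and use convergence of $\tfrac{1}{m}{\bf A}^{o\top}{\bf A}^o$ to a constant matrix to conclude. Your treatment is in fact more careful than the paper's on one point: the paper simply asserts that $\tfrac{1}{m}{\bf A}^{o\top}{\bf A}^o$ ``converges to a constant matrix whose elements have the form of tail products'' without explicitly verifying that this limit is nonsingular, whereas you correctly flag this and supply the asymptotic analogue of the argument in Lemma~\ref{ideal_invertible} under Assumption~\ref{deployment_of_sensor}.
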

\begin{proof} 
	Recall that $\hat {\bf y}^{\rm UB}_m=({\bf A}^{o \top} {\bf A}^o)^{-1} {\bf A}^{o \top} {\bf b}=\left(\frac{1}{m} {\bf A}^{o \top} {\bf A}^o\right)^{-1}\left(\frac{1}{m} {\bf A}^{o \top} {\bf b}\right)  $. For $\frac{1}{m} {\bf A}^{o \top} {\bf b}$ we have 
	\begin{align*}
		\frac{1}{m}{\bf A}^{o \top} {\bf b} =&\frac{1}{m}{\bf A}^{o \top} \begin{bmatrix}
			-2d_1^o\|{\bf x}^o\|-2{\bf a}_1^\top {\bf x}^o+\eta_1  \\
			\vdots \\
			-2d_m^o\|{\bf x}^o\|-2{\bf a}_m^\top {\bf x}^o+\eta_m
		\end{bmatrix}
		\\
		=& \frac{1}{m}{\bf A}^{o \top} {\bf A}^o \begin{bmatrix}
			{\bf x}^o  \\
			\|{\bf x}^o\|
		\end{bmatrix}+\frac{1}{\sqrt{m}} O_p\left(1 \right),
	\end{align*}
	where the second equality is based on Lemma~\ref{property_of_bounded_variance}. Thus, we obtain
	\begin{equation*}
		\sqrt{m}\left( \hat {\bf y}^{\rm UB}_m-{\bf y}^o\right)= \left( \frac{1}{m}{\bf A}^{o \top} {\bf A}^o\right)^{-1}O_p\left(1 \right)=O_p\left(1 \right),
	\end{equation*}
	where the second equality holds because $ \frac{1}{m}{\bf A}^{o \top} {\bf A}^o$ converges to a constant matrix whose elements have the form of tail products. Hence, $\hat {\bf y}^{\rm UB}_m$ converges to ${\bf y}^o$ at a rate of $1/\sqrt{m}$, which completes the proof.
\end{proof}

Although we have proven $\hat {\bf y}^{\rm UB}_m$ is $\sqrt{m}$-consistent, the involved matrix ${\bf A}^o$ is the noise-free counterpart of $\bf A$ and is unknown in practice. Note that the available information is $\bf A$ and $\bf b$. The main idea of bias elimination is to analyze the gap between $\frac{1}{m} {\bf A}^\top {\bf A}$ and $\frac{1}{m} {\bf A}^{o \top} {\bf A}^o$ and that between $\frac{1}{m} {\bf A}^\top {\bf b}$ and $\frac{1}{m} {\bf A}^{o \top} {\bf b}$. By subtracting the gaps, we can eliminate the bias of $\hat {\bf y}_m^{\rm B}$ and achieve the solution $\hat {\bf y}_m^{\rm UB}$ asymptotically. Let
\begin{equation*}
	{\bf G}=\begin{bmatrix}
		{\bf 0}_{1 \times n}~~-2 \\
		~\vdots~~~~~~\vdots \\
		{\bf 0}_{1 \times n}~~-2
	\end{bmatrix}. 
\end{equation*}
We propose the following bias-eliminated estimate
\begin{equation} \label{bias_eliminated_estimate}
	\hat {\bf y}_m^{\rm BE}=\left( \frac{1}{m}{\bf A}^\top {\bf A} - \frac{\sigma^2}{m} {\bf G}^\top {\bf G} \right)^{-1} \left( \frac{1}{m}{\bf A}^\top {\bf b} - \frac{2 \sigma^2}{m} {\bf G}^\top {\bf d}\right).
\end{equation}

\begin{theorem} \label{consistency_of_bias_eliminated}
	The bias-eliminated estimate $\hat {\bf y}_m^{\rm BE}$ is $\sqrt{m}$-consistent, i.e., $\hat {\bf y}_m^{\rm BE} - {\bf y}^o=O_p(1/\sqrt{m})$.
\end{theorem}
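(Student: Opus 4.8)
The plan is to establish the $\sqrt{m}$-consistency of $\hat{\bf y}_m^{\rm BE}$ by reducing it to the already-proved $\sqrt{m}$-consistency of $\hat{\bf y}_m^{\rm UB}$ (Theorem~\ref{consistency_of_ideal}). Concretely, I would show that the two correction terms appearing in~\eqref{bias_eliminated_estimate} convert the computable quantities $\frac1m{\bf A}^\top{\bf A}$ and $\frac1m{\bf A}^\top{\bf b}$ into their noise-free counterparts $\frac1m{\bf A}^{o\top}{\bf A}^o$ and $\frac1m{\bf A}^{o\top}{\bf b}$, up to an $O_p(1/\sqrt{m})$ error. This forces $\hat{\bf y}_m^{\rm BE}$ and $\hat{\bf y}_m^{\rm UB}$ to differ by $O_p(1/\sqrt{m})$, and the triangle inequality with Theorem~\ref{consistency_of_ideal} then closes the argument.

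The starting point is the exact identity ${\bf A}={\bf A}^o+\mathrm{diag}({\bf r}){\bf G}$, where ${\bf r}=[r_1,\dots,r_m]^\top$, which holds because ${\bf A}$ and ${\bf A}^o$ agree except in the last column, where they differ by $-2r_i=-2(d_i-d_i^o)$. Substituting this into the products I would obtain
\begin{equation*}
\tfrac1m{\bf A}^\top{\bf A}=\tfrac1m{\bf A}^{o\top}{\bf A}^o+\tfrac1m{\bf A}^{o\top}\mathrm{diag}({\bf r}){\bf G}+\tfrac1m{\bf G}^\top\mathrm{diag}({\bf r}){\bf A}^o+\tfrac1m{\bf G}^\top\mathrm{diag}({\bf r})^2{\bf G},
\end{equation*}
\begin{equation*}
\tfrac1m{\bf A}^\top{\bf b}=\tfrac1m{\bf A}^{o\top}{\bf b}+\tfrac1m{\bf G}^\top\mathrm{diag}({\bf r}){\bf b}.
\end{equation*}
Since ${\bf G}$ has nonzero entries only in its last column, each cross term $\frac1m{\bf A}^{o\top}\mathrm{diag}({\bf r}){\bf G}$ reduces to sums of the form $\frac1m\sum_i[{\bf A}^o]_{ij}r_i$; as $[{\bf A}^o]_{ij}$ is deterministic and bounded (Assumption~\ref{compact_and_bounded}) and $r_i$ is zero-mean with finite variance, Lemma~\ref{property_of_bounded_variance} renders these terms $O_p(1/\sqrt{m})$. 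The quadratic term $\frac1m{\bf G}^\top\mathrm{diag}({\bf r})^2{\bf G}$ has a single nonzero entry $\frac4m\sum_i r_i^2$; matching it against the correction $\frac{\sigma^2}{m}{\bf G}^\top{\bf G}$ (whose only nonzero entry equals $4\sigma^2$) leaves $\frac4m\sum_i(r_i^2-\sigma^2)=O_p(1/\sqrt{m})$, again by Lemma~\ref{property_of_bounded_variance}. Hence $\frac1m{\bf A}^\top{\bf A}-\frac{\sigma^2}{m}{\bf G}^\top{\bf G}=\frac1m{\bf A}^{o\top}{\bf A}^o+O_p(1/\sqrt{m})$.

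The vector correction is the delicate part, and I expect it to be the main obstacle, since this is where the variance $\sigma^2$ must cancel \emph{exactly} rather than merely to leading order in the noise intensity. The single nonzero component of $\frac1m{\bf G}^\top\mathrm{diag}({\bf r}){\bf b}$ is $-\frac2m\sum_i r_ib_i$. Expanding $b_i=(d_i^o+r_i)^2-\|{\bf a}_i\|^2-\sigma^2$ gives $r_ib_i=r_i[(d_i^o)^2-\|{\bf a}_i\|^2-\sigma^2]+2d_i^o r_i^2+r_i^3$, where the first summand is $O_p(1/\sqrt{m})$ (bounded deterministic coefficient against zero-mean noise), the vanishing Gaussian third moment makes $\frac1m\sum_i r_i^3=O_p(1/\sqrt{m})$, and subtracting the mean $\sigma^2$ yields $\frac2m\sum_i d_i^o r_i^2=\frac{2\sigma^2}{m}\sum_i d_i^o+O_p(1/\sqrt{m})$. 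Thus $\frac1m\sum_i r_ib_i=\frac{2\sigma^2}{m}\sum_i d_i^o+O_p(1/\sqrt{m})$. The subtracted term $\frac{2\sigma^2}{m}{\bf G}^\top{\bf d}$ contributes $-\frac{4\sigma^2}{m}\sum_i d_i=-\frac{4\sigma^2}{m}\sum_i d_i^o+O_p(1/\sqrt{m})$ via $d_i=d_i^o+r_i$ and $\frac1m\sum_i r_i=O_p(1/\sqrt{m})$. The deterministic $\sigma^2$-proportional pieces then cancel precisely, leaving $\frac1m{\bf A}^\top{\bf b}-\frac{2\sigma^2}{m}{\bf G}^\top{\bf d}=\frac1m{\bf A}^{o\top}{\bf b}+O_p(1/\sqrt{m})$.

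Finally I would assemble the two estimates. Writing $\frac1m{\bf A}^{o\top}{\bf A}^o$ for the invertible limit used in the proof of Theorem~\ref{consistency_of_ideal} (invertibility following Lemma~\ref{ideal_invertible} under Assumption~\ref{deployment_of_sensor}), the inverted matrix in~\eqref{bias_eliminated_estimate} is an $O_p(1/\sqrt{m})$ perturbation of $(\frac1m{\bf A}^{o\top}{\bf A}^o)^{-1}$. By the first-order expansion of the matrix inverse together with the boundedness of $(\frac1m{\bf A}^{o\top}{\bf A}^o)^{-1}$ and of $\frac1m{\bf A}^{o\top}{\bf b}$, this gives $\hat{\bf y}_m^{\rm BE}-\hat{\bf y}_m^{\rm UB}=O_p(1/\sqrt{m})$. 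Combining with $\hat{\bf y}_m^{\rm UB}-{\bf y}^o=O_p(1/\sqrt{m})$ from Theorem~\ref{consistency_of_ideal} through the triangle inequality yields $\hat{\bf y}_m^{\rm BE}-{\bf y}^o=O_p(1/\sqrt{m})$, as claimed.
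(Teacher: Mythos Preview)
Your proposal is correct and follows essentially the same approach as the paper: decompose ${\bf A}={\bf A}^o+\Delta{\bf A}$ (your $\mathrm{diag}({\bf r}){\bf G}$ is exactly the paper's $\Delta{\bf A}$), use Lemma~\ref{property_of_bounded_variance} together with $\mathbb E[r_i^3]=0$ to show that the two correction terms convert $\frac1m{\bf A}^\top{\bf A}$ and $\frac1m{\bf A}^\top{\bf b}$ into $\frac1m{\bf A}^{o\top}{\bf A}^o$ and $\frac1m{\bf A}^{o\top}{\bf b}$ up to $O_p(1/\sqrt{m})$, and then conclude via $\hat{\bf y}_m^{\rm BE}=\hat{\bf y}_m^{\rm UB}+O_p(1/\sqrt{m})$ and Theorem~\ref{consistency_of_ideal}. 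Your treatment of the vector correction is slightly more explicit than the paper's (which routes the computation through the representation ${\bf b}={\bf A}^o{\bf y}^o+{\bm\eta}$), but the structure and key ingredients are identical.
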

\begin{proof}
	Let 
\begin{equation*}
		\Delta {\bf A}={\bf A}-{\bf A}^o=\begin{bmatrix}
			{\bf 0}_{1 \times n}~~-2r_1 \\
			\vdots~~~~~~~~\vdots \\
			{\bf 0}_{1 \times n}~~-2r_m
		\end{bmatrix}. 
	\end{equation*}
	Then we can decompose $\frac{1}{m}{\bf A}^\top {\bf A}$ as 
	\begin{align*}
		\frac{1}{m}{\bf A}^\top {\bf A} & = \frac{1}{m}({\bf A}^o+\Delta {\bf A})^\top ({\bf A}^o+\Delta {\bf A}) \\
		& = \frac{1}{m}{\bf A}^{o \top} {\bf A}^o + \frac{1}{m}\Delta {\bf A}^\top \Delta {\bf A} + O_p\left(\frac{1}{\sqrt{m}}\right) \\
		& = \frac{1}{m}{\bf A}^{o \top} {\bf A}^o + \frac{\sigma^2}{m}{\bf G}^\top {\bf G} + O_p\left(\frac{1}{\sqrt{m}}\right),
	\end{align*}
	where the second and third equalities are based on Lemma~\ref{property_of_bounded_variance}. Similarly, $\frac{1}{m}{\bf A}^\top {\bf b}$ can be decomposed as 
	\begin{align*}
		\frac{1}{m}{\bf A}^\top {\bf b} & = \frac{1}{m}({\bf A}^o+\Delta {\bf A})^\top {\bf b} \\
		& = \frac{1}{m}{\bf A}^{o \top} {\bf b} + \frac{1}{m}\Delta {\bf A}^\top  \begin{bmatrix}
			-2d_1^o\|{\bf x}^o\|-2{\bf a}_1^\top {\bf x}^o+\eta_1  \\
			\vdots \\
			-2d_m^o\|{\bf x}^o\|-2{\bf a}_m^\top {\bf x}^o+\eta_m
		\end{bmatrix}\\
		& = \frac{1}{m}{\bf A}^{o \top} {\bf b} + \frac{1}{m}\Delta {\bf A}^\top \begin{bmatrix}
			2d_1^o r_1  \\
			\vdots \\
			2d_m^o r_m
		\end{bmatrix} + O_p\left(\frac{1}{\sqrt{m}}\right) \\
		& = \frac{1}{m}{\bf A}^{o \top} {\bf b} + \frac{2 \sigma^2}{m}{\bf G}^\top {\bf d}^o + O_p\left(\frac{1}{\sqrt{m}}\right) \\
		& = \frac{1}{m}{\bf A}^{o \top} {\bf b} + \frac{2 \sigma^2}{m}{\bf G}^\top {\bf d} + O_p\left(\frac{1}{\sqrt{m}}\right),
	\end{align*}
	where ${\bf d}^o=[d_1^o,\cdots,d_m^o]^\top$, and the third, fourth, and fifth equalities are based on Lemma~\ref{property_of_bounded_variance} and the fact that $\mathbb E[r_i^3]=0$. Combing the above decomposition, we have 
\begin{align*}
	\hat {\bf y}_m^{\rm BE} & =\left( \frac{1}{m}{\bf A}^\top {\bf A} - \frac{\sigma^2}{m} {\bf G}^\top {\bf G} \right)^{-1} \left( \frac{1}{m}{\bf A}^\top {\bf b} - \frac{2 \sigma^2}{m} {\bf G}^\top {\bf d}\right) \\
	& = \left( \frac{1}{m}{\bf A}^{o \top} {\bf A}^o + O_p\left(\frac{1}{\sqrt{m}} \right) \right)^{-1} \left( \frac{1}{m}{\bf A}^{o \top} {\bf b} +O_p\left(\frac{1}{\sqrt{m}} \right) \right) \\
	& = \hat {\bf y}_m^{\rm UB} + O_p\left(\frac{1}{\sqrt{m}} \right). 
	\end{align*}
	Since $\hat {\bf y}_m^{\rm UB}$ is $\sqrt{m}$-consistent, so is $\hat {\bf y}_m^{\rm BE}$, which completes the proof. 
\end{proof}

Note that in the bias-eliminated solution~\eqref{bias_eliminated_estimate}, we require prior knowledge of the noise variance. When the noise variance $\sigma^2$ is not available, we need to use an estimated value. The following corollary is an extension of Theorem~\ref{consistency_of_bias_eliminated}.
\begin{corollary}\label{estimated_noise}
	The following bias-eliminated estimate $\hat {\bf y}_m^{\rm BE}$ is still $\sqrt{m}$-consistent if $\hat \sigma^2_m$ is a $\sqrt{m}$-consistent estimate for $\sigma^2$:
\begin{equation} \label{A_bias_eliminated_estimate}
	\hat {\bf y}_m^{\rm BE}=\left( \frac{1}{m}{\bf A}^\top {\bf A} - \frac{\hat \sigma^2_m}{m} {\bf G}^\top {\bf G} \right)^{-1} \left( \frac{1}{m}{\bf A}^\top {\bf b}(\hat \sigma^2_m) - \frac{2 \hat \sigma^2_m}{m} {\bf G}^\top {\bf d}\right),
	\end{equation}
	where ${\bf b}(\hat \sigma^2_m)$ is the vector obtained by replacing $\sigma^2$ with $\hat \sigma^2_m$ in ${\bf b}$.
\end{corollary}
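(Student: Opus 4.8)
The plan is to reduce the statement to Theorem~\ref{consistency_of_bias_eliminated} by treating the replacement of $\sigma^2$ with $\hat\sigma^2_m$ as an $O_p(1/\sqrt{m})$ perturbation that is absorbed into the error terms already present in that proof. Writing $\delta_m := \hat\sigma^2_m-\sigma^2$, the $\sqrt{m}$-consistency hypothesis gives $\delta_m=O_p(1/\sqrt{m})$. The strategy is to expand each of the three places where $\hat\sigma^2_m$ enters~\eqref{A_bias_eliminated_estimate} about the true $\sigma^2$ and verify that every correction term is the product of $\delta_m$ with an $O_p(1)$ factor, hence $O_p(1/\sqrt{m})$.

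First I would handle the matrix factor. Since $\tfrac1m{\bf G}^\top{\bf G}$ is a constant matrix whose only nonzero entry is the bottom-right one (equal to $4$), I can write
\begin{equation*}
\frac1m{\bf A}^\top{\bf A}-\frac{\hat\sigma^2_m}{m}{\bf G}^\top{\bf G}=\frac1m{\bf A}^\top{\bf A}-\frac{\sigma^2}{m}{\bf G}^\top{\bf G}-\delta_m\Bigl(\frac1m{\bf G}^\top{\bf G}\Bigr),
\end{equation*}
where the last term is $O_p(1/\sqrt{m})$. Invoking the matrix decomposition established inside the proof of Theorem~\ref{consistency_of_bias_eliminated}, the first two terms equal $\tfrac1m{\bf A}^{o\top}{\bf A}^o+O_p(1/\sqrt{m})$, so the whole factor is $\tfrac1m{\bf A}^{o\top}{\bf A}^o+O_p(1/\sqrt{m})$.

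Next I would treat the vector factor. Because ${\bf b}(\hat\sigma^2_m)={\bf b}-\delta_m{\bf 1}_m$, where ${\bf b}$ uses the true variance and ${\bf 1}_m$ is the $m$-dimensional all-ones vector, I can write
\begin{align*}
\frac1m{\bf A}^\top{\bf b}(\hat\sigma^2_m)-\frac{2\hat\sigma^2_m}{m}{\bf G}^\top{\bf d}={}&\frac1m{\bf A}^\top{\bf b}-\frac{2\sigma^2}{m}{\bf G}^\top{\bf d} \\
&-\delta_m\Bigl(\frac1m{\bf A}^\top{\bf 1}_m+\frac2m{\bf G}^\top{\bf d}\Bigr).
\end{align*}
The bracketed quantity is $O_p(1)$: the entries of $\tfrac1m{\bf A}^\top{\bf 1}_m$ and $\tfrac1m{\bf G}^\top{\bf d}$ are sample means of the bounded sensor coordinates and of the measurements $d_i$, which have bounded variance about a bounded mean under Assumptions~\ref{assumption_1} and~\ref{compact_and_bounded}, so both converge and the correction is again $O_p(1/\sqrt{m})$. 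By the same decomposition used in Theorem~\ref{consistency_of_bias_eliminated}, the leading part equals $\tfrac1m{\bf A}^{o\top}{\bf b}+O_p(1/\sqrt{m})$.

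Combining the two displays, and using that $\tfrac1m{\bf A}^{o\top}{\bf A}^o$ converges to a deterministic invertible limit (Lemma~\ref{ideal_invertible}) so that the perturbed Gram matrix stays invertible with probability tending to one and its inverse is $O_p(1)$ and stable under $O_p(1/\sqrt{m})$ perturbations, I would conclude $\hat{\bf y}_m^{\rm BE}=\hat{\bf y}_m^{\rm UB}+O_p(1/\sqrt{m})$ exactly as in the final manipulation of Theorem~\ref{consistency_of_bias_eliminated}. Since $\hat{\bf y}_m^{\rm UB}$ is $\sqrt{m}$-consistent by Theorem~\ref{consistency_of_ideal}, so is $\hat{\bf y}_m^{\rm BE}$. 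The only point requiring genuine care — the main obstacle, modest as it is — is the bookkeeping certifying that each coefficient multiplying $\delta_m$ is truly $O_p(1)$ and that the perturbed matrix remains uniformly invertible; once that is in place the conclusion is immediate from the previous theorem.
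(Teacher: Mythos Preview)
Your proposal is correct and follows essentially the same route as the paper: the paper's proof simply says it is the same as that of Theorem~\ref{consistency_of_bias_eliminated} once one notes that $\frac{\hat\sigma^2_m}{m}{\bf G}^\top{\bf G}-\frac{\sigma^2}{m}{\bf G}^\top{\bf G}$, $\frac{\hat\sigma^2_m}{m}{\bf G}^\top{\bf d}-\frac{\sigma^2}{m}{\bf G}^\top{\bf d}$, and $\frac1m{\bf A}^\top{\bf b}(\hat\sigma^2_m)-\frac1m{\bf A}^\top{\bf b}$ are each $O_p(1/\sqrt{m})$, which is exactly your $\delta_m$-perturbation argument spelled out in detail.
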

The proof is the same as that of Theorem~\ref{consistency_of_bias_eliminated} by utilizing the facts that $\frac{\hat \sigma^2_m}{m} {\bf G}^\top {\bf G}-\frac{\sigma^2}{m} {\bf G}^\top {\bf G}=O_p(1/{\sqrt{m}})$, $\frac{\hat \sigma^2_m}{m} {\bf G}^\top {\bf d}-\frac{\sigma^2}{m} {\bf G}^\top {\bf d}=O_p(1/{\sqrt{m}})$, and $\frac{1}{m}{\bf A}^\top {\bf b}(\hat \sigma^2_m)-\frac{1}{m}A^\top {\bf b}=O_p(1/{\sqrt{m}})$.

\section{Consistent Noise Variance Estimation}
According to Corollary~\ref{estimated_noise}, if a $\sqrt{m}$-consistent estimate of noise variance $\hat \sigma^2_m$ is available, we can use this estimate to conduct the bias elimination. In this section, we will devise such a $\hat \sigma^2_m$. Define
\begin{equation} \label{tilde_A}
	\tilde {\bf A}=\begin{bmatrix}
		-2{\bf a}_1^\top & 1 & -2d_1  & d_1^2-\|{\bf a}_1\|^2\\
		\vdots      & \vdots & \vdots & \vdots\\
		-2{\bf a}_m^\top & 1 & -2d_m & d_m^2-\|{\bf a}_m\|^2
	\end{bmatrix}, 
\end{equation}
and 
\begin{equation} \label{S_z}
\begin{split}
	{\bf S}(z) & =\begin{bmatrix}
{\bf 0}_{(n+1) \times (n+1)} & {\bf 0}_{(n+1) \times 2} \\
{\bf 0}_{2 \times (n+1)}  & {\bf S}_{22}(z)
\end{bmatrix}, \\
{\bf S}_{22}(z) & =\begin{bmatrix}
4z & -4 \bar d z \\
-4 \bar d z & 4 \overbar {d^2} z-2z^2
\end{bmatrix},
\end{split}
\end{equation}
where 
$\bar d=\sum_{i=1}^{m} d_i/m$ and $\overbar {d^2}=\sum_{i=1}^{m} d_i^2/m$. Further, let ${\bf Q}={\tilde {\bf A}}^\top {\tilde {\bf A}}/m$. If the sensors ${\bf a}_i$'s are not collinear (resp. coplanar) in the 2D (resp. 3D) case, the first $n+1$ columns of $\tilde {\bf A}$ are linearly independent. Further, note that the last two columns of $\tilde {\bf A}$ consist of random noises. As a result, $\tilde {\bf A}$ has full column rank almost surely if there are at least $n+3$ sensors that are not collinear (resp. coplanar) in the 2D (resp. 3D) case. 
Hence, given Assumption~\ref{deployment_of_sensor2}$(ii)$, there exists $m_0>0$ such that the matrix $\bf Q$ is positive definite for any $m \geq m_0$. Since $\bf Q$ is positive definite, the eigenvalues of ${\bf Q}^{-1}{\bf S}(z)$ are all real numbers. Let $\lambda_{\rm max}({\bf Q}^{-1}{\bf S}(z))$ denote the largest eigenvalue of ${\bf Q}^{-1}{\bf S}(z)$. The following theorem gives a $\sqrt{m}$-consistent estimator for the noise variance. 
\begin{theorem} \label{consistent_noise_est}
	The set $\mathcal Z=\{z\in\mathbb R \mid \lambda_{\rm max}({\bf Q}^{-1} {\bf S}(z))=1 \}$ is non-empty. Let $\hat \sigma^2_m = {\rm min} ~\mathcal Z$, then $\hat \sigma^2_m-\sigma^2=O_p(1/\sqrt{m})$.
\end{theorem}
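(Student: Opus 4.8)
The plan is to recast the defining equation $\lambda_{\rm max}({\bf Q}^{-1}{\bf S}(z))=1$ as a semidefiniteness-plus-singularity condition on the matrix pencil ${\bf Q}-{\bf S}(z)$, identify $\sigma^2$ as the smallest root in the limit, and then transfer this to the finite-sample claim by a transversality/perturbation argument. Since ${\bf Q}\succ 0$ for $m\geq m_0$, congruence by ${\bf Q}^{-1/2}$ gives $\lambda_{\rm max}({\bf Q}^{-1}{\bf S}(z))=\max_{{\bf v}\neq 0}({\bf v}^\top {\bf S}(z){\bf v})/({\bf v}^\top {\bf Q}{\bf v})$, so the event $\lambda_{\rm max}({\bf Q}^{-1}{\bf S}(z))=1$ is equivalent to ``${\bf Q}-{\bf S}(z)\succeq 0$ and $\det({\bf Q}-{\bf S}(z))=0$''. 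Hence $g(z):=\lambda_{\rm max}({\bf Q}^{-1}{\bf S}(z))$ is continuous with $g(0)=0$ (as ${\bf S}(0)={\bf 0}$), $\mathcal Z=g^{-1}(\{1\})$ is closed, and it suffices to locate its smallest element.

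The central object is the extended vector ${\bf w}^o:=[{{\bf x}^o}^\top,\sigma^2,\|{\bf x}^o\|,-1]^\top$; substituting the squared model~\eqref{squared_range_difference} yields the key identity $\tilde{\bf A}{\bf w}^o=-{\bm\epsilon}$, with $\epsilon_i=e_i-\sigma^2$ of zero mean. First I would compute the limiting quadratic form of the pencil. For a generic ${\bf v}=[{\bf u}^\top,v_{n+1},v_{n+2},v_{n+3}]^\top$, writing $h_i:=(\tilde{\bf A}^o{\bf v})_i$ for its noise-free row value and $\beta_i:=-2v_{n+2}+2d_i^o v_{n+3}$, one has $(\tilde{\bf A}{\bf v})_i=h_i+\beta_i r_i+v_{n+3}r_i^2$. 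Taking the limit of $\frac1m\sum_i(\tilde{\bf A}{\bf v})_i^2$ with the Gaussian moments $\mathbb E[r_i^2]=\sigma^2,\ \mathbb E[r_i^3]=0,\ \mathbb E[r_i^4]=3\sigma^4$ and comparing against ${\bf v}^\top{\bf S}(z){\bf v}$, the algebra should collapse --- this is exactly what the entries of ${\bf S}(z)$ (the $-2z^2$ term and the bookkeeping $\overline{d^2}\to\overline{(d^o)^2}+\sigma^2$) are engineered to produce --- to
\begin{equation*}
{\bf v}^\top({\bf Q}^\infty-{\bf S}^\infty(z)){\bf v}=\overline{h^2}+(\sigma^2-z)\overline{\beta^2}+2\sigma^2\bar h\,v_{n+3}+(3\sigma^4-4\sigma^2 z+2z^2)v_{n+3}^2,
\end{equation*}
where bars denote tail averages. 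At $z=\sigma^2$ this becomes $(\overline{h^2}-\bar h^2)+(\bar h+\sigma^2 v_{n+3})^2\geq 0$ by Cauchy--Schwarz, while the identities $(d_i^o)^2-\|{\bf a}_i\|^2=-2d_i^o\|{\bf x}^o\|-2{\bf a}_i^\top{\bf x}^o$ and $\|{\bf x}^o\|+d_i^o=\|{\bf a}_i-{\bf x}^o\|$ give $h_i\equiv\sigma^2$ along ${\bf w}^o$, so ${\bf w}^o$ attains equality. Thus ${\bf Q}^\infty-{\bf S}^\infty(\sigma^2)\succeq 0$ and is singular, i.e.\ $g^\infty(\sigma^2)=1$.

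To see $\sigma^2$ is the smallest root in the limit I would use the factorization ${\bf v}^\top({\bf Q}^\infty-{\bf S}^\infty(z)){\bf v}={\bf v}^\top({\bf Q}^\infty-{\bf S}^\infty(\sigma^2)){\bf v}+(\sigma^2-z)[\overline{\beta^2}+2(\sigma^2-z)v_{n+3}^2]$. For $z\in[0,\sigma^2)$ both added terms are nonnegative, and vanishing forces $\overline{\beta^2}=0$, $v_{n+3}=0$, and then $\overline{h^2}=0$, i.e.\ the ${\bf a}_i$ concentrate on a line/plane, excluded by Assumption~\ref{deployment_of_sensor2}(ii); hence ${\bf Q}^\infty-{\bf S}^\infty(z)\succ 0$ and $g^\infty(z)<1$ there. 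Evaluating the form along ${\bf w}^o$ shows it turns negative just above $\sigma^2$, so $g^\infty>1$ there, which with $g^\infty(0)=0$ also yields (after perturbation) non-emptiness of $\mathcal Z$ via the intermediate value theorem. A separate computation of the kernel of ${\bf Q}^\infty-{\bf S}^\infty(\sigma^2)$, invoking Assumption~\ref{deployment_of_sensor} (no quadric carries full measure, so an affine function of ${\bf a}$ plus a multiple of $\|{\bf a}-{\bf x}^o\|$ is constant only trivially), shows it is spanned by ${\bf w}^o$; thus the generalized eigenvalue $1$ is simple.

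Finally, for the $\sqrt m$ rate I would combine perturbation estimates with this transversality. Since $\bar d,\overline{d^2}$ and the noise averages in ${\bf Q}$ fluctuate at rate $O_p(1/\sqrt m)$ about their tail limits (Lemma~\ref{property_of_bounded_variance}), one gets ${\bf Q}={\bf Q}^\infty+O_p(1/\sqrt m)$ and ${\bf S}(z)={\bf S}^\infty(z)+O_p(1/\sqrt m)$ uniformly near $\sigma^2$, hence $g(z)=g^\infty(z)+O_p(1/\sqrt m)$ there. Because $1$ is simple, first-order perturbation makes $g^\infty$ differentiable at $\sigma^2$ with slope $(g^\infty)'(\sigma^2)={\bf w}^{o\top}[\tfrac{d}{dz}{\bf S}^\infty(\sigma^2)]{\bf w}^o/({\bf w}^{o\top}{\bf Q}^\infty{\bf w}^o)=\overline{\beta^2}/({\bf w}^{o\top}{\bf Q}^\infty{\bf w}^o)>0$, so linearizing $g(\hat\sigma^2_m)=1$ about $\sigma^2$ gives $(g^\infty)'(\sigma^2)(\hat\sigma^2_m-\sigma^2)+o(\hat\sigma^2_m-\sigma^2)=O_p(1/\sqrt m)$, i.e.\ $\hat\sigma^2_m-\sigma^2=O_p(1/\sqrt m)$. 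I expect the main obstacle to be the limiting-quadratic-form computation of the second paragraph: verifying that the specific entries of ${\bf S}(z)$ cancel every $O(1)$ noise contribution in $\frac1m\tilde{\bf A}^\top\tilde{\bf A}$ down to the displayed clean form requires tracking fourth-order Gaussian moments and repeatedly applying the two geometric identities, and it is this exact cancellation --- not a soft argument --- that certifies $\sigma^2$ as a root; a secondary difficulty is upgrading the limiting simple-root picture to the genuine $O_p(1/\sqrt m)$ rate, which hinges on the crossing being transversal (simplicity, via Assumption~\ref{deployment_of_sensor}) so the implicit-function linearization is valid.
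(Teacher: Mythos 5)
Your proposal is correct and its skeleton matches the paper's: you recast $\lambda_{\rm max}({\bf Q}^{-1}{\bf S}(z))=1$ as ``${\bf Q}-{\bf S}(z)\succeq 0$ and singular'' (the paper's Lemma~\ref{lemma_largest_eig}), identify $\sigma^2$ as the smallest root of the limiting pencil, and transfer to finite samples by an $O_p(1/\sqrt m)$ perturbation. The execution of the two middle steps differs in instructive ways. For the identification of $\sigma^2$, the paper uses the matrix identity $\tilde{\bf A}^\top\tilde{\bf A}/m=\tilde{\bf A}^{o\top}\tilde{\bf A}^o/m+{\bf S}(\sigma^2)+O_p(1/\sqrt m)$ and observes that the last column of $\tilde{\bf A}^o$ is a linear combination of the others, so ${\bf C}_\infty(\sigma^2)$ is a singular PSD Gram limit; for $z<\sigma^2$ it shows positive definiteness via the auxiliary matrix ${\bf R}(\sigma^2+z)\succ 0$ and the interpolation ${\bf C}_\infty(\sigma^2)+k{\bf S}_\infty(\sigma^2)=(1-k){\bf C}_\infty(\sigma^2)+k{\bf Q}_\infty$. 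You instead compute the limiting quadratic form explicitly via Gaussian moments, obtaining $\overline{h^2}+(\sigma^2-z)\overline{\beta^2}+2\sigma^2\bar h v_{n+3}+(3\sigma^4-4\sigma^2z+2z^2)v_{n+3}^2$, which I have checked is correct; this makes the PSD-plus-singularity at $z=\sigma^2$ a Cauchy--Schwarz identity with kernel vector ${\bf w}^o$, and the strict positivity for $z<\sigma^2$ an immediate consequence of your decomposition plus Assumption~\ref{deployment_of_sensor2}(ii). Your route is more computational but more transparent about where each assumption enters. For the $\sqrt m$ rate, you are actually more careful than the paper: the paper asserts only that the minimal root is a continuous function of $({\bf Q},{\bf S}(\cdot))$, which by itself gives consistency but not the rate, whereas your transversality argument (simplicity of the generalized eigenvalue $1$ via Assumption~\ref{deployment_of_sensor}, and slope $\overline{\beta^2}/({\bf w}^{o\top}{\bf Q}^\infty{\bf w}^o)>0$) supplies exactly the Lipschitz invertibility needed to convert an $O_p(1/\sqrt m)$ perturbation of the matrices into an $O_p(1/\sqrt m)$ perturbation of the root. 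One minor shortfall: your non-emptiness of $\mathcal Z$ is obtained ``after perturbation'' and so holds only with probability tending to one, whereas the paper gets it deterministically for every $m\geq m_0$ by noting that the eigenvalue $\lambda_u$ of ${\bf S}_{22}(z)$ tends to $+\infty$ (take ${\bf v}={\bf e}_{n+2}$, for which ${\bf v}^\top{\bf S}(z){\bf v}=4z\to\infty$), so ${\bf C}(z)$ is eventually indefinite and the intermediate value theorem applies directly to the finite-sample $g$; you should patch your argument with this finite-sample observation.
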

The proof of Theorem~\ref{consistent_noise_est} is presented in Appendix~\ref{proof_consistent_noise_est}.

Now we explicitly illustrate how to calculate the root of $\lambda_{\rm max}( {\bf Q}^{-1}  {\bf S}(z))=1$. We express $\bf Q$ and its inverse as
\begin{equation*}
	{\bf Q} = \begin{bmatrix}
		{\bf Q}_{11} &  {\bf Q}_{12} \\
		{\bf Q}_{21} &  {\bf Q}_{22}
	\end{bmatrix},~~
	{\bf Q}^{-1} = \begin{bmatrix}
		* & * \\
		* & ( {\bf Q}/{\bf Q}_{22})^{-1}
	\end{bmatrix},
\end{equation*}
where $ {\bf Q}_{11} \in \mathbb R^{(n+1) \times (n+1)}$, $ {\bf Q}_{22} \in \mathbb R^{2 \times 2}$, and ${\bf Q}/{\bf Q}_{22}={\bf Q}_{22}- {\bf Q}_{21}{\bf Q}_{11}^{-1} {\bf Q}_{12}$ is the Schur complement of ${\bf Q}_{22}$ of $\bf Q$. Utilizing the structure of $ {\bf S}(z)$, we obtain
\begin{equation*}
	{\rm det}\left(\lambda {\bf I}_{n+3}- {\bf Q}^{-1}  {\bf S}(z)\right)=\lambda^{n+1} {\rm det} \left(\lambda {\bf I}_2-( {\bf Q}/{\bf Q}_{22})^{-1} {\bf S}_{22}(z) \right).
\end{equation*}
Therefore, $\lambda_{\rm max}( {\bf Q}^{-1}  {\bf S}(z))$ is the largest root of the following polynomial with variable $\lambda$:
\begin{equation} \label{implicit_poly}
	{\rm det} \left(\lambda ( {\bf Q}/{\bf Q}_{22}) - {\bf S}_{22}(z) \right)=0.
\end{equation}
Note that $ {\bf Q}/{\bf Q}_{22}$ is symmetric and we express it as
\begin{equation} \label{calculated_matrix}
	{\bf Q}/{\bf Q}_{22} = \begin{bmatrix}
		q_1 & q_2 \\
		q_2 & q_3 
	\end{bmatrix} \in \mathbb R^{2 \times 2}.
\end{equation}
Let $c_1=q_1 q_3-q_2^2$, $c_2=4q_1 \overbar{d^2}+4q_3+8q_2 \bar {d}$, and $c_3=16\overbar{d^2} -16 \bar{d}$. Equation~\eqref{implicit_poly} yields
\begin{equation} \label{explicit_poly}
	c_1 \lambda^2 +(2q_1 z^2-c_2 z) \lambda-8z^3+c_3 z^2=0.
\end{equation}
By setting the largest root of~\eqref{explicit_poly} as $1$, we obtain the following equation:
\begin{equation} \label{larger_root}
	\frac{c_2 z-2q_1 z^2+\sqrt{\Delta}}{2c_1}=1,
\end{equation}
where $\Delta=(2q_1 z^2-c_2 z)^2-4c_1(c_3 z^2-8z^3)$. Equivalently, we can resort to solving the following $3$-order polynomial and choose the solutions that satisfy $2c_1+2q_1 z^2-c_2 z>0$:
\begin{equation} \label{final_poly}
	32c_1 z^3 -(4c_1 c_3+8q_1 c_1)z^2 +4c_1 c_2 z -4 c_1^2 =0.
\end{equation}
Note that the set of the roots of~\eqref{final_poly} that satisfy $2c_1+2q_1 z^2-c_2 z>0$ is $\mathcal Z$. According to Theorem~\ref{consistent_noise_est}, $\mathcal Z$ is non-empty. Finally, the noise variance estimation is set as $\hat \sigma^2_m={\rm min} ~\mathcal Z$. The noise variance estimation procedure is summarized in Algorithm~\ref{pseudo_noise_estimation}. 

The whole procedure of the proposed two-step estimator is summarized in Algorithm~\ref{pseudo_whole_algorithm}. In summary, our proposed estimator includes three procedures---noise variance estimation, bias-eliminated solution calculation~\eqref{A_bias_eliminated_estimate}, and a one-step GN iteration~\eqref{GN_iteration}. Now we analyze the time complexity of our algorithm. In Algorithm~\ref{pseudo_noise_estimation}, note that $\bar d=\sum_{i=1}^{m} d_i/m$, $\overbar {d^2}=\sum_{i=1}^{m} d_i^2/m$, and $\tilde {\bf A} \in \mathbb R^{m \times (n+3)}$. Hence, Line 1 and Line 2 have $O(m)$ (linear) time complexity. For Lines 3, 4, and 5, since ${\bf Q} \in \mathbb R^{(n+3) \times (n+3)}$ and~\eqref{final_poly} is a cubic equation in one variable, they cost $O(1)$ (constant) time. As a result, the whole time complexity of Algorithm~\ref{pseudo_noise_estimation} is $O(m)$. In Algorithm~\ref{pseudo_whole_algorithm}, Line 1 executes Algorithm~\ref{pseudo_noise_estimation}, which has $O(m)$ time complexity. Note that ${\bf A},{\bf G} \in \mathbb R^{m \times (n+1)}$ and ${\bf b},{\bf d} \in \mathbb R^{m \times 1}$. Hence, the time complexity of Line 2 is $O(m)$. Similarly, the time complexity of Line 4 is also $O(m)$. Finally, Line 3 costs $O(1)$ time. Therefore, Algorithm~\ref{pseudo_whole_algorithm} (the whole algorithm) has overall $O(m)$ time complexity, which is appealing in the large sample case.

\begin{algorithm}
	\caption{ Noise variance estimation}
	\label{pseudo_noise_estimation}
	\begin{algorithmic}[1]
		\State Calculate $\bar d$ and $\overbar {d^2}$ in~\eqref{S_z};
		\State Calculate $ {\bf Q}={\tilde {\bf A}}^\top {\tilde {\bf A}}/m$ based on~\eqref{tilde_A};
		\State Calculate $ {\bf Q}/{\bf Q}_{22}$ in~\eqref{calculated_matrix} and $c_1$, $c_2$, $c_3$ in~\eqref{explicit_poly};
		\State Solve~\eqref{final_poly} and collect the roots that satisfy $2c_1+2q_1 z^2-c_2 z>0$ as $\mathcal Z$;
		\State Set $\hat \sigma^2_m={\rm min} ~\mathcal Z$.
	\end{algorithmic}
\end{algorithm}

\begin{algorithm}
	\caption{Consistent and asymptotically efficient estimator}
	\label{pseudo_whole_algorithm}
	\begin{algorithmic}[1]
		\Statex {\bf Input:} sensor coordinates $({\bf a}_i)_{i=1}^{m}$ and range-difference measurements $(d_i)_{i=1}^{m}$.
		\Statex {\bf Output:} source location estimate $\hat {\bf x}^{\rm GN}_m$.
		\State Apply Algorithm~\ref{pseudo_noise_estimation} to obtain an estimate $\hat \sigma^2_m$ of the noise variance $ \sigma^2$;
		\State Calculate the bias-eliminated estimate $\hat {\bf y}^{\rm BE}_m$ according to~\eqref{A_bias_eliminated_estimate};
		\State Set $\hat {\bf x}^{\rm BE}_m=[\hat {\bf y}^{\rm BE}_m]_{1:n}$;
		\State Apply a one-step GN iteration~\eqref{GN_iteration} to obtain $\hat {\bf x}^{\rm GN}_m$.
	\end{algorithmic}
\end{algorithm}

\section{Simulations} \label{simulation}
In this section, we perform simulations to verify our theoretical developments. The algorithms compared with ours (denoted as Bias-Eli) include: (a) CLS: non-convex optimization-based method that optimally solves the spherical least-squares problem~\cite{zeng2022localizability}; (b) GTRS-MPR: modified polar representation solved by generalized trust region subproblem (GTRS) algorithm~\cite{sun2018solution}; (c) BiasRed: bias-reduced solution based on expected bias approximation~\cite{ho2012bias}. At each sensor configuration, the bias and root mean square error (RMSE) of an estimator $\hat {\bf x}$ are approximated as 
\begin{align*}
	\Delta(\hat {\bf x}) &
	=
	\left| \frac{1}{N}\sum_{j=1}^{N}\left( {\hat {\bf x}(\omega_j) - {\bf x}^o}\right)\right|, ~~ {\rm Bias}(\hat {\bf x}) \approx \sum_{i=1}^n [\Delta(\hat {\bf x})]_i,\\
	{\rm RMSE}(\hat {\bf x}) &
	\approx
	\sqrt{\frac{1}{N}\sum_{j=1}^{N}{\left\| {\hat {\bf x}(\omega_j) - {\bf x}^o} \right\|^2}},
\end{align*}
where $N$ is the number of Monte Carlo experiments over measurement noises, and $\hat {\bf x}(\omega_j)$ is the estimate obtained in the $j$-th Monte Carlo test.
For RMSE, we take the root-CRLB (RCRLB) as its baseline.

\subsection{Uniformly distributed sensors}
In this subsection, as in Example~\ref{example_random_vector}, we generate sensors uniformly distributed on the surface of a cube whose center is the origin and edges are $100$m (the reference sensor locates in the origin). The coordinates of the source is ${\bf x}^o=[15,15,15]^\top$, and the standard deviation of Gaussian noises is $\sigma=10$. We set the number of sensors $m$ as $10$, $30$, $100$, $300$, $1000$, and $3000$ respectively, and for each $m$, we run $1000$ Monte Carlo tests to evaluate biases and RMSEs. In each Monte Carlo test, sensor positions and measurement noises are randomly generated.
Now we verify the assumptions proposed in Section~\ref{ML_estimator}. The Gaussian noise assumption (Assumption~\ref{assumption_1}) and the compact and bounded assumption (Assumption~\ref{compact_and_bounded}) are straightforward from this setting. Since the sensors are uniformly generated on the surface of a cube, its sample distribution converges to a uniform distribution over this cube surface whose probability density function is $f_{\mu}({\bf a})=\frac{1}{6 \times 100^2}$ for $\bf a$ on the cube surface, and $f_{\mu}({\bf a})=0$, otherwise. Hence, Assumption~\ref{convergence_of_sample_distribution} holds. Note that the cube surface is not a quadric surface. Therefore, Assumption~\ref{deployment_of_sensor} holds, which implies Assumption~\ref{deployment_of_sensor2}. Therefore, all of the required assumptions have been verified.  

Next, we present the simulation results. From Table~\ref{sigma_estimate} we see that our noise variance estimator is $\sqrt{m}$-consistent. 
The biases under varying $m$ are presented in Fig.~\ref{bias_vs_m}. We see that the proposed Bias-Eli estimator is biased when $m$ is relatively small, while it becomes less biased as $m$ increases. This is because $\frac{\hat \sigma_m^2}{m} {\bf G}^\top {\bf G}$ in~\eqref{bias_eliminated_estimate} characterizes the gap between $\frac{1}{m}{\bf A}^\top {\bf A}$ and $\frac{1}{m}{\bf A}^{o \top} {\bf A}^o$ consistently only in the asymptotic case; it may not be precise when the measurement number is small. So does $\frac{2 \hat \sigma_m^2}{m} {\bf G}^\top {\bf d}$. For the other estimators, their biases converge to nonzero values, which makes them not consistent (as shown in Fig.~\ref{rmse_random_sensor}). 

\begin{table}[htbp]
	\centering
	\caption{RMSE of noise variance estimate. } \label{sigma_estimate}
	\resizebox{0.56\columnwidth}{!}{%
  \begin{tabular}{c c c c c c}
			\hline\hline
			$m=10$ & $m=30$ & $m=100$ & $m=300$ & $m=1000$ & $m=3000$ \\
			\hline
			63.1336  &  30.9802  &  16.5915  &  9.5205  &  5.4593  &  3.0107 \\
			\hline
		\end{tabular}
	}
\end{table}

\begin{figure*}[!t]
	\centering
	\begin{minipage}[b]{.48\textwidth}
		\centering
		\includegraphics[width=0.96\textwidth]{./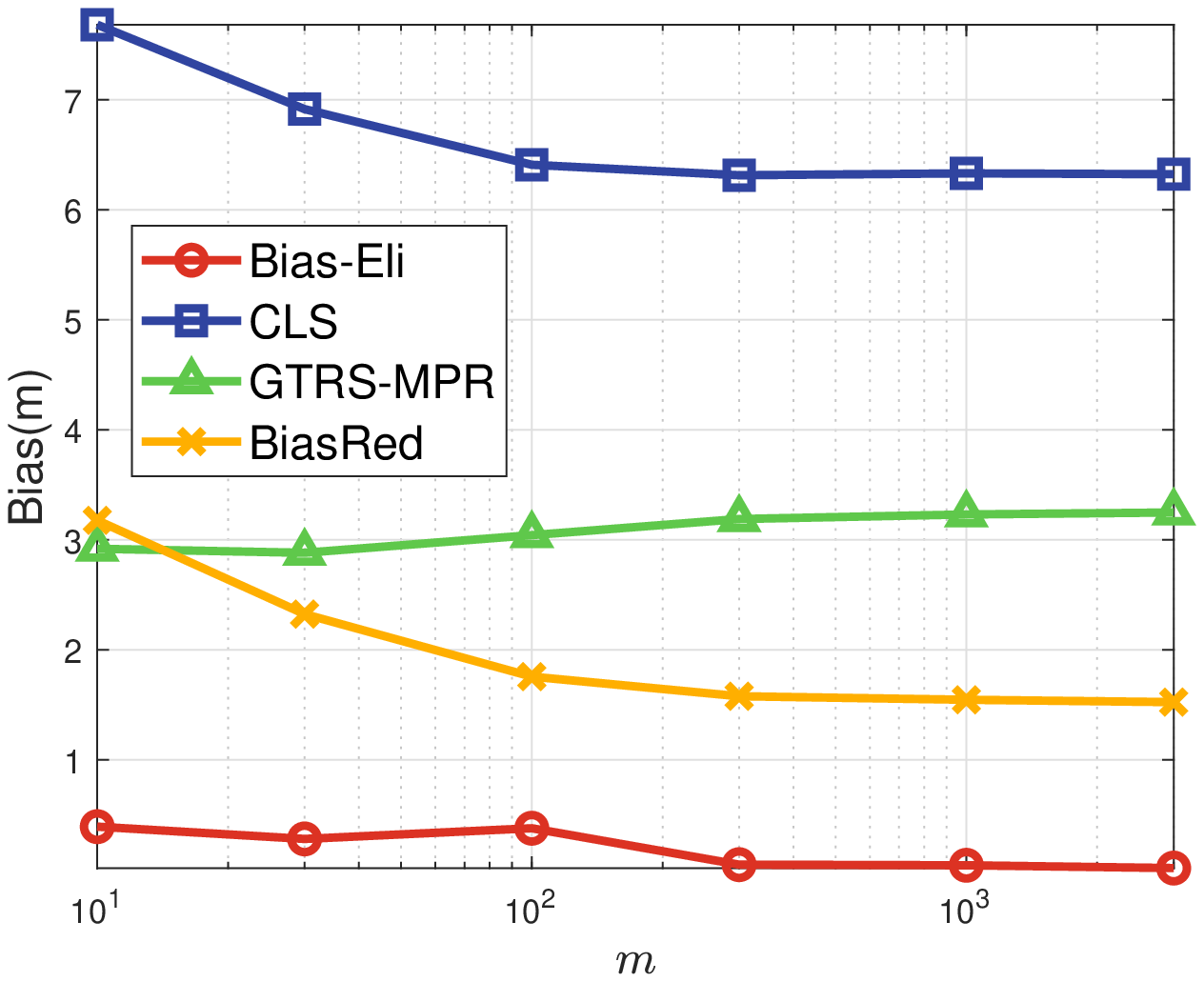}
		\caption{Bias comparison in uniformly distributed case.}
		\label{bias_vs_m}
	\end{minipage}\qquad
	\begin{minipage}[b]{.48\textwidth}
		\centering
		\includegraphics[width=0.96\textwidth]{./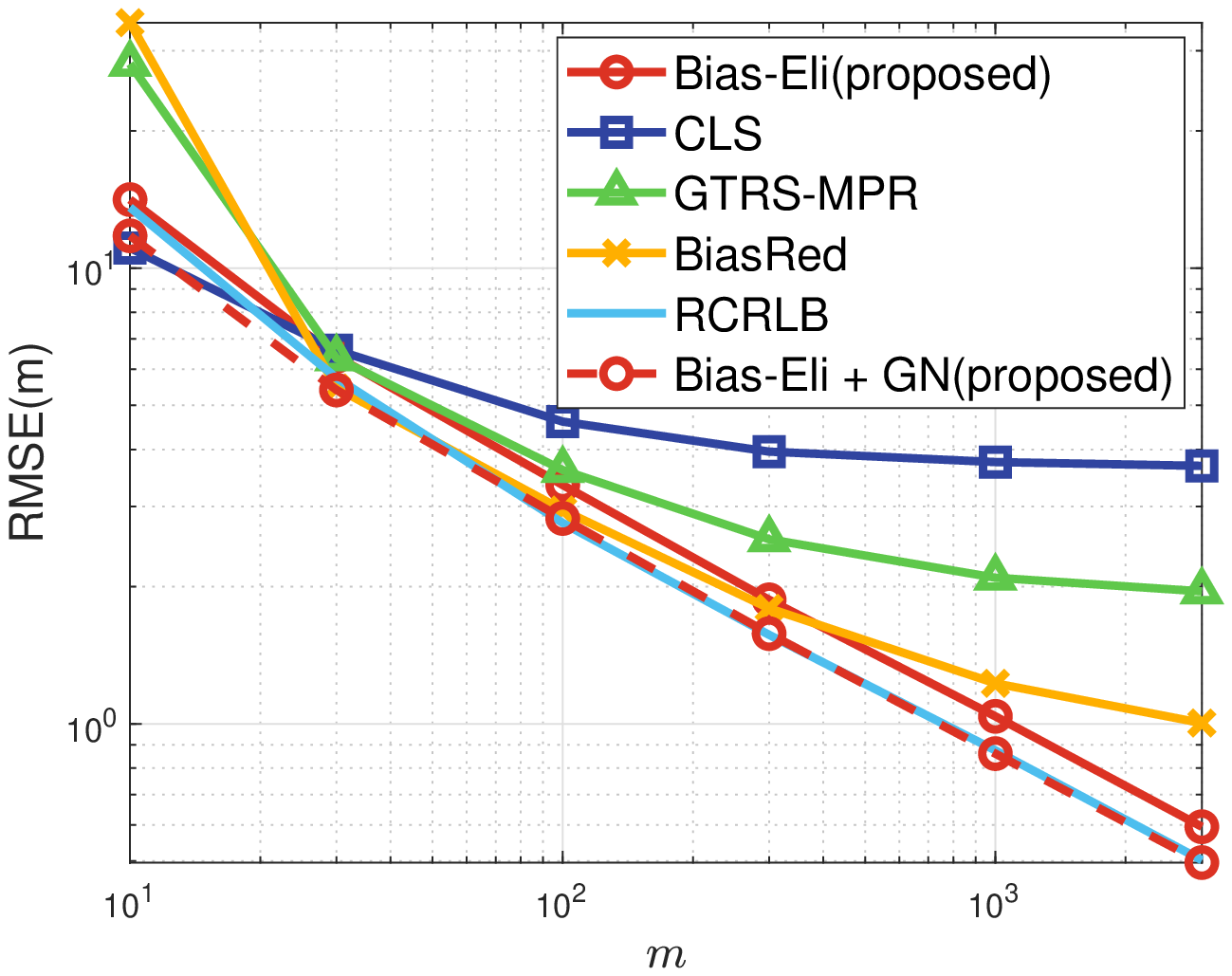}
		\caption{RMSE comparison in uniformly distributed case.}
		\label{rmse_random_sensor}
	\end{minipage}
 \end{figure*}

The RMSEs of the compared estimators under varying $m$ are plotted in Fig.~\ref{rmse_random_sensor}. Since the proposed Bias-Eli solution is asymptotically unbiased, it is consistent, exhibiting a continuously decreasing straight line in the double logarithmic coordinate system. Moreover, the proposed two-step estimator which executes a one-step GN iteration on the basis of the Bias-Eli solution can achieve the RCRLB asymptotically, verifying the claim in Lemma~\ref{theorem_two_step}. 
As has been discussed, the other estimators are not asymptotically unbiased. As a result, they are not consistent, and their RMSEs will be dominated by the asymptotic biases in the large sample case.   


\subsection{Position-fixed sensors}
In the above subsection, we suppose the sensors are uniformly distributed in a region as Example~\ref{example_random_vector} illustrates. In this subsection, we adopt the scheme in Example~\ref{example_fix_sensors} and fix $10$ sensors at 
\begin{equation}\label{10_fixed_sensors}
    \begin{split}
        	{\bf a}_1&=[50,0,50]^\top,~{\bf a}_2=[50,50,-50]^\top, ~{\bf a}_3=[50,-50,50]^\top, \\
	{\bf a}_4&=[50,0,0]^\top,~{\bf a}_5=[50,50,50]^\top, ~{\bf a}_6=[-50,0,-50]^\top, \\
	{\bf a}_7&=[-50,-50,50]^\top,~{\bf a}_8=[-50,50,-50]^\top, \\
	{\bf a}_9&=[-50,0,0]^\top,~{\bf a}_{10}=[-50,-50,-50]^\top.
    \end{split}
\end{equation}
Each sensor makes total $T$ rounds of i.i.d. observations, and with the increase of $T$, the range-difference measurements can be large enough. The coordinates of the source is ${\bf x}^o=[52,52,52]^\top$, and the standard deviation of noises is $\sigma=5$.     
The measurement number $T$ of each sensor is set to $1$, $3$, $10$, $30$, $100$, and $300$ respectively, and for each $T$, we run $1000$ Monte Carlo tests to evaluate biases and RMSEs. 
Now we verify the assumptions proposed in Section~\ref{ML_estimator}. The Gaussian noise assumption (Assumption~\ref{assumption_1}) and the compact and bounded assumption (Assumption~\ref{compact_and_bounded}) are straightforward from this setting. Note that there are $10$ sensors, and each sensor makes $T$ measurements. It is equivalent to deploying $T$ sensors at each of the $10$ positions and each sensor making one measurement. Then, as $T$ increases, the sample distribution of $10T$ sensors converges to the distribution whose probability measure is $\mu({\bf a})=\frac{1}{10}$ for $\bf a$ that belongs to the set of the $10$ positions in~\eqref{10_fixed_sensors}, and $\mu({\bf a})=0$, otherwise. Hence, Assumption~\ref{convergence_of_sample_distribution} holds. To verify Assumption~\ref{deployment_of_sensor}, we adopt the algebraic method described below this assumption in Section~\ref{ML_estimator}. In this setting, we have $\mathbb E_{\mu} [{\bf v}({\bf a}) {\bf v}({\bf a})^\top]=\sum_{i=1}^{10} {\bf v}({\bf a}_i) {\bf v}({\bf a}_i)^\top/10$, where ${\bf a}_i$'s are listed in~\eqref{10_fixed_sensors}. One can verify that the rank of the matrix $\mathbb E_{\mu} [{\bf v}({\bf a}) {\bf v}({\bf a})^\top]$ is $10$, i.e.,  Assumption~\ref{deployment_of_sensor} holds, which implies that Assumption~\ref{deployment_of_sensor2} also holds. Therefore, all of the required assumptions have been verified. 

Next, we present the simulation results.
The biases under varying $T$ are presented in Fig.~\ref{bias_vs_T}. 
Same as before, our Bias-Eli estimator is asymptotically unbiased.
For the CLS and GTRS-MPR estimators, their biases converge to nonzero values, which makes them not consistent (as shown in Fig.~\ref{rmse}). It is noteworthy that the bias of the BiasRed estimator seems to diverge as $T$ increases. 
The reason may be that a small intensity of noises is assumed in~\cite{ho2012bias}, and the bias-reduced solution is obtained by ignoring the second-order noise terms. In our setting, the noises are not necessarily small enough to apply the result in~\cite{ho2012bias}. The approximation error by ignoring the second-order noise terms may be unstable and increase with $m$. As a result, the bias of the BiasRed estimator increases with $m$.


\begin{figure*}[!t]
	\centering
	\begin{minipage}[b]{.48\textwidth}
		\centering
		\includegraphics[width=0.96\textwidth]{./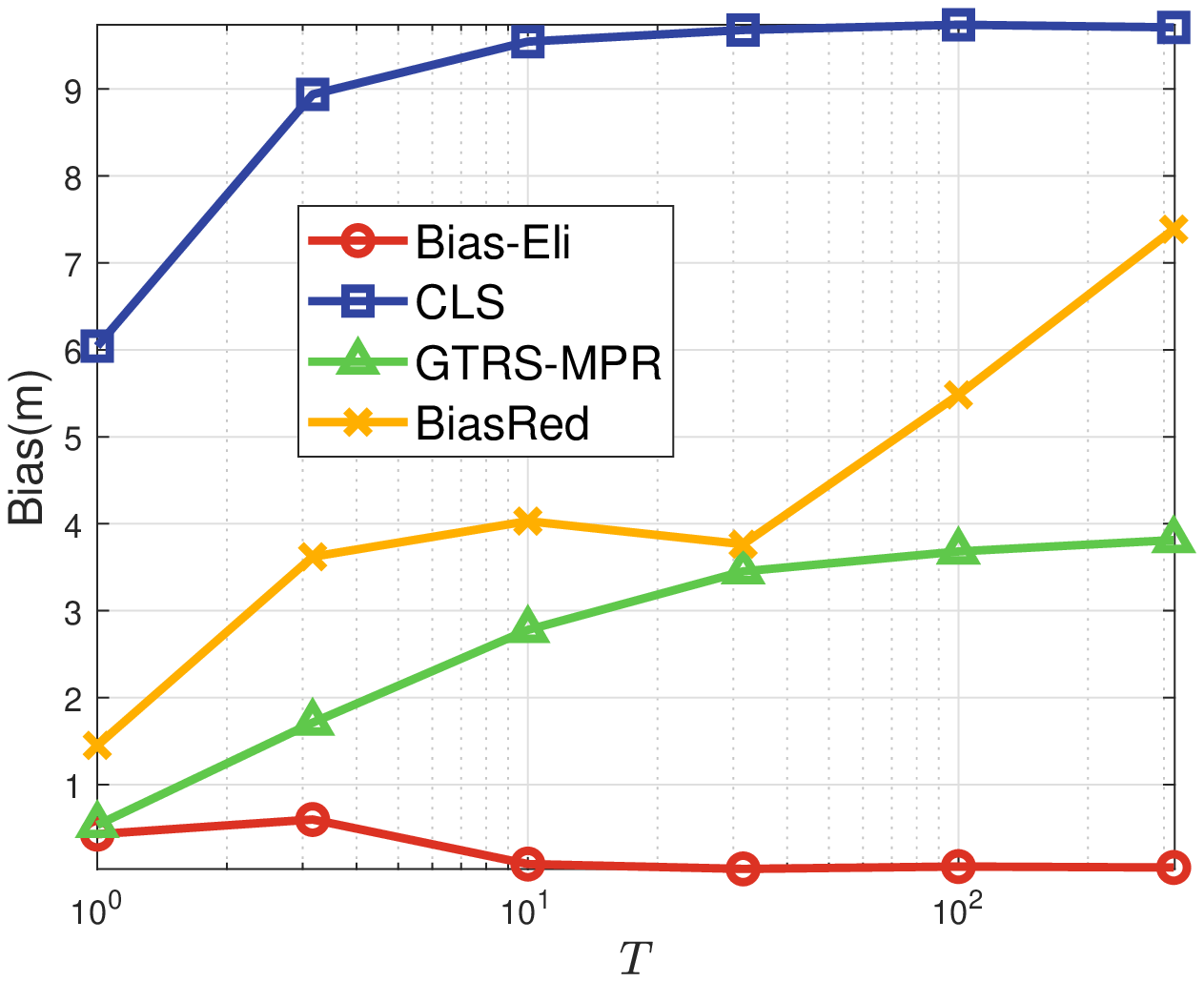}
		\caption{Bias comparison in fixed-sensor case.}
		\label{bias_vs_T}
	\end{minipage}\qquad
	\begin{minipage}[b]{.48\textwidth}
		\centering
		\includegraphics[width=0.96\textwidth]{./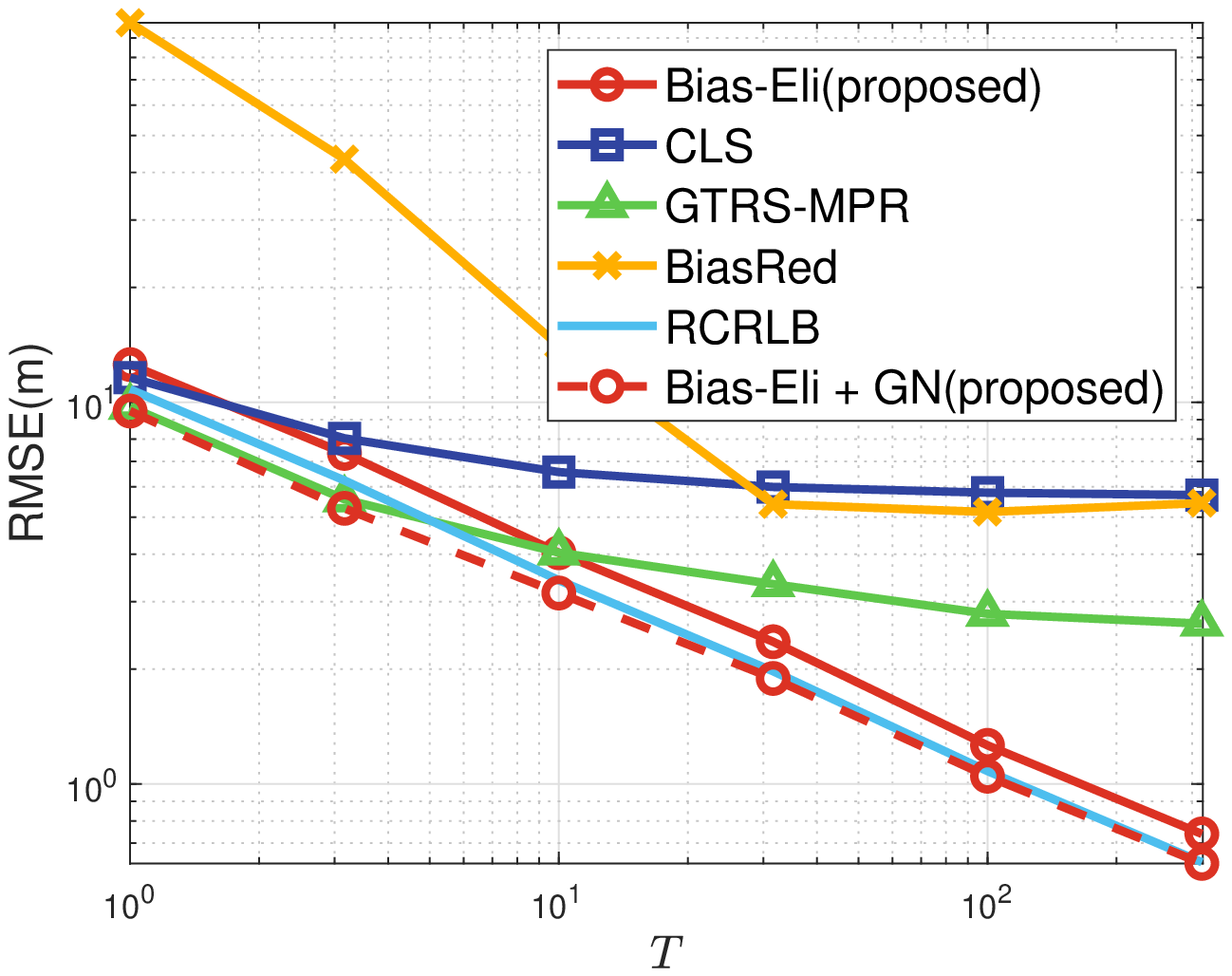}
		\caption{RMSE comparison in fixed-sensor case.}
		\label{rmse}
	\end{minipage}
 \end{figure*}

The RMSEs of the compared estimators under varying $T$ are plotted in Fig.~\ref{rmse}. Our first-step estimate Bias-Eli is consistent, and an additional one-step GN iteration can asymptotically achieve the RCRLB. However, the compared algorithms are not consistent. As discussed above, they are biased even in the asymptotic case, making their RMSEs converge to a nonzero value. It is worth noting that the proposed two-step estimate and the GTRS-MPR solution outperform the RCRLB in the small $T$ region. This is because the estimators are biased in the finite sample case and their variances are not necessarily greater than the CRLB. Besides, we note that the BiasRed estimator, which is based on the assumption of small noise intensity, seems not very stable in this setting, owning a relatively large error.

\begin{figure*}[!t]
	\begin{subfigure}[b]{.48\textwidth}
	\centering
	\includegraphics[width=0.96\textwidth]{./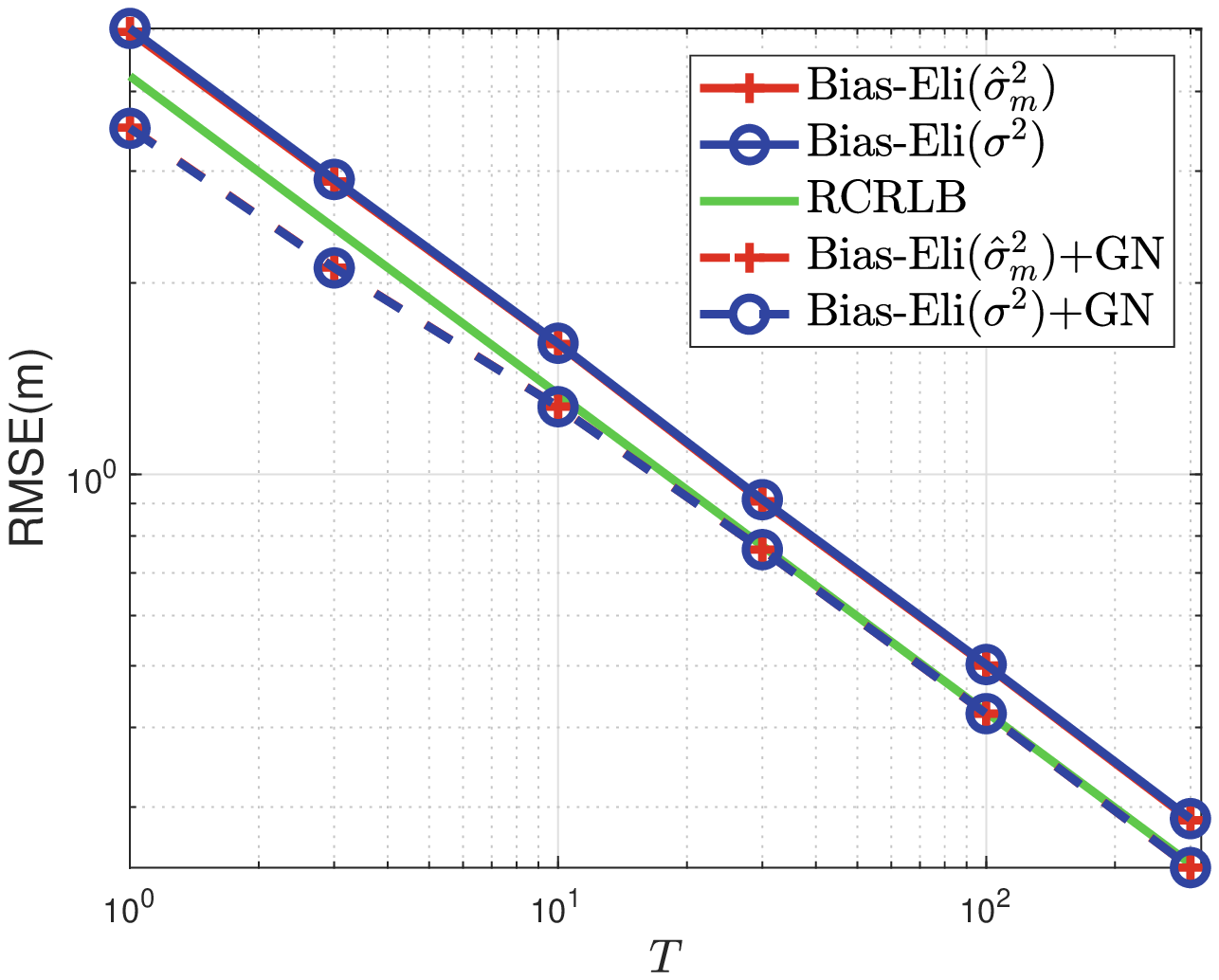}
	\caption{$\sigma=2$}
	\label{NE_vs_BE_varied_m}
\end{subfigure}
\begin{subfigure}[b]{.48\textwidth}
	\centering
	\includegraphics[width=0.96\textwidth]{./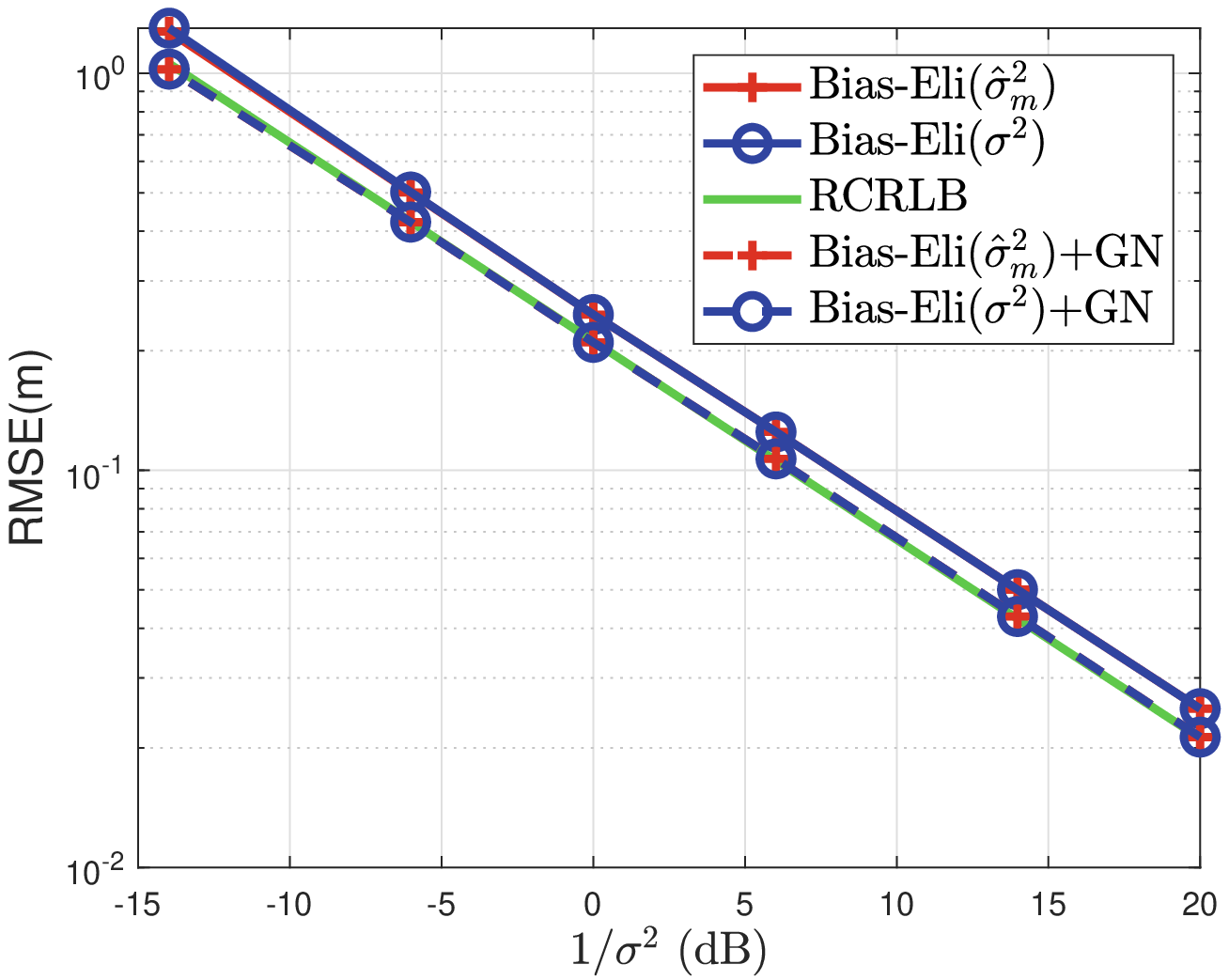}
	\caption{$T=100$}
	\label{NE_vs_BE_varied_sigma}
\end{subfigure}
\caption{Comparison with Bias-Eli estimator using the true noise variance.}
\label{NE_vs_BE}
\end{figure*}

Next, we set ${\bf x}^o=[51,51,51]^\top$ and compare the asymptotic RMSEs under varying noise intensities in the large sample case. We let $T=100$, and $\sigma$ is set as $0.1$, $0.2$, $0.5$, $1$, $2$, and $5$ respectively. For each choice of $\sigma$, $1000$ Monte Carlo tests are executed to evaluate RMSEs. Besides our algorithm, we also perform a one-step GN iteration for the compared algorithms. The result is presented in Fig.~\ref{asymptotic_rmse_vs_noise} where we take the x-axis as $10\log_{10}(1/\sigma^2)$. In the small noise region, all estimators can achieve the RCRLB. This phenomenon arises from the fact that when the intensity of noises is small, the bias of these estimators is negligible, and the RMSE is dominated by their covariance. With the increase in noise intensity, the bias plays a more important role. As a result, the RMSE of the compared algorithms which do not appropriately eliminate the bias deviates from the RCRLB in the large noise region. However, owing to elaborate bias elimination, our estimator achieves the RCRLB consistently irrespective of the noise intensity. 

\begin{figure}[!htbp]
	\centering
	\includegraphics[width=0.48\textwidth]{./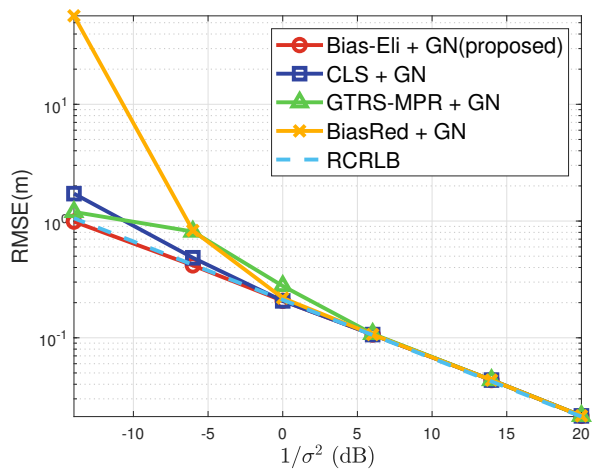}
	\caption{Asymptotic RMSE comparison under varying noise intensities.}
	\label{asymptotic_rmse_vs_noise}
\end{figure}

Till now, we have used the estimated noise variance $\hat \sigma^2_m$ in our proposed Bias-Eli estimator. Here, we compare it (denoted as Bias-Eli($\hat \sigma^2_m$) in Fig.~\ref{NE_vs_BE}) with the Bias-Eli solution utilizing the true noise variance $\sigma^2$ (denoted as Bias-Eli($\sigma^2$) in Fig.~\ref{NE_vs_BE}). Since the estimator Bias-Eli($\sigma^2$) does not need to estimate the noise variance, it could have smaller RMSEs than Bias-Eli($\hat \sigma^2_m$). However, it is not the case in our simulation as shown in Fig.~\ref{NE_vs_BE}, where Fig.~\ref{NE_vs_BE_varied_m} and~\ref{NE_vs_BE_varied_sigma} present the RMSE under varying measurement numbers ($\sigma=2$) and varying noise intensities ($T=100$), respectively. In both cases, the two estimators have negligible differences, which shows the superiority of the proposed noise variance estimation algorithm. 

\begin{table}[htbp]
	\centering
	\caption{CPU Time comparison among different algorithms. Each algorithm is executed $1000$ times to compute the average CPU time. The unit adopted is seconds.} \label{time_comparison}
	\resizebox{0.62\columnwidth}{!}{%
  \begin{tabular}{c c c c c c c}
			\hline\hline
			& $T=1$ & $T=3$ & $T=10$ & $T=30$ & $T=100$ & $T=300$ \\
			\hline
			CLS & 0.2903  &  0.2910  &  0.2924  &  0.2923  &  0.2956  &  0.3056 \\
			\hline
			GTRS-MPR & 0.00047  &  0.00052  &  0.0013  &  0.0065  &  0.1851 &  5.1265 \\
			\hline
			BiasRed & 0.00018  &  {\bf 0.00024}  &  0.0010  &  0.0104  &  0.3858 & 13.0656 \\
			\hline
			{\bf Bias-Eli+GN} & {\bf 0.00015}  &  0.00025  &  {\bf 0.00063}  &  {\bf 0.0016}  &  {\bf 0.0037}  &  {\bf 0.0102} \\
			\hline
		\end{tabular}
	}
\end{table}

Last, we compare the CPU time of different algorithms. All algorithms are executed in Matlab codes, and the CPU type is Intel Core i7-10700. For the compared three estimators, we directly use the open-source codes provided by the authors. Each algorithm is executed $1000$ times to compute the average CPU time, and the result is listed in Table~\ref{time_comparison}. We see that the BiasRed estimator costs the least or second least time when $T$ is small. However, as $T$ increases, its CPU time grows dramatically, and it becomes the most time-consuming one when $T$ exceeds $100$. The CPU time of our proposed algorithm is constantly the least, except for the case of $T=3$ being the second least. In addition, it increases linearly with respect to $T$, coinciding with the theoretical analysis. In the case of $T=300$, i.e., $m=3000$, our algorithm can still achieve a rate of $100$ Hz and is suitable for real-time applications. Similar to the BiasRed method, the CPU time of the GTRS-MPR algorithm also exhibits a nonlinear increasing trend. The CPU time of the CLS algorithm seems to be constant. This is because it involves solving two linear matrix inequalities, which are realized using a CVX toolbox. The calling of the CVX toolbox is heavily time-consuming and dominates the CPU time. In summary, our algorithm has an advantage in time complexity and is especially desirable in the large sample case.

\section{Conclusion}
In this paper, we have proposed a consistent and asymptotically efficient localization estimator based on range-difference measurements. The existence of such an estimator is guaranteed by some readily-checked conditions on measurement noises and sensor deployment. By noting the convergence of the negative log-likelihood function, we first obtained a $\sqrt{m}$-consistent solution and then applied GN iterations to refine it. Specifically, the $\sqrt{m}$-consistent solution is calculated via bias elimination. By solving a $3$-order polynomial, we obtained a consistent estimate of noise variance, which forms the foundation of a consistent bias estimate. Following the preliminary consistent solution, we showed from both theoretical and experimental aspects that a one-step GN iteration would suffice to attain asymptotic efficiency. In addition, the proposed algorithm has $O(m)$ time complexity and costs much less time than the compared algorithms in the large sample case. 

\appendices

\section{Proof of Theorem~\ref{consistent_noise_est}} \label{proof_consistent_noise_est}
Let ${\bf C} (z)={\bf Q}-{\bf S} (z)$. The value $\lambda_{\rm max}({\bf Q}^{-1} {\bf S}(z))$ largely depends on the property of ${\bf C}(z)$, which is summarized in the following lemma.
\begin{lemma} \label{lemma_largest_eig}
	The value $\lambda_{\rm max}({\bf Q}^{-1} {\bf S}(z))$ depends on the eigenvalues of ${\bf C}(z)$. Specifically, 
	\begin{enumerate}
		\item [$(i).$] If ${\bf C}(z)$ is positive definite, then $\lambda_{\rm max}( {\bf Q}^{-1} {\bf S}(z))<1$.
		\item [$(ii).$] If ${\bf C}(z)$ is indefinite, then $\lambda_{\rm max}({\bf Q}^{-1} {\bf S}(z))>1$.
		\item [$(iii).$] If ${\bf C}(z)$ is positive semi-definite, then $\lambda_{\rm max}( {\bf Q}^{-1} {\bf S}(z))=1$.
	\end{enumerate}
\end{lemma}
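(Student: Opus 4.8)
The plan is to reduce all three claims to the generalized Rayleigh quotient characterization of $\lambda_{\rm max}({\bf Q}^{-1}{\bf S}(z))$ and then read off the sign of $\lambda_{\rm max}-1$ from the definiteness of ${\bf C}(z)={\bf Q}-{\bf S}(z)$. First I would record that, because ${\bf Q}$ is positive definite for $m\geq m_0$ (as established just before Theorem~\ref{consistent_noise_est}), the matrix ${\bf Q}^{-1}{\bf S}(z)$ has a real spectrum: taking a Cholesky factorization ${\bf Q}={\bf R}^\top {\bf R}$, the matrix ${\bf Q}^{-1}{\bf S}(z)$ is similar to the symmetric matrix ${\bf R}^{-\top}{\bf S}(z){\bf R}^{-1}$, so $\lambda_{\rm max}({\bf Q}^{-1}{\bf S}(z))$ is well defined and real.

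Next I would establish the central identity by applying the ordinary Rayleigh quotient to ${\bf R}^{-\top}{\bf S}(z){\bf R}^{-1}$ and substituting ${\bf w}={\bf R}{\bf v}$, which yields
\begin{equation*}
	\lambda_{\rm max}({\bf Q}^{-1}{\bf S}(z))=\max_{{\bf v}\neq 0}\frac{{\bf v}^\top {\bf S}(z){\bf v}}{{\bf v}^\top {\bf Q}{\bf v}}.
\end{equation*}
Since ${\bf Q}\succ 0$ the denominator is strictly positive, and restricting to the compact ellipsoid $\{{\bf v}:{\bf v}^\top {\bf Q}{\bf v}=1\}$ shows the maximum is attained. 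Subtracting $1$ and using ${\bf S}(z)-{\bf Q}=-{\bf C}(z)$ then gives
\begin{equation*}
	\lambda_{\rm max}({\bf Q}^{-1}{\bf S}(z))-1=\max_{{\bf v}\neq 0}\frac{-\,{\bf v}^\top {\bf C}(z){\bf v}}{{\bf v}^\top {\bf Q}{\bf v}},
\end{equation*}
which ties the sign of $\lambda_{\rm max}-1$ directly to the behavior of the quadratic form ${\bf v}^\top {\bf C}(z){\bf v}$.

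From this identity the three cases follow almost immediately. For $(i)$, if ${\bf C}(z)\succ 0$ then ${\bf v}^\top {\bf C}(z){\bf v}>0$ for every ${\bf v}\neq 0$, so the attained maximum is strictly negative and $\lambda_{\rm max}<1$. For $(ii)$, if ${\bf C}(z)$ is indefinite there is some ${\bf v}_0\neq 0$ with ${\bf v}_0^\top {\bf C}(z){\bf v}_0<0$, making the quotient strictly positive there, so the maximum exceeds $0$ and $\lambda_{\rm max}>1$. For $(iii)$, if ${\bf C}(z)$ is positive semidefinite and singular then ${\bf v}^\top {\bf C}(z){\bf v}\geq 0$ for all ${\bf v}$, so every quotient is $\leq 0$, while the value $0$ is achieved at a nonzero vector in the kernel of ${\bf C}(z)$; hence the maximum equals $0$ and $\lambda_{\rm max}=1$.

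I expect no serious obstacle, since the whole argument is the standard congruence reduction of a symmetric pencil whose second factor ${\bf Q}$ is positive definite. The only points requiring a little care are the realness of the spectrum and the attainment of the maximum (both forced by ${\bf Q}\succ 0$), and the boundary case $(iii)$, where I must exhibit a genuine null vector of ${\bf C}(z)$ to certify that the maximum value $0$ is achieved rather than merely approached --- precisely what singularity of a positive semidefinite matrix guarantees.
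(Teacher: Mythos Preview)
Your argument is correct and is essentially the same as the paper's: both exploit that ${\bf Q}^{-1}{\bf S}(z)$ is similar to a symmetric matrix (you via a Cholesky factor ${\bf R}$, the paper via ${\bf Q}^{1/2}$) and then compare the quadratic forms ${\bf v}^\top{\bf S}(z){\bf v}$ and ${\bf v}^\top{\bf Q}{\bf v}$ using ${\bf C}(z)={\bf Q}-{\bf S}(z)$. The only cosmetic difference is that you package the comparison as the single Rayleigh-quotient identity $\lambda_{\rm max}-1=\max_{{\bf v}\neq 0}\bigl(-{\bf v}^\top{\bf C}(z){\bf v}\bigr)/\bigl({\bf v}^\top{\bf Q}{\bf v}\bigr)$ and read off the sign directly, whereas the paper handles each case by a short contradiction argument on an eigenvector of ${\bf Q}^{-1/2}{\bf S}(z){\bf Q}^{-1/2}$; both treatments of case~$(iii)$ invoke the same null vector of ${\bf C}(z)$.
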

\begin{proof}
	First, when ${\bf C}(z)$ is positive definite, suppose $\lambda_{\rm max}({\bf Q}^{-1} {\bf S}(z))\geq 1$, i.e., $\lambda_{\rm max}({\bf Q}^{-\frac{1}{2}} {\bf S}(z) {\bf Q}^{-\frac{1}{2}})\geq 1$. Let ${\bf v}={\bf Q}^{\frac{1}{2}} {\bf y}$ be an eigenvector associated with $\lambda_{\rm max}({\bf Q}^{-\frac{1}{2}} {\bf S}(z) {\bf Q}^{-\frac{1}{2}})$. Since ${\bf Q} ={\bf C} (z)+{\bf S} (z)$, we have 
	\begin{align*}
		\|{\bf v}\|^2 & ={\bf v}^\top {\bf Q}^{-\frac{1}{2}} {\bf C}(z)  {\bf Q}^{-\frac{1}{2}} {\bf v} + {\bf v}^\top {\bf Q}^{-\frac{1}{2}} {\bf S}(z) {\bf Q}^{-\frac{1}{2}} {\bf v} \\
		& \geq {\bf y}^\top {\bf C}(z) {\bf y}+ \|{\bf v}\|^2 \\
		& > \|{\bf v}\|^2,
	\end{align*}
	which leads to a contradiction. Thus, $\lambda_{\rm max}( {\bf Q}^{-1}  {\bf S}(z))<1$.
	
	Secondly, when $ {\bf C}(z)$ is indefinite, suppose $\lambda_{\rm max}( {\bf Q}^{-1}  {\bf S}(z)) \leq 1$. There exists a ${\bf y} \neq 0$ such that ${\bf y}^\top  {\bf C}(z) {\bf y}<0$. Let ${\bf v}= {\bf Q}^{\frac{1}{2}} {\bf y}$, as a result, 
	\begin{align*}
		\|{\bf v}\|^2 & ={\bf y}^\top  {\bf C}(z) {\bf y} + {\bf v}^\top  {\bf Q}^{-\frac{1}{2}}  {\bf S}(z)  {\bf Q}^{-\frac{1}{2}} {\bf v} \\
		& < \|{\bf v}\|^2,
	\end{align*}
	which leads to a contradiction. Thus, $\lambda_{\rm max}( {\bf Q}^{-1}  {\bf S}(z))>1$.
	
	Finally, when $ {\bf C}(z)$ is positive semi-definite and singular, by using a similar argument with that of the positive definite case, we obtain  $\lambda_{\rm max}( {\bf Q}^{-1}  {\bf S}(z))\leq 1$. Further, there exists a ${\bf y} \neq 0$ such that $ {\bf C}(z) {\bf y}=0$. Noting that $ {\bf C} (z)= {\bf Q}- {\bf S} (z)$, we have $ {\bf Q}^{-1}  {\bf S}(z) {\bf y}={\bf y}$, which implies $\lambda_{\rm max}( {\bf Q}^{-1}  {\bf S}(z)) = 1$ and completes the proof.       
\end{proof}

We first consider the asymptotic case. Let ${\bf Q}_{\infty}=\lim_{m \rightarrow \infty} {\bf Q}$, ${\bf S}_{\infty}(z)=\lim_{m \rightarrow \infty} {\bf S}(z)$, and ${\bf C}_{\infty}(z)=\lim_{m \rightarrow \infty} {\bf C}(z)$. We have the following lemma. 
\begin{lemma} \label{asymptotic_minimum_solu}
	It holds that  ${\rm min}\{z \mid \lambda_{\rm max}( {\bf Q}_{\infty}^{-1}  {\bf S}_{\infty}(z))=1\}=\sigma^2$.
\end{lemma}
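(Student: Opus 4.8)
The plan is to reduce the eigenvalue equation to a definiteness statement about ${\bf C}_\infty(z)={\bf Q}_\infty-{\bf S}_\infty(z)$ and then to analyze the single quadratic form ${\bf y}^\top{\bf C}_\infty(z){\bf y}$ as $z$ varies. Because the limiting matrices still satisfy ${\bf Q}_\infty={\bf C}_\infty(z)+{\bf S}_\infty(z)$ with ${\bf Q}_\infty$ positive definite (the limit of the positive-definite ${\bf Q}$), Lemma~\ref{lemma_largest_eig} applies verbatim to the asymptotic matrices, so $\lambda_{\rm max}({\bf Q}_\infty^{-1}{\bf S}_\infty(z))=1$ is equivalent to ${\bf C}_\infty(z)$ being positive semi-definite and singular. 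It therefore suffices to establish two things: (a) ${\bf C}_\infty(\sigma^2)$ is positive semi-definite and singular, so that $\sigma^2$ lies in the set; and (b) ${\bf C}_\infty(z)$ is positive definite for every $z<\sigma^2$, so that no smaller value qualifies. Together they give that the smallest $z$ with $\lambda_{\rm max}=1$ is exactly $\sigma^2$.

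The heart of the argument is an explicit formula for the quadratic form. Writing a generic vector as ${\bf y}=[{\bf y}_1^\top,y_2,y_3,y_4]^\top$ with ${\bf y}_1\in\mathbb R^n$, and a generic row of $\tilde{\bf A}$ as ${\bf r}=[-2{\bf a}^\top,1,-2d,d^2-\|{\bf a}\|^2]$ with $d=d^o+r$ and $d^o=\|{\bf a}-{\bf x}^o\|-\|{\bf x}^o\|$, I would split ${\bf r}{\bf y}=h+\alpha r+y_4 r^2$, where $h:=-2{\bf a}^\top{\bf y}_1+y_2-2d^o y_3+((d^o)^2-\|{\bf a}\|^2)y_4$ is noise-free and $\alpha:=-2y_3+2d^o y_4$. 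Evaluating ${\bf y}^\top{\bf Q}_\infty{\bf y}=\mathbb E[({\bf r}{\bf y})^2]$ with the Gaussian moments $\mathbb E[r]=\mathbb E[r^3]=0$, $\mathbb E[r^2]=\sigma^2$, $\mathbb E[r^4]=3\sigma^4$ gives ${\bf y}^\top{\bf Q}_\infty{\bf y}=\mathbb E_\mu[h^2]+2\sigma^2 y_4\mathbb E_\mu[h]+\sigma^2\mathbb E_\mu[\alpha^2]+3\sigma^4 y_4^2$, while a direct evaluation of ${\bf y}^\top{\bf S}_\infty(z){\bf y}$ using $\bar d\to\mathbb E_\mu[d^o]$ and $\overbar{d^2}\to\mathbb E_\mu[(d^o)^2]+\sigma^2$ gives ${\bf y}^\top{\bf S}_\infty(z){\bf y}=z\,\mathbb E_\mu[\alpha^2]+(4\sigma^2 z-2z^2)y_4^2$. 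Subtracting and using the identity $3\sigma^4-4\sigma^2 z+2z^2=\sigma^4+2(\sigma^2-z)^2$ yields the key decomposition
\[
{\bf y}^\top{\bf C}_\infty(z){\bf y}={\rm Var}_\mu(h)+\big(\mathbb E_\mu[h]+\sigma^2 y_4\big)^2+(\sigma^2-z)\,\mathbb E_\mu[\alpha^2]+2(\sigma^2-z)^2 y_4^2,
\]
in which every term is nonnegative whenever $z\le\sigma^2$.

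Given this identity, both claims are short. For (a), at $z=\sigma^2$ the last two terms vanish and the first two are nonnegative, so ${\bf C}_\infty(\sigma^2)\succeq0$; to see it is singular I would exhibit the explicit null vector ${\bf w}=[-{\bf x}^{o\top},-\sigma^2,-\|{\bf x}^o\|,1]^\top$, for which the relation $(d^o+\|{\bf x}^o\|)^2=\|{\bf a}-{\bf x}^o\|^2$ collapses $h$ to the constant $-\sigma^2$, so that ${\rm Var}_\mu(h)=0$ and $\mathbb E_\mu[h]+\sigma^2=0$ make the whole form zero while ${\bf w}\neq0$. For (b), take $z<\sigma^2$; all four terms are nonnegative, and the form can vanish only if each does. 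The last term forces $y_4=0$, then $(\sigma^2-z)\mathbb E_\mu[\alpha^2]=0$ forces $y_3=0$, after which $h$ reduces to $-2{\bf a}^\top{\bf y}_1+y_2$ and ${\rm Var}_\mu(h)+(\mathbb E_\mu[h])^2=\mathbb E_\mu[h^2]=0$ forces $h\equiv0$ $\mu$-almost surely. If ${\bf y}_1\neq0$ this would concentrate $\mu$ on a hyperplane, contradicting Assumption~\ref{deployment_of_sensor2}(ii); hence ${\bf y}_1=0$ and then $y_2=0$, i.e. ${\bf y}=0$. Thus ${\bf C}_\infty(z)\succ0$ for $z<\sigma^2$, completing both parts.

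The main obstacle is the quadratic-form computation itself: correctly propagating the fourth-order Gaussian moments through ${\bf y}^\top{\bf Q}_\infty{\bf y}$ and spotting the algebraic identity that collapses the difference ${\bf Q}_\infty-{\bf S}_\infty(z)$ into four manifestly nonnegative pieces. Once the decomposition is in hand, identifying the null vector (and verifying that $h$ is constant there) and running the degeneracy argument that invokes Assumption~\ref{deployment_of_sensor2}(ii) are routine, and the location of the minimum at $\sigma^2$ becomes transparent.
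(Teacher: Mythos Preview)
Your argument is correct and takes a genuinely different route from the paper's. Both proofs begin by invoking Lemma~\ref{lemma_largest_eig} to reduce the eigenvalue condition to a statement about the definiteness of ${\bf C}_\infty(z)$, but they diverge thereafter. The paper establishes singularity of ${\bf C}_\infty(\sigma^2)$ by identifying it with $\lim_m \tilde{\bf A}^{o\top}\tilde{\bf A}^o/m$ and noting that the last column of $\tilde{\bf A}^o$ is a linear combination of the others; for $z<\sigma^2$ it works at the matrix level, showing ${\bf S}_\infty(\sigma^2)-{\bf S}_\infty(z)\succeq k\,{\bf S}_\infty(\sigma^2)$ for small $k>0$ via a $2\times 2$ determinant calculation, and then writes ${\bf C}_\infty(z)\succeq(1-k){\bf C}_\infty(\sigma^2)+k{\bf Q}_\infty\succ0$. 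Your approach instead expands the scalar quadratic form ${\bf y}^\top{\bf C}_\infty(z){\bf y}$ using Gaussian moments up to fourth order and obtains an explicit sum-of-squares decomposition, from which both positive semi-definiteness at $z=\sigma^2$ (with the null vector exhibited explicitly) and strict positive definiteness for $z<\sigma^2$ (via Assumption~\ref{deployment_of_sensor2}(ii)) follow transparently. Your route is more computational but yields a fully explicit decomposition and identifies the kernel; the paper's route avoids the fourth-moment calculation and works directly with the block structure of ${\bf S}_\infty$, leaning only on ${\bf Q}_\infty\succ0$.
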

\begin{proof}
	Based on Lemma~\ref{property_of_bounded_variance}, we have
	\begin{equation*}
		\frac{{\tilde {\bf A}}^\top \tilde {\bf A}}{m}= \frac{\tilde {\bf A}^{o \top} \tilde {\bf A}^o}{m} + {\bf S}(\sigma^2) + O_p \left(\frac{1}{\sqrt{m}}\right), 
	\end{equation*}
	where 
	\begin{equation*}
		\tilde {\bf A}^o=\begin{bmatrix}
			-2{\bf a}_1^\top & 1 & -2d_1^o  & -2{\bf a}_1^\top {\bf x}^o+\sigma^2-2d_1^o \|{\bf x}^o\| \\
			\vdots      & \vdots & \vdots & \vdots\\
			-2{\bf a}_m^\top & 1 & -2d_m^o & -2{\bf a}_m^\top {\bf x}^o+\sigma^2-2d_m^o \|{\bf x}^o\|
		\end{bmatrix}
	\end{equation*}
	is the noise-free counterpart of $\tilde {\bf A}$. Since the last column of $\tilde {\bf A}^o$ is a linear combination of the former columns, $\tilde {\bf A}^{o \top} \tilde {\bf A}^o/m$ is singular, so is the asymptotic case, i.e., ${\bf C}_{\infty}(\sigma^2)={\bf Q}_{\infty} -{\bf S}_{\infty}(\sigma^2)=\lim_{m \rightarrow \infty} \tilde {\bf A}^{o \top} \tilde {\bf A}^o/m$ is singular. 
	Hence, $\lambda_{\rm max}({\bf Q}_{\infty}^{-1}{\bf S}_{\infty}(\sigma^2))=1$ based on Lemma~\ref{lemma_largest_eig}. For $z<\sigma^2$, 
	\begin{align*}
		{\bf C}_{\infty}(z) & = {\bf Q}_{\infty} - {\bf S}_{\infty}(z) \\
		& = {\bf C}_{\infty}(\sigma^2)+{\bf S}_{\infty}(\sigma^2)-{\bf S}_{\infty}(z) \\
		& = {\bf C}_{\infty}(\sigma^2) + (\sigma^2-z) \begin{bmatrix}
			{\bf 0}_{(n+1) \times (n+1)} & {\bf 0}_{(n+1) \times 2} \\
			{\bf 0}_{2 \times (n+1)}  & {\bf R}(\sigma^2+z) 
		\end{bmatrix},
	\end{align*}
	in which ${\bf R}(z)=\begin{bmatrix}
	4 & -4 \overbar {d^o} \\
	-4 \overbar {d^o} & 4 (\overbar {{d^o}^2}+\sigma^2) -2z
	\end{bmatrix}$, and 
	\begin{align*}
		{\rm det}({\bf R}(\sigma^2+z)) &  = 16 (\overbar{{d^o}^2}+\sigma^2)-8(\sigma^2+z)-16 {\overbar {d^o}}^2 \\
		& = 16 (\overbar{{d^o}^2}-{\overbar {d^o}}^2)+8(\sigma^2-z) \\
		& >0,
	\end{align*}
	where $\overbar {d^o}=\sum_{i=1}^{m} d^o_i/m$ and $\overbar {{d^o}^2}=\sum_{i=1}^{m} {d^o_i}^2/m$. Therefore, ${\bf R}(\sigma^2+z) \succ 0$ for $z<\sigma^2$, and ${\bf S}_{\infty}(\sigma^2)-{\bf S}_{\infty}(z) \succeq k {\bf S}_{\infty}(\sigma^2)$ for a small $k \in (0,1)$. In addition, 
	\begin{equation*}
		{\bf C}_{\infty}(\sigma^2)+k {\bf S}_{\infty}(\sigma^2) = (1-k) {\bf C}_{\infty}(\sigma^2) +k {\bf Q}_{\infty} \succ 0
	\end{equation*}
	for every $k \in (0,1)$. As a result, 
	\begin{align*}
		{\bf C}_{\infty}(z) & ={\bf C}_{\infty}(\sigma^2)+{\bf S}_{\infty}(\sigma^2)-{\bf S}_{\infty}(z) \\
		& \succeq {\bf C}_{\infty}(\sigma^2)+ k {\bf S}_{\infty}(\sigma^2) \\
		& \succ 0.
	\end{align*}
	That is, for $z<\sigma^2$, $\lambda_{\rm max}({\bf Q}_{\infty}^{-1}{\bf S}_{\infty}(z))<1$ (based on Lemma~\ref{lemma_largest_eig}), which implies $\sigma^2={\rm min}\{z \mid \lambda_{\rm max}( {\bf Q}_{\infty}^{-1}  {\bf S}_{\infty}(z))=1\}$. 
\end{proof}

In the finite-sample case, the two eigenvalues of ${\bf S}(z)$ associated with ${\bf S}_{22}(z)$ are
\begin{align*}
	\lambda_l & = -(z^2-2z-2\overbar{d^2}z) - \sqrt{\Delta}, \\
	\lambda_u & = -(z^2-2z-2\overbar{d^2}z) + \sqrt{\Delta},
\end{align*}
where $\Delta=(z^2-2z-2\overbar{d^2}z)^2+8z^3+16{\bar d}^{~2} z^2-16\overbar{d^2} z^2$.
As $z$ increases, $\lambda_l$ goes to negative infinity, while $\lambda_u$ tends to positive infinity. When $\lambda_u>\lambda_{\rm max}( {\bf Q})$, $ {\bf C} (z)= {\bf Q}- {\bf S} (z)$ is indefinite. Further combining the facts that $ {\bf C} (0)= {\bf Q}$ is positive definite and the eigenvalues of $ {\bf C} (z)$ are continuous with respect to $z$, there exists a $z_m>0$ such that $ {\bf C} (z_m) \succeq 0$. Thus, the set $\{z \mid \lambda_{\rm max}( {\bf Q}^{-1}  {\bf S}(z))=1\}$ is non-empty, and $\hat \sigma^2_m = {\rm min} \{z \mid \lambda_{\rm max}({\bf Q}^{-1} {\bf S}(z))=1 \}$ exists. 
Since $\bf Q$ and ${\bf S}(z)$ converge to ${\bf Q}_{\infty}$ and ${\bf S}_{\infty}(z)$ with the rate $O_p(1/\sqrt{m})$ based on Lemma~\ref{property_of_bounded_variance} and ${\rm min} \{z \mid \lambda_{\rm max}({\bf Q}^{-1} {\bf S}(z))=1 \}$ is a continuous function of $\bf Q$ and ${\bf S}(z)$, the estimate $\hat \sigma^2_m$ shares the same rate of convergence, i.e., $\hat \sigma^2_m-\sigma^2=O_p(1/\sqrt{m})$, which completes the proof.

%
%
%
%
%

\ifCLASSOPTIONcaptionsoff
  \newpage
\fi

\small
\bibliographystyle{IEEEtran}
\bibliography{sj_reference}
\end{document}